\definecolor{myurlcolor}{rgb}{0,0,0.9}
\theoremstyle{plain}
\newcommand*{\myproofname}{Proof}
\def\ot{\otimes}
\def\gauss{\textbf{\textit{Gauss}}}
\def\dgauss{\textbf{\textit{DGauss}}}
\DeclareMathAlphabet{\mathcal}{OMS}{cmsy}{m}{n}
\let\oldaddcontentsline\addcontentsline
\newcommand{\stoptocentries}{\renewcommand{\addcontentsline}[3]{}}
\newcommand{\starttocentries}{\let\addcontentsline\oldaddcontentsline}
\newtheorem{theorem}{Theorem}
\newtheorem{definition}{Definition}
\newtheorem{lemma}[theorem]{Lemma}
\newtheorem{proposition}[theorem]{Proposition}
\newtheorem{corollary}{Corollary}
\numberwithin{equation}{section}
\numberwithin{corollary}{section}
\numberwithin{definition}{section}
\numberwithin{theorem}{section}
\numberwithin{remark}{section}
\numberwithin{example}{section}
\newcommand{\ra}{\rangle}
\newcommand{\la}{\langle}
\newcommand{\df}{\dfrac}
\newcommand{\pd}[1]{\partial_{#1}} 
\newcommand{\R}{\mathbb R}
\newcommand{\mrm}{\mathrm}
\newcommand{\mfk}{\mathfrak}
\newcommand{\tr}{\mathrm{Tr}}
\newcommand{\mca}{\mathcal}
\newcommand{\Hi}{\mca H}
\newcommand{\G}{\mca G}
\newcommand{\C}{\mathbb C}
\newcommand{\Pf}{\mathrm{Pf}}
\newcommand{\Cl}{\mca C}
\newcommand{\leqalign}[2]{
\begin{equation}\begin{aligned}\label{#1}
#2
\end{aligned}\end{equation}}
\newcommand{\malign}[1]{
\[\begin{aligned}
#1
\end{aligned}\] 
}
\begin{document}

\author{Xingjian Lyu}
\email{nicholaslyu@college.harvard.edu}
\affiliation{ Harvard University, Cambridge, Massachusetts 02138, USA}

\author{Kaifeng Bu}
\email{bu.115@osu.edu}
\affiliation{Ohio State University, Columbus, Ohio 43210, USA}
\affiliation{ Harvard University, Cambridge, Massachusetts 02138, USA}

\title{Displaced Fermionic Gaussian States and their Classical Simulation}

\begin{abstract} 
    This work explores displaced fermionic Gaussian operators 
    with nonzero linear terms. We first demonstrate equivalence between several characterizations 
    of displaced Gaussian states. We also provide an efficient classical simulation protocol 
    for displaced Gaussian circuits and demonstrate 
    their computational equivalence to circuits 
    composed of nearest-neighbor matchgates augmented by 
    single-qubit gates on the initial line. Finally, we 
    construct a novel Gaussianity-preserving unitary embedding 
    that maps $n$-qubit displaced Gaussian states to $(n+1)$-qubit even Gaussian states. 
    This embedding facilitates the generalization of existing Gaussian 
    testing protocols to displaced Gaussian states and unitaries. 
    Our results provide new tools to analyze fermionic systems 
    beyond the constraints of parity super-selection, extending 
    the theoretical understanding and practical simulation of 
    fermionic quantum computation. 
\end{abstract}

\maketitle
\section{Introduction}
\stoptocentries

Fermionic quantum computation is a promising model 
for implementing universal quantum computation. 
In this model, fermionic Gaussian states---corresponding 
to free fermions---play a central role. 
Fermionic Gaussian states arise in the study of 
a variety of systems, such as the 1-dimensional Ising model, 
or fermionic linear optics with beam splitters and 
phase shifters acting on non-interacting 
electrons \cite{divincenzo2005fermionic}. They are also 
foundational in computational chemistry, notably within the 
Hartree-Fock method for modeling molecular 
orbitals \cite{surace2022fermionic}. 

In addition, fermionic Gaussian circuits are closely related 
to matchgate circuits, one of the few known models 
enabling efficient classical simulation 
of quantum computations \cite{valiant2001quantum}. 
The relation between fermionic Gaussian circuits and matchgate 
circuits has been explored in~\cite{divincenzo2004fermionic,jozsa2008matchgates}, 
and subsequent works explored the classical simulation of these circuits 
using Grassmann integral methods, topics which were extensively studied in 
mathematics and field theory \cite{bravyi2004lagrangian, bravyi2019simulation, 
soper1978construction}.  
Beyond fermionic Gaussian circuits, the Gaussian framework extends 
to other classically simulable circuits, including 
continuous variable Gaussian circuits~\cite{weedbrook2012gaussian} 
and stabilizer circuits~\cite{beny2022gaussian}, also 
known as discrete quantum Gaussian circuits~\cite{BGJ23a,BGJ23b,bu2024extremality,BJ24a,BGJ24a,Bu19,bu2022classical}.

In fermionic systems, Gaussian unitaries are generated 
by quadratic Hamiltonians without linear terms, reflecting the 
dynamics of non-interacting fermions. A parity super-selection 
rule in physical fermionic systems constrains states to 
definite parity and excludes Hamiltonians with 
odd-degree terms \cite{cahill1999density, szalay2021fermionic}; 
this constraint is also linked to the symmetry of fermion number 
conservation, whose observable manifests as the parity operator. 
Consequently, the study of fermionic Gaussian 
computation has predominantly focused on even Gaussian 
operators, which also conveniently exhibit a more manageable mathematical 
structure due to commutativity properties. However, 
in representations of fermionic algebra on non-fermionic 
platforms, such as 1-dimensional spin chains or qubit systems 
via the Jordan-Wigner transformation, the parity super-selection 
rule is more a mathematical convenience than physical necessity. 
It limits the scope of Gaussian analysis by excluding 
certain operators and states; this impacts 
applications like fermionic convolution and variational 
characterizations of Gaussianity \cite{lyu2024fermionic}.

This work addresses the limitations of conventional fermionic 
Gaussian theory imposed by the parity super-selection rule. 
To broaden the applicability of Gaussian theory in fermionic 
systems, we focus on displaced Gaussian states and 
unitaries—those with nontrivial linear terms in addition to 
quadratic ones. A high-level mathematical reduction, connecting 
the Lie algebras of displaced and even Gaussian operators, 
was proposed in \cite{knill2001fermionic}, and a previous work has 
explored the channel capacities of 
displaced Gaussian states and channels \cite{bravyi2005classical}. 
However, a framework relating displaced Gaussian 
computation to the more extensively studied even 
Gaussian (or matchgate) computation remains undeveloped. 
This gap is significant given that even Gaussian 
computation has been deeply investigated in areas 
such as magic states, classical simulability, and resource 
theory \cite{hebenstreit2019all, hebenstreit2020computational, 
dias2024classical, cudby2023gaussian, reardon2023improved}. 
Throughout this work, we use “Gaussian” to refer specifically 
to zero-mean cases, reserving “displaced Gaussian” for 
the general non-zero mean scenarios.

This work extends the scope of fermionic Gaussian theory by 
examining displaced Gaussian operators and situating them within 
the broader landscape of even Gaussian operators. 
An initial challenge which motivated this study was the 
ambiguity in defining displaced Gaussian states 
across differing interpretations of Gaussianity. 
Definitions of even Gaussian states vary by context: 
computation-oriented works define them as the orbit of 
computational basis states under Gaussian 
unitaries \cite{cudby2023gaussian, dias2024classical}, 
while physics-motivated works identify them as the thermal states of 
quadratic Hamiltonians \cite{surace2022fermionic, bianchi2021page}. 
Yet another approach rooted in phase-space formulation 
identifies Gaussian states with the quadratic form of 
their Grassmann representations \cite{bravyi2004lagrangian, 
bravyi2005classical}. Extending these definitions to displaced 
Gaussian states raises consistency challenges. 
Additionally, from the perspective of classical simulation, 
despite the prior work on the Lie 
algebra embeddings suggesting the available efficient 
classical simulability of displaced Gaussian computation \cite{knill2001fermionic},
an explicit reduction from displaced to even Gaussian computation has 
yet to be fully articulated.

The main contributions of this work are as follows: 
first, we unify the displaced extensions of the 
aforementioned definitions of Gaussian states. 
Second, we introduce an efficient classical simulation algorithm 
for displaced Gaussian circuits and prove that displaced 
Gaussian computation is generated by nearest-neighbor 
matchgates augmented with single-qubit gates on the initial line. 
Finally, we construct a novel unitary embedding that maps 
displaced Gaussian states into even Gaussian states, 
providing a bridge between the study of even and displaced 
fermionic Gaussian operators. 
Leveraging this embedding, we extend recent 
findings \cite{lyu2024fermionic} to develop 
operational tests for identifying classically 
simulable displaced Gaussian states and unitaries.

Section~\ref{sec:prelim} introduces displaced Gaussian operators 
and provides a review of the Jordan-Wigner transformation, 
matchgates, and existing results for even fermionic Gaussian operators. 
Section~\ref{sec:dgTheory} examines the main properties of displaced 
Gaussian unitaries and states. Section~\ref{sec:dgSimulation}
presents a simulation protocol for displaced Gaussian circuits, 
highlighting the practical implications of our findings. 
Finally, Section~\ref{sec:dgTesting} details the construction 
of the even Gaussian embedding and illustrates its 
application in constructing tests for displaced Gaussian components. 

\section{Preliminaries}
\label{sec:prelim}
A system of $n$ fermionic modes is associated 
with $2n$ Hermitian, traceless and 
Majorana operators $\{\gamma_j\}_{j=1}^{2n}$ 
which generate a Clifford algebra $\Cl_{2n}$ according to 
the anticommutation relations 
\begin{eqnarray}
    \{\gamma_j, \gamma_k\} = 2\delta_{jk}I. 
\end{eqnarray}
They are related to the real and imaginary parts of the 
creation and annihilation operators by 
\begin{eqnarray}
    \gamma_{2j-1} = a_j + a_j^\dag, \quad 
    \gamma_{2j} = i(a_j - a_j^\dag). 
\end{eqnarray}
Using the Jordan-Wigner transform~\cite{jordan1993paulische}, 
the Majorana operators 
can also be identified with products of 
Pauli operators on the operator space $\mca H_n$ 
of $n$ qubits: 
\leqalign{eq:jwTransform}{ 
    \gamma_{2j-1} 
    &= Z^{\ot(j-1)} \ot X \ot I^{\ot (n-j)}, \\ 
    \gamma_{2j} 
    &= Z^{\ot (j-1)} \ot Y \ot I^{\ot (n-j)}. 
}
Since permuting the tensor products 
does not change the anticommutation relations, 
the representation remains valid 
if we permute the qubits $1, \dots, n$. 
We identify the \textit{initial line} of the circuit 
with the first subspace in equation~\ref{eq:jwTransform}, 
and two qubits are \textit{nearest neighbors} if 
they are adjacent in the tensor product. 

Given an ordered tuple (multi-index) $J\subset [2n]$, 
we denote the size of the tuple $J$ by $|J|$
and define $\gamma_J$ to be the ordered product 
$\gamma_J = \prod_{j\in J} \gamma_j$ indexed by $J$. 
The products $\{\gamma_J\}_{J\subset [2n]}$ form an 
orthonormal basis, so every operator 
$A\in \Cl_{2n}\cong \Hi_n$ has a Majorana expansion~\cite{jaffe2015reflection}: 
\leqalign{def:majoranaExpansionMain}{ 
    A = \df 1 {2^n} \sum_{J\subset [2n]} A_J \gamma_J, \quad 
    A_J = \tr(\gamma_J^\dag A)\in \C. 
}
The coefficients $\{A_J\}$ are also called the ``moments'' 
of $A$. It is also useful to introduce $2n$ self-adjoint Grassmann 
generators $\{\eta_j\}_{j=1}^{2n}$ which generate a 
Grassmann algebra by the anticommutation relation
$\{\eta_j, \eta_k\} = 0$. The Fourier transform 
$\Xi_A(\eta)\in \G_{2n}$ is computed by multiplying $2^n$ and 
substituting the generators in
 equation~\ref{def:majoranaExpansionMain}, 
with $\eta_J$ defined analogously: 
\begin{eqnarray}
    \Xi_A(\eta) = \sum_{J\subset [2n]} A_J \eta_J. 
\end{eqnarray}
We identify the mean and the covariance of an operator $A\in \Cl_{2n}$ 
with the first and second-order moments $\{A_j\}_{j=1}^{2n} = \{\mu(A)_j\}$, 
$\{A_{jk}\}_{j, k=1}^{2n} = \{\Sigma(A)_{jk}\}$. 
These define the antisymmetric \textit{extended covariance matrix} of $A$ 
\leqalign{def:extCovMatrixMain}{
    \tilde \Sigma(A) = \begin{bmatrix}
        \Sigma(A) & i\mu(A) \\ -i\mu(A)^T & 0 
    \end{bmatrix} \in \mfk{so}(2n+1, \C). 
}
Note that $\tilde \Sigma(\rho), \Sigma(\rho)$ are purely 
imaginary for a Hermitian state $\rho$. They are 
normalized so that $\Sigma(\rho \otimes I) = \Sigma(\rho) \oplus 0_{2\times 2}$. 
It is also convenient to define the raw extended covariance matrix 
\leqalign{def:rawExtCovMain}{
    \bar \Sigma(A) = \df 1 {2^n} \tilde \Sigma(A). 
}
Given a multi-index $J\subset [2n+1]$, 
we denote the $|J|\times |J|$ antisymmetric matrix 
restriction of $\tilde \Sigma(J)$ onto the subspaces indexed by $J$
according to  
\leqalign{def:subspaceRestrictionMain}{
    \left[
        \tilde \Sigma(\rho)_{|J}
    \right]_{ab} = \tilde \Sigma(\rho)_{J_aJ_b}. 
}

\begin{definition}[displaced Gaussian unitary]
    $U\in \Cl_{2n}$ is a displaced Gaussian unitary if 
    \leqalign{def:DGMain}{
    U = \exp \left(\frac 1 2 \gamma^T h \gamma + id^T\gamma\right)
    }
    where $h\in\mfk{so}(2n, \R)$ is real, antisymmetric 
    and $d\in \R^{2n}$. Here $U$ is a (even) Gaussian unitary if $d=0$. 
\end{definition}
We denote the group of even or displaced Gaussian unitaries 
on $n$ qubits by $G(n)$ or $DG(n) \subset \Cl_{2n}$, respectively. 
It is known that $U\in G(n)$ conjugates the Majorana operators by 
a rotation \cite[theorem 3]{jozsa2008matchgates}: 
\leqalign{thm:evenGeneratorRotMain}{
    U\gamma_j U^\dag = \sum_{k=1}^{2n} R_{kj}\gamma_k, \quad 
    R=e^{2h}. 
}
Here $\log U=iH$ is understood to 
be quadratic in the Majorana operators. 
Among the Gaussian unitaries, a special subset
which only act nontrivially on two nearest-neighbor lines 
are called \textit{nearest-neighbor (n.n.) matchgates}. 
Every Gaussian unitary on $n$ lines has a local circuit 
decomposition into $O(n^3)$ n.n matchgates~\cite[theorem 5]{jozsa2008matchgates}, 
each effecting rotations which act nontrivially on subspaces 
$4m-3, \dots, 4m$ or $4m-1, \dots, 4m+2$. 

The following definition can be found in~\cite{bravyi2005classical}. 
Observe that $iM = \Sigma(\rho)$ and $d=\mu(\rho)$. 
\begin{definition}[displaced Gaussian state]
    A $n$-qubit state 
    $\rho \in \Cl_{2n}$ is a displaced Gaussian state 
    if its 
    Fourier transform admits a Gaussian expression: 
    \leqalign{eq:defDgStateMain}{
        \Xi_\rho(\eta) = \exp \left(
            \df i 2 \eta^T M \eta + d^T \eta 
        \right)
    }
    where $M\in \mfk{so}(2n, \R)$ and $d\in \R^{2n}$. 
    It is a (even) Gaussian state if its mean $d=0$. 
\end{definition}
We denote the set of even and displaced 
Gaussian states by $\gauss(n)$ and 
$\dgauss(n) \subset \Cl_{2n}$, respectively. 
A displaced Gaussian state is 
completely characterized by its extended covariance matrix 
(equation~\ref{def:extCovMatrixMain}). 
A special class of \textit{diagonalized Gaussian states} 
consists of separable products of computational basis states: 
for some set of $\lambda_j\in [-1, 1]$, 
\leqalign{def:diagGaussianMain}{
    \rho_D
    &= \bigotimes_{j=1}^n \df{1 + \lambda_j Z}{2}, 
    \quad \Sigma(\rho_D) = \bigoplus_{j=1}^n \begin{pmatrix}
        0 & \lambda_j \\ -\lambda_j & 0 
    \end{pmatrix}. 
}
Given $A\in \mfk{so}(2n, \C)$, 
there exists $\rho \in \gauss(n)$ such that $\Sigma(\rho)=A$ 
if and only if $-A^TA\leq I$~\cite{bravyi2004lagrangian}. 

\section{Displaced Gaussian theory}
\label{sec:dgTheory}

In this part we describe the main results about 
displaced Gaussian states and unitaries which generalize 
the even Gaussian counterparts. In particular, we see 
that every $\rho \in \dgauss(n)$ is identified with 
the antisymmetric extended covariance $\tilde \Sigma$, 
upon which $U\in DG(n)$ conjugates by a rotation. 
These results rigorously substantiate the intuition 
that the nonzero mean of displaced Gaussian 
operators can be effectively treated 
as the covariance on one more mode. 

One advantage of defining Gaussian states 
according to its Fourier property is that 
expanding the exponential in equation~\ref{eq:defDgStateMain} 
yields the following extended Wick's formula. 
Recall equation~\ref{def:subspaceRestrictionMain}
for the following result: 
given $J\subset[2n]$, let $\tilde J=J\cup \{2n+1\}$ if $|J|$ 
is odd else $J$; 
also define the scalar factor 
$\alpha_{|J|} = (-i)^{|J|\,\operatorname*{mod}\, 2}$. 
The following result is the displaced 
generalization of Wick's formula, allowing one to deterine 
all higher moments of a displaced Gaussian 
state using its mean and covariance. 

\begin{proposition}
    \label{prp:extWickMain}
    Every $\rho\in \dgauss(n)$ satisfies 
    \leqalign{eq:extWickMain}{
        \rho = \df 1 {2^n}\sum_{J\subset[2n]} \alpha_{|J|} 
        \Pf \left[
            \tilde \Sigma(\rho)_{|\tilde J}
        \right]\gamma_J. 
    }
\end{proposition}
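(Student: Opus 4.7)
The plan is to match the coefficient of each $\gamma_J$ on both sides of equation~\ref{eq:extWickMain} via the Fourier transform. By equation~\ref{def:majoranaExpansionMain}, this reduces to establishing the Grassmann identity
\begin{equation}
    \Xi_\rho(\eta) = \sum_{J\subset[2n]} \alpha_{|J|} \Pf\bigl[\tilde\Sigma(\rho)_{|\tilde J}\bigr]\,\eta_J.
\end{equation}
Starting from equation~\ref{eq:defDgStateMain}, two Grassmann observations simplify the exponential: $(d^T\eta)^2 = 0$ (symmetric coefficients against antisymmetric generators), so $\exp(d^T\eta) = 1 + d^T\eta$; and the even-degree form $\eta^T M\eta$ is central in the Grassmann algebra, so the two pieces of the exponent commute. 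Thus $\Xi_\rho(\eta) = \exp\bigl(\tfrac{i}{2}\eta^T M\eta\bigr)\bigl(1 + d^T\eta\bigr)$, and the standard Grassmann Gaussian identity $\exp(\tfrac{1}{2}\eta^T A\eta) = \sum_{|K|\text{ even}} \Pf[A_{|K}]\,\eta_K$ (with $A=iM$) splits $\Xi_\rho$ into an even and an odd part.

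For even $|J|$, the even part contributes coefficient $\Pf[(iM)_{|J}]$; since $\tilde J=J$ and the top-left block of $\tilde\Sigma(\rho)$ is $iM$, while $\alpha_{|J|}=1$, the two sides agree directly. For odd $|J|$, expanding $\exp(\tfrac{1}{2}\eta^T (iM)\eta)\cdot d^T\eta$ and collecting the coefficient of $\eta_J$ (with $K=J\setminus\{J_j\}$, $k=J_j$, and sign $(-1)^{|J|-j}$ from reordering $\eta_K\eta_{J_j}$ into $\eta_J$) gives
\begin{equation}
    [\Xi_\rho]_J = \sum_{j=1}^{|J|}(-1)^{|J|-j} d_{J_j}\Pf\bigl[(iM)_{|J\setminus\{J_j\}}\bigr].
\end{equation}
On the other hand, the Pfaffian of the even-sized matrix $\tilde\Sigma(\rho)_{|\tilde J}$ admits a cofactor expansion along its last row whose entries are $\tilde\Sigma(\rho)_{J_j,2n+1} = id_{J_j}$ and whose cofactor Pfaffians are $\Pf[(iM)_{|J\setminus\{J_j\}}]$. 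Multiplying by $\alpha_{|J|}=-i$ and using $(-1)^{j-1}=(-1)^{|J|-j}$ for odd $|J|$ shows the two expressions match.

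The main obstacle will be sign and factor-of-$i$ bookkeeping: aligning the $i$'s arising from $\Pf[iM]$ with the $\alpha_{|J|}$ prefactor, and verifying that the Grassmann ordering signs coincide with those of the Pfaffian cofactor expansion. Treating the even and odd $|J|$ cases separately, together with a single upfront use of $\Pf[\lambda A] = \lambda^m \Pf[A]$ for antisymmetric $A$ of size $2m$, should make the bookkeeping transparent and localize the only nontrivial computation to the odd-$|J|$ case.
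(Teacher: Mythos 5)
Your proof is correct, and its overall strategy coincides with the paper's: work on the Fourier/Grassmann side, use $(d^T\eta)^2=0$ to reduce the exponential to an even Gaussian part plus a single linear factor, and treat even and odd $|J|$ separately, with the even case following immediately from the standard identity $\exp(\tfrac12\eta^TA\eta)=\sum_{|K|\text{ even}}\Pf[A_{|K}]\,\eta_K$. The one place where the two arguments genuinely diverge is the odd case. The paper adjoins an auxiliary Grassmann generator $\theta_{2n+1}$ and rewrites $\mu^T\theta$ as $(-i)\cdot\tfrac12\tilde\theta^T\bigl[\begin{smallmatrix}0&i\mu\\-i\mu^T&0\end{smallmatrix}\bigr]\tilde\theta$, so the whole expression becomes a single Gaussian in $2n+1$ variables and $\Pf[\tilde\Sigma(\rho)_{|\tilde J}]$ is read off in one stroke; this packages all sign bookkeeping into one substitution. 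You instead keep the linear factor explicit and match the coefficient $\sum_j(-1)^{|J|-j}d_{J_j}\Pf[(iM)_{|J\setminus\{J_j\}}]$ against the Pfaffian Laplace expansion of $\tilde\Sigma(\rho)_{|\tilde J}$ along its last row. Your bookkeeping does close: the last-column entries are $id_{J_j}$, so the cofactor expansion carries a factor $i$ that cancels against $\alpha_{|J|}=-i$, and $(-1)^{j-1}=(-1)^{|J|-j}$ precisely because $|J|$ is odd. The trade-off is that your route needs the Pfaffian cofactor identity (with its sign convention) as an external input, whereas the paper's auxiliary-variable trick derives the extended Pfaffian from the same combinatorial expansion already used in the even case; on the other hand, your version makes the mechanism by which the mean enters the last row of $\tilde\Sigma$ completely explicit.
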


A combinatorial proof is provided 
in Appendix~\ref{app:extWick}. 
Using the Lie algebra embedding 
identified by Knill~\cite{knill2001fermionic}, we 
derive the following displaced generalization of 
equation~\ref{thm:evenGeneratorRotMain}, in which 
$\log U=iH$ is proportional to the quadratic Hamiltonian, 
and $\bar \Sigma$ denotes the raw extended covariance matrix 
in equation~\ref{def:rawExtCovMain}. 
\begin{theorem}
    \label{thm:dgUnitaryActionGaussian}
    Given $U\in DG(n), \rho \in \dgauss(n)$ 
    the extended covariance of $U\rho U^\dag\in \dgauss(n)$ is 
    \leqalign{}{
        \tilde \Sigma(U\rho U^\dag) = R\, \tilde \Sigma(\rho) R^T, \quad 
        R = e^{2\bar \Sigma(\log U)}. 
    }
\end{theorem}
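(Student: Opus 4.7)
The plan is to reduce the theorem to its infinitesimal version along the one-parameter subgroup $t \mapsto e^{t\log U}$ and then integrate the resulting linear ODE, leveraging the Lie algebra embedding into $\mathfrak{so}(2n+1,\mathbb{R})$ invoked just before the theorem.

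First, I would compute $2\bar\Sigma(\log U)$ explicitly. Writing $\log U = \frac{1}{2}\gamma^T h\gamma + id^T\gamma$ and using the orthonormality of the Majorana basis ($\tr(\gamma_J^\dag \gamma_K) = 2^n\delta_{JK}$), one finds $\mu(\log U)_j = 2^n i d_j$ and $\Sigma(\log U)_{jk} = 2^n h_{jk}$, so that $2\bar\Sigma(\log U)$ is the antisymmetric $(2n+1)\times(2n+1)$ block matrix with $2h$ in the upper $2n\times 2n$ block and $\pm 2d$ in the last column and row. In particular $2\bar\Sigma(\log U)\in \mathfrak{so}(2n+1,\mathbb{R})$, so $R=e^{2\bar\Sigma(\log U)}$ is a genuine orthogonal rotation, and it reduces to $R=e^{2h}$ when $d=0$, consistent with equation~\ref{thm:evenGeneratorRotMain}.

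Next I would establish the infinitesimal identity $\tilde\Sigma([\log U,\rho]) = M\tilde\Sigma(\rho) + \tilde\Sigma(\rho)M^T$ for $M=2\bar\Sigma(\log U)$. By linearity and the block structure of $M$ it suffices to check this separately for the even generator $\frac{1}{2}\gamma^T h\gamma$ and for the pure displacement $id^T\gamma$. The even contribution is immediate from the Heisenberg action $[\frac{1}{2}\gamma^T h\gamma,\gamma_j] = 2\sum_l h_{lj}\gamma_l$ implied by equation~\ref{thm:evenGeneratorRotMain}, extended to quadratic monomials in the standard way. The displacement contribution requires direct computation: the commutators $[\gamma_j, id^T\gamma] = 2i\sum_l d_l\gamma_j\gamma_l - 2id_j$ and $[\gamma_j\gamma_k, id^T\gamma] = 2i(d_k\gamma_j - d_j\gamma_k)$ yield, upon tracing against $\rho$, $\mu([id^T\gamma,\rho]) = -2i\Sigma(\rho)d$ and $\Sigma([id^T\gamma,\rho])_{jk} = 2i(d_j\mu(\rho)_k - d_k\mu(\rho)_j)$, which match the block entries of $M_d\tilde\Sigma(\rho) + \tilde\Sigma(\rho)M_d^T$ for the corresponding piece $M_d=2\bar\Sigma(id^T\gamma)$.

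Finally, since conjugation by $DG(n)$ preserves $\dgauss(n)$, the family $\tilde\Sigma(t) := \tilde\Sigma(e^{t\log U}\rho e^{-t\log U})$ is a well-defined curve, and the infinitesimal identity upgrades to the linear ODE $\dot{\tilde\Sigma}(t) = M\tilde\Sigma(t) + \tilde\Sigma(t)M^T$ with $\tilde\Sigma(0) = \tilde\Sigma(\rho)$. Its unique solution is $\tilde\Sigma(t) = e^{tM}\tilde\Sigma(\rho)e^{tM^T}$, and setting $t=1$ yields the claim. The main obstacle is the displacement half of the infinitesimal identity, where the moment extraction from $[id^T\gamma,\rho]$ must be reconciled with the block-matrix product; this requires careful use of the antisymmetry of $\Sigma(\rho)$ and the sign convention $\gamma_{(j,k)}^\dag = -\gamma_j\gamma_k$ in the Majorana expansion, while the even half is essentially inherited from the already-cited result on even Gaussian unitaries.
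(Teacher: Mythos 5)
Your infinitesimal/ODE route is genuinely different from the paper's argument, and the part it proves, it proves correctly. The paper instead works globally: it pushes $\gamma_J$ through the Clifford embedding $\phi$ into $\Cl_{2n+1}$, shows that $U$ acts on every Majorana monomial by an antisymmetrized rotation $R_{\widehat{\tilde K},\tilde J}$, and then uses a Pfaffian rotation lemma together with the extended Wick formula to transform \emph{all} moments at once. Your derivative computation $\tilde\Sigma([\log U,\rho]) = M\tilde\Sigma(\rho)+\tilde\Sigma(\rho)M^T$ with $M=2\bar\Sigma(\log U)$ is a correct linear identity (the commutator of a quadratic polynomial with degree-one and degree-two monomials stays within degrees $\le 2$, so the first two moments of $[\log U,\rho]$ depend only on the first two moments of $\rho$), and your block-matrix bookkeeping for the displacement piece checks out against the convention $2\bar\Sigma(\frac12\gamma^T h\gamma+id^T\gamma)=2\left[\begin{smallmatrix}h&-d\\ d^T&0\end{smallmatrix}\right]$. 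In fact your argument yields the covariance transformation law for \emph{arbitrary} $\rho\in\Cl_{2n}$, which is slightly more general than what the paper states.

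The gap is in the clause ``since conjugation by $DG(n)$ preserves $\dgauss(n)$.'' The theorem asserts $U\rho U^\dag\in\dgauss(n)$ as part of its conclusion, and at this point in the paper nothing establishes that preservation: the thermal and circuit characterizations that would deliver it (Theorem~\ref{thm:unifiedCharacterization}) are proved \emph{later}, using this very theorem, so invoking them here would be circular. Your ODE tracks only the first and second moments, so it cannot detect whether the higher moments of $U\rho U^\dag$ continue to satisfy the Wick/Pfaffian pattern, i.e.\ whether $\Xi_{U\rho U^\dag}$ remains a Gaussian exponential. (The curve $\tilde\Sigma(t)$ is well defined regardless, so your derivation of the covariance formula survives; but the membership claim does not follow from it.) To close the gap you would need a separate argument that conjugation by $e^{t\log U}$ preserves the Gaussian form of the Fourier transform --- e.g.\ by showing all higher moments obey the rotated Wick formula, which is essentially the content of the paper's Theorem~\ref{thm:dispUnitaryWholeAlgebraEffect} and Lemma~\ref{lem:PfaffianRotation}.
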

The detailed proof can be found in Appendix~\ref{app:unitaryAction}. 
The main insight here is that under $DG(n)$, 
the mean $\mu_j \gamma_j$ of a displaced Gaussian state 
transforms as the covariance $i\mu_j\gamma_j\gamma_{2n+1}$ on one more mode. 

Even Gaussian states have been defined 
as those admitting a Gaussian Fourier 
expression~\cite{bravyi2004lagrangian,bravyi2005classical}, 
the orbit of computational basis states under
$G(n)$~\cite{cudby2023gaussian, dias2024classical}, 
or thermal states of purely quadratic 
Hamiltonians~\cite{surace2022fermionic, bianchi2021page}. 
Our definition of $\dgauss$ extends the first definition, 
yet it is equally plausible to adopt the 
extensions of the circuit or thermal state 
definitions. Using theorem~\ref{thm:dgUnitaryActionGaussian}, 
we show that all three definitions are in fact equivalent. 
This result bridges the physical, computational, and mathematical 
properties of displaced Gaussian states. 

\begin{theorem}
    \label{thm:unifiedCharacterization}
    $\rho \in \Cl_{2n}$ is a displaced Gaussian state 
    (equation~\ref{eq:defDgStateMain})
    iff it satisfies any of the following: 
    \begin{enumerate}
        \item \textbf{Thermal state}: 
        there exists a quadratic Hamiltonian $H\in \Cl_{2n}$ such that 
        $\rho$ is its thermal state: 
        \leqalign{}{
            \rho = \df{e^{-H}}{\tr(e^{-H})}, \quad 
            H = \df i 2 \gamma^T h \gamma + d^T\gamma. 
        }
        Here $h\in \mfk{so}(2n, \R)$ is real antisymmetric 
        and $d\in \R^{2n}$. 
        A pure state $\rho$ is a displaced Gaussian state iff it 
        is the ground state of some $H$ of the form above. 
        \item \textbf{Circuit output}: 
        $\rho$ results from applying a displaced 
        Gaussian unitary $U_G\in DG(n)$ 
        to a diagonalized Gaussian state $\rho_D$ 
        (equation \ref{def:diagGaussianMain}): 
        \leqalign{}{
            \rho = U_G \rho_D U_G^\dag. 
        }
    \end{enumerate}
\end{theorem}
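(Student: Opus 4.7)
The plan is to establish the three characterizations equivalent via a short cycle of implications, using the extended Wick formula (proposition~\ref{prp:extWickMain}) and the covariance transformation law (theorem~\ref{thm:dgUnitaryActionGaussian}) as the main tools. The implication (Circuit output) $\Rightarrow$ (Fourier) is essentially immediate: a diagonalized Gaussian state $\rho_D$ from equation~\ref{def:diagGaussianMain} factorizes into commuting $2\times 2$ factors $(1+\lambda_j Z_j)/2$, each with a Gaussian Grassmann Fourier transform and zero linear term, so $\rho_D \in \dgauss(n)$; since theorem~\ref{thm:dgUnitaryActionGaussian} asserts that $DG(n)$ conjugation preserves $\dgauss(n)$, it follows that $U_G \rho_D U_G^\dag \in \dgauss(n)$.

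The substantive converse (Fourier) $\Rightarrow$ (Circuit output) proceeds by a canonical-form argument. Given $\rho \in \dgauss(n)$, its extended covariance $\tilde\Sigma(\rho)$ is an antisymmetric matrix of odd order $2n+1$, and so admits an $O(2n+1,\R)$ block-diagonalization into $n$ antisymmetric $2\times 2$ blocks (with parameters $\lambda_j$) plus exactly one $1\times 1$ zero block. After a coordinate permutation --- and a sign flip absorbed by flipping a $\lambda_j$ if needed to enforce $\det = +1$ --- I place the zero block at position $2n+1$, so that the block-diagonal form is precisely $\tilde\Sigma(\rho_D)$ for a candidate diagonalized Gaussian state $\rho_D$. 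Lifting the rotation $R_0 \in SO(2n+1)$ to $U_G \in DG(n)$ via the Knill Lie algebra embedding, theorem~\ref{thm:dgUnitaryActionGaussian} gives $\tilde\Sigma(U_G \rho_D U_G^\dag) = R_0\,\tilde\Sigma(\rho_D)\,R_0^T = \tilde\Sigma(\rho)$. Because proposition~\ref{prp:extWickMain} reconstructs every displaced Gaussian state as an explicit pfaffian polynomial in its extended covariance, two such states agreeing on $\tilde\Sigma$ must coincide, so $\rho = U_G \rho_D U_G^\dag$; positivity of the resulting $\rho_D = U_G^\dag \rho U_G$ then forces $\lambda_j \in [-1,1]$.

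The equivalence (Thermal state) $\Leftrightarrow$ (Circuit output) reduces to the same block-diagonalization. Given a circuit output, $\rho_D$ is the thermal state of the purely even quadratic Hamiltonian $H_D = \sum_j \beta_j\, i \gamma_{2j-1}\gamma_{2j}$ with $\lambda_j = \tanh\beta_j$ (the pure case $|\lambda_j|=1$ being the $\beta_j \to \infty$ ground state limit), so $\rho = e^{-H}/\tr(e^{-H})$ with $H = U_G H_D U_G^\dag$; since $DG(n)$ conjugation maps Majorana operators affinely (Knill embedding), $H$ has the required quadratic-plus-linear form. Conversely, any $H = \tfrac{i}{2}\gamma^T h \gamma + d^T \gamma$ corresponds under the Knill embedding to an antisymmetric matrix in $\mfk{so}(2n+1,\R)$; block-diagonalizing with the zero block placed at position $2n+1$ produces an even quadratic $H_D = U_G^\dag H U_G$ whose thermal state is diagonalized, and transporting back by $U_G$ yields the circuit form.

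The main technical obstacle I expect is the combinatorial bookkeeping guaranteeing that the one-dimensional kernel direction of the extended antisymmetric matrix can always be rotated to position $2n+1$ by an $R_0$ lying in the image of the Knill embedding $DG(n) \to SO(2n+1)$. This ultimately rests on connectedness of $SO(2n+1)$ and the freedom to absorb sign ambiguities into the signs of the $\lambda_j$; it is not deep but must be handled carefully so that the diagonalized form is realized by an actual $DG(n)$ unitary rather than merely an abstract orthogonal transformation.
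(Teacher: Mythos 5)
Your proposal is correct and follows essentially the same route as the paper: both first reduce to the diagonalized case by block-diagonalizing the antisymmetric extended covariance $\tilde\Sigma(\rho)$ via a rotation realized in $DG(n)\cong SO(2n+1)$ through the Knill embedding, verify each characterization directly for diagonalized Gaussian states, and then transport back using theorem~\ref{thm:dgUnitaryActionGaussian} together with the fact that the extended Wick formula determines a displaced Gaussian state from its extended covariance. The only difference is cosmetic: you are somewhat more explicit than the paper about the determinant/sign bookkeeping in placing the $1\times 1$ kernel block at position $2n+1$ and about the injectivity of the covariance-to-state map, both of which the paper treats as standard.
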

The detailed proof in Appendix~\ref{app:dgStateCharacterization} 
first establishes equivalence for diagonalized 
Gaussian states, then uses theorem~\ref{thm:dgUnitaryActionGaussian} 
to extend to the general case. 
One immediate corollary is the necessary and sufficient conditions 
for an antisymmetric matrix to be the extended covariance of some 
$\rho \in \dgauss(n)$. This extends the covariance 
condition established for even Gaussian 
states in~\cite{bravyi2004lagrangian}. 
\begin{corollary}
    $A\in \mfk{so}(2n+1, \C)=\tilde \Sigma(\rho_G)$ 
    for some $\rho_G\in \dgauss(n)$ iff $A$ has rank $2n$ and 
    there exists $R\in SO(2n+1)$ such that 
    $RA^2R^T\leq I_{2n+1}$, the associated $\rho_G$ is pure 
    iff the nonzero eigenvalues of $A^2$ are all $1$. 
\end{corollary}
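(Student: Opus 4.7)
The plan is to combine the circuit-output description of $\dgauss(n)$ from Theorem~\ref{thm:unifiedCharacterization} with the rotation action from Theorem~\ref{thm:dgUnitaryActionGaussian}, together with the standard real-orthogonal normal form for antisymmetric matrices.

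For the forward direction, suppose $A = \tilde\Sigma(\rho_G)$. Theorem~\ref{thm:unifiedCharacterization} writes $\rho_G = U_G\rho_D U_G^\dag$ for some $U_G \in DG(n)$ and diagonalized Gaussian $\rho_D = \bigotimes_j (I+\lambda_j Z)/2$ with $\lambda_j \in [-1,1]$. Theorem~\ref{thm:dgUnitaryActionGaussian} then gives $A = R_0\tilde\Sigma(\rho_D)R_0^T$ for some $R_0 \in SO(2n+1)$. Since $\tilde\Sigma(\rho_D) = \Sigma(\rho_D) \oplus [0]$ is already in block-diagonal form with parameters $\{\lambda_j\}_{j=1}^n$, the rank of $A$ equals $2|\{j:\lambda_j\neq 0\}|$, which equals $2n$ exactly when every $\lambda_j$ is nonzero. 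Taking $R = R_0^T \in SO(2n+1)$ yields $RA^2R^T = \tilde\Sigma(\rho_D)^2$, whose spectrum consists of $\{\lambda_j^2\}$ (each appearing twice) and a single $0$, so the bound $\lambda_j^2 \leq 1$ gives $RA^2R^T \leq I$.

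For the backward direction, assume $A$ has rank $2n$ and satisfies $RA^2R^T \leq I$ for some $R \in SO(2n+1)$. I would apply the standard real-orthogonal normal form to produce $Q \in SO(2n+1)$ such that $QAQ^T$ has the block structure of $\tilde\Sigma(\rho_D)$ for some parameters $\mu_j$. The rank hypothesis forces every $\mu_j \neq 0$, and since orthogonal conjugation preserves the spectrum of $A^2$, the spectral condition forces $\mu_j^2 \leq 1$; hence $\rho_D := \bigotimes_j (I+\mu_j Z)/2 \in \dgauss(n)$ is a legitimate diagonalized Gaussian. Using surjectivity of the rotation map $DG(n) \to SO(2n+1)$ underlying Theorem~\ref{thm:dgUnitaryActionGaussian}, I would pick $U_G \in DG(n)$ realizing the rotation $Q^T$ and set $\rho_G := U_G\rho_D U_G^\dag$, which lies in $\dgauss(n)$ and has $\tilde\Sigma(\rho_G) = Q^T\tilde\Sigma(\rho_D)Q = A$.

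The purity clause is immediate: unitary conjugation preserves purity, so $\rho_G$ is pure iff each factor $(I+\mu_j Z)/2$ is, iff $|\mu_j| = 1$ for every $j$, i.e., iff every nonzero eigenvalue of $A^2$ equals $1$. The main obstacle in this plan is rigorously justifying the surjectivity of $DG(n) \to SO(2n+1)$ used in the backward direction: Theorem~\ref{thm:dgUnitaryActionGaussian} constructs the rotation map but not its inverse. This surjectivity follows from Knill's Lie-algebra embedding of displaced quadratic Hamiltonians into $\mfk{so}(2n+1)$ combined with the connectedness of $SO(2n+1)$, but should be recorded as an explicit companion lemma.
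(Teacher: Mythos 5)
Your overall strategy---reduce to a diagonalized Gaussian state via Theorem~\ref{thm:unifiedCharacterization}, transport the normal form back with Theorem~\ref{thm:dgUnitaryActionGaussian}, and invoke surjectivity of $DG(n)\to SO(2n+1)$---is exactly the route the paper intends (it offers no explicit proof, calling the statement an ``immediate corollary''). Your backward direction and purity clause are sound on those terms, and you are right that the surjectivity deserves an explicit companion lemma; the paper already uses it implicitly in Lemma~\ref{lem:diagonalizability}, where it follows from the isomorphism of Proposition~\ref{prp:dispAntiSymmetryEmbedding} together with surjectivity of the exponential map on the compact connected group $SO(2n+1)$.

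However, your forward direction never actually establishes the rank condition, and it cannot: your own computation gives $\operatorname{rank}A = 2\,|\{j:\lambda_j\neq 0\}|$, which is strictly less than $2n$ whenever some $\lambda_j$ vanishes. The sentence ``which equals $2n$ exactly when every $\lambda_j$ is nonzero'' is left hanging---it neither proves the claim nor acknowledges that it fails. Concretely, the maximally mixed state $I/2^n$ lies in $\gauss(n)\subset\dgauss(n)$ and has $\tilde\Sigma=0$, of rank $0$, so the ``only if'' part of the corollary is false as literally stated and your proof silently inherits the problem. You should either note that the rank hypothesis must be weakened to $\operatorname{rank}A\le 2n$ (automatic for an odd-dimensional antisymmetric matrix, hence droppable, and indeed your backward direction never truly needs rank $2n$ since $\mu_j=0$ just produces a maximally mixed tensor factor), or observe that the purity clause then requires the rank condition to be restored: $|0\ra\la 0|\otimes I/2$ has all nonzero eigenvalues of $A^2$ equal to $1$ yet is not pure. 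Two smaller points: the quantifier over $R$ is vacuous, since $RA^2R^T\le I$ for some $R\in SO(2n+1)$ iff $A^2\le I$; and your appeal to the real-orthogonal normal form in the backward direction implicitly requires $A$ to be purely imaginary (so that $-iA$ is real antisymmetric), a hypothesis the corollary omits but which is forced on the extended covariance of any Hermitian state.
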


\section{Classical simulation}
\label{sec:dgSimulation}
An important motivation for studying displaced Gaussian unitaries 
is that they yield a larger class of efficiently classically 
simulable quantum circuits. 
The circuit characterization of $\dgauss(n)$ in 
theorem~\ref{thm:unifiedCharacterization} also relates 
the study of displaced Gaussian unitaries to the study of displaced 
Gaussian states. 
In this section, we show that displaced Gaussian circuits are 
equivalent to n.n. matchgates augmented with 
single-qubit gates on the initial line. 
We also describe the efficient classical 
simulation of displaced Gaussian circuits. 
Detailed derivations of the following results can be found in 
Appendix~\ref{app:classicalSimulation}. 

\begin{theorem}
    \label{thm:dispDecomposition}
    Every displaced Gaussian unitary $U\in DG(n)$ is the 
    product of $O(n^3)$ matchgates or single-qubit gates
    on the initial line of the circuit. 
\end{theorem}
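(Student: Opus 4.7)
The plan is to reduce the decomposition of any $U\in DG(n)$ to a rotation decomposition in $SO(2n+1)$ via the extended-covariance correspondence of theorem~\ref{thm:dgUnitaryActionGaussian}, then lift each factor back to $DG(n)$. Under this homomorphism, even Gaussians $V\in G(n)$ correspond to rotations in $SO(2n)\subset SO(2n+1)$ (the stabilizer of $e_{2n+1}$); meanwhile the initial-line single-qubit gate $e^{i\phi\gamma_1} = e^{i\phi X}\otimes I^{\otimes(n-1)}$ corresponds to a rotation by $2\phi$ in the $(1,2n+1)$-plane, as one verifies by computing $\bar\Sigma(i\phi\gamma_1)$ directly from equations~\ref{def:extCovMatrixMain} and~\ref{def:rawExtCovMain}. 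Hence it suffices to show that every $R\in SO(2n+1)$ factors as $R = R_1 R_2 R_3$ with $R_1,R_3\in SO(2n)$ and $R_2$ a Givens rotation in the $(1,2n+1)$-plane.

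For the rotation factorization, I would track the image $v := Re_{2n+1}$ and write $v = v' + v_{2n+1}e_{2n+1}$ with $v'\in\R^{2n}$. In the generic case $v'\neq 0$, set $\cos\theta = v_{2n+1}$ and $\sin\theta = \|v'\|$, let $R_2$ be the rotation by $\theta$ in the $(1,2n+1)$-plane, and choose $R_1\in SO(2n)$ sending $e_1$ to the appropriately-signed unit vector in the direction of $v'$ so that $R_1 R_2 e_{2n+1} = v$. Then $R_3 := R_2^{-1}R_1^{-1}R$ fixes $e_{2n+1}$ and hence lies in $SO(2n)$. The degenerate case $v' = 0$ reduces to $R_2 = \pm I$, with the residual rotation absorbed into $R_1,R_3$.

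Lifting back, equation~\ref{thm:evenGeneratorRotMain} produces $V_1,V_3\in G(n)$ realizing $R_1,R_3$, each admitting an $O(n^3)$-sized nearest-neighbor matchgate decomposition via \cite[theorem 5]{jozsa2008matchgates}. The plane rotation $R_2$ lifts to $S := e^{i\theta\gamma_1/2}$, a single-qubit gate on the initial line. By theorem~\ref{thm:dgUnitaryActionGaussian}, $V_1 S V_3$ and $U$ induce identical rotations on $\tilde\Sigma(\rho)$ for every $\rho\in\dgauss(n)$, and since the extended Wick formula (proposition~\ref{prp:extWickMain}) shows $\rho$ is determined by $\tilde\Sigma(\rho)$, the two unitaries implement the same conjugation on all displaced Gaussian states. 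This forces $V_1 S V_3 = e^{i\phi}U$ for some global phase $\phi$, which is immaterial for circuit decomposition. The total gate count is $O(n^3) + 1 + O(n^3) = O(n^3)$.

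The main technical obstacle is justifying that conjugation-equivalence on $\dgauss(n)$ implies equality up to a scalar, i.e., that the kernel of $DG(n)\to SO(2n+1)$ lies in $\C\cdot I$. Specializing the equivalence to pure computational basis states $|x\rangle\langle x|\in\dgauss(n)$ forces $W := U^{-1}V_1 S V_3$ to be diagonal in the computational basis; then applying the equivalence to Gaussian-rotated states $V\rho_D V^\dag$ for generic $V\in DG(n)$ requires $V^\dag W V$ to remain diagonal, which rules out any $Z_j$ component in $W$ and pins $W\in\C\cdot I$. Handling sign conventions and the degenerate $v'=0$ cases in the rotation decomposition is routine but warrants careful bookkeeping.
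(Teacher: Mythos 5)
Your proof is correct, but it takes a genuinely different route from the paper's. Both arguments reduce the problem to factoring the rotation $R\in SO(2n+1)$ obtained from the embedding $2\bar\Sigma:\mfk{dg}(n)\to\mfk{so}(2n+1)$; the paper then uses a generalized Euler-angle decomposition of $R$ into $O(n^2)$ plane rotations, each synthesized from $O(n)$ elementary rotations generated by n.n.\ matchgates or $R_X(\theta)_1$, so initial-line gates are scattered throughout the circuit. You instead use the coset structure $SO(2n+1)=SO(2n)\cdot\exp(\mfk p)\cdot SO(2n)$ via transitivity on the sphere: tracking $v=Re_{2n+1}$ yields $R=R_1R_2R_3$ with $R_1,R_3$ in the stabilizer $SO(2n)$ of $e_{2n+1}$ and $R_2$ a single Givens rotation in the $(1,2n+1)$-plane. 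Lifting through equation~\ref{thm:evenGeneratorRotMain} and \cite[theorem 5]{jozsa2008matchgates} gives the same $O(n^3)$ count but a strictly sharper structural statement: up to global phase, $DG(n)=G(n)\cdot\{e^{i\theta\gamma_1/2}\}\cdot G(n)$, i.e.\ a \emph{single} $X$-rotation on the initial line suffices. You also confront a point the paper elides, namely that two unitaries inducing the same rotation on $\tilde\Sigma$ need only agree up to phase (the map $DG(n)\to SO(2n+1)$ is a covering, not an isomorphism); your argument that $W=U^{-1}V_1SV_3$ commutes with all displaced Gaussian states is sound and can be shortened by citing proposition~\ref{prp:productsAreDg}, since product states are displaced Gaussian and already span $\Cl_{2n}$ linearly. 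Two cosmetic points: in the degenerate case $v'=0$ with $v_{2n+1}=-1$ the factor $R_2$ should be the $(1,2n+1)$-plane rotation by $\pi$ rather than $-I$ on all of $\R^{2n+1}$; and surjectivity of $h\mapsto e^{2h}$ onto $SO(2n)$ (needed to realize $R_1,R_3$ by elements of $G(n)$) deserves a word, though it is immediate from compactness and connectedness.
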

Recalling theorem~\ref{thm:dgUnitaryActionGaussian}, a quadratic 
term $\gamma_j\gamma_k$ for $1\leq j <k \leq 2n$ 
in the Hamiltonian generates
rotation between the $j$ and $k$-th subspaces of the extended covariance matrix. 
On the initial line of the Jordan-Wigner transform, 
the $X$-rotation unitary $R_X(\theta)_1$ is generated by $\gamma_1=X_1$, 
which generates rotation between the 
first and $(2n+1)$-th subspaces of the extended covariance matrix. 
The theorem follows by noting that every rotation in $\R^{2n+1}$ can be decomposed 
into $O(n^3)$ rotations generated by n.n. matchgates or $R_X(\theta)_1$. 

In light of the circuit characterization in theorem~\ref{thm:unifiedCharacterization}, 
displaced Gaussian states can be understood as the orbit of 
separable computational basis mixtures under $DG(n)$. 
By slightly adapting the technique from \cite[theorem 3]{brod2016efficient}, 
we obtain the following result:

\begin{proposition} 
    \label{prp:productsAreDg}
    Every $n$-qubit product state is a displaced Gaussian state. 
\end{proposition}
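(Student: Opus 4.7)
The plan is to invoke the circuit characterization in Theorem~\ref{thm:unifiedCharacterization}: a state lies in $\dgauss(n)$ iff it equals $U\rho_D U^\dag$ for some $U\in DG(n)$ and some diagonalized Gaussian state $\rho_D$. Since $|0\rangle\langle 0|^{\otimes n}$ is such a $\rho_D$ (take all $\lambda_j=1$ in equation~\ref{def:diagGaussianMain}), it suffices to exhibit, for an arbitrary pure product state $|\psi_1\rangle\otimes\cdots\otimes|\psi_n\rangle$, a displaced Gaussian unitary $U$ satisfying $U|0\rangle^{\otimes n}=|\psi_1\rangle\otimes\cdots\otimes|\psi_n\rangle$.

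I would build $U$ by Brod's iterative preparation scheme, installing the factors from right to left. For $j=n,n-1,\ldots,1$ in that order, first apply a single-qubit unitary $V_j$ on the initial line mapping $|0\rangle\mapsto|\psi_j\rangle$, and then, if $j>1$, apply the chain $\mathrm{fSWAP}_{1,2}\mathrm{fSWAP}_{2,3}\cdots\mathrm{fSWAP}_{j-1,j}$ of nearest-neighbor fermionic SWAPs to transport $|\psi_j\rangle$ from qubit $1$ to qubit $j$. Each single-qubit gate on the initial line belongs to $DG(n)$ by Theorem~\ref{thm:dispDecomposition}, and each $\mathrm{fSWAP}_{k,k+1}$ is a nearest-neighbor matchgate (it swaps the Majorana pair $\{\gamma_{2k-1},\gamma_{2k}\}$ with $\{\gamma_{2k+1},\gamma_{2k+2}\}$), hence also in $DG(n)$.

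The essential observation is that every $\mathrm{fSWAP}_{k,k+1}$ invoked during this construction is applied at a moment when at least one of qubits $k,k+1$ sits in state $|0\rangle$. In this regime the $-1$ phase that $\mathrm{fSWAP}$ attaches to the component $|11\rangle$ is never triggered, so $\mathrm{fSWAP}$ acts as ordinary $\mathrm{SWAP}$ on the two-dimensional invariant subspace spanned by $|0\rangle|\psi\rangle$ and $|\psi\rangle|0\rangle$. A short induction on $j$ confirms that immediately after processing $|\psi_j\rangle$ the global state is $|0\rangle^{\otimes(j-1)}\otimes|\psi_j\rangle\otimes|\psi_{j+1}\rangle\otimes\cdots\otimes|\psi_n\rangle$, and the final application at $j=1$ completes the construction in $O(n^2)$ gates. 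Verifying this $|0\rangle$-invariant for every $\mathrm{fSWAP}$ call is the single care point; once it is in place the rest reduces to an elementary two-dimensional computation.
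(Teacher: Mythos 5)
Your proof is correct and follows essentially the same route as the paper's: Brod's scheme of preparing each factor on the initial line and ferrying it into place with nearest-neighbor fermionic swaps, which act as genuine swaps because one of the two inputs is always $|0\rangle$ (your explicit invariant just makes the paper's ``swapping through intermediate computational basis states'' precise). The only omission is the mixed case, which the proposition also covers; as the paper notes it follows by the same argument, since diagonalizing each factor $\rho_j = U_j D_j U_j^\dag$ reduces it to a diagonalized Gaussian state conjugated by the identical circuit of single-qubit gates on line one and fermionic swaps.
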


To complete efficient simulation, it remains to simulate measurements. 
This is facilitated by the Grassman Gaussian integrals which have been studied 
in~\cite{bravyi2004lagrangian, bravyi2019simulation, soper1978construction}. 
We can efficiently simulate computational basis measurements 
on any subset of qubit lines: 
given a subset $K \subset [n]$ of lines to measure, where $|K|=k$, 
and a bit string $x \in {0, 1}^k$, 
the associated measurement operator is defined as: 
\leqalign{}{
    O(K, x) = \df 1 {2^k} \prod_{j=1}^k I + (-1)^{x_j} Z_{K_j}. 
}
This operator acts as $|x\rangle \langle x|$ on the qubit lines indexed by $K$, 
while it applies the identity $I$ on all other lines.
\begin{lemma}
    \label{lem:dgMeasurementMain}
    Given $\rho \in \dgauss(n)$, the expectation value of 
    measuring $O(K, x)$ is 
    \leqalign{}{
        \tr[O(K, x)\rho] = \df 1 {2^k} \sqrt{\det\left[
            1 + \Sigma(\rho) M
        \right]}
    }
    where $M=\Sigma[2^{n-k}O(K, x)]$ is the covariance matrix of the 
    even Gaussian state proportional to $O$. 
\end{lemma}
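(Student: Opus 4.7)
The plan is to reduce the displaced overlap $\tr[O(K,x)\rho]$ to a standard even Gaussian overlap by a parity argument, and then invoke the known Grassmann Gaussian integral identity.

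First I would verify that $O(K,x)$ is itself an even Gaussian operator. Writing $Z_{K_j} = -i\gamma_{2K_j-1}\gamma_{2K_j}$, each factor $(I + (-1)^{x_j} Z_{K_j})/2$ is a mutually commuting projector, expressible as a purely quadratic Majorana exponential. Hence $2^{n-k}O(K,x)$ is a trace-$1$ even Gaussian state whose covariance matrix is $M$ (with vanishing entries on the $2(n-k)$ Majorana modes corresponding to unmeasured qubits).

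Next I decompose $\rho = \rho_e + \rho_o$ by Majorana parity. Because the $\eta_j$ anticommute, $(\mu(\rho)^T\eta)^2=0$, so the displaced Fourier transform factors as
\begin{equation*}
\Xi_\rho(\eta) \;=\; e^{\frac{i}{2}\eta^T \Sigma(\rho)\eta}\bigl(1 + \mu(\rho)^T\eta\bigr),
\end{equation*}
which immediately identifies $\rho_e$ as the (trace-$1$) even Gaussian state with covariance $\Sigma(\rho)$ and $\rho_o$ as a purely odd-degree traceless operator. Since $O(K,x)$ is even in the Majorana generators, every term of $\tr[O(K,x)\rho_o]$ is the trace of an odd Majorana monomial and therefore vanishes. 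So $\tr[O(K,x)\rho]=\tr[O(K,x)\rho_e]$, reducing the problem to an even-Gaussian overlap.

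Finally I would invoke the standard Gaussian--Gaussian overlap formula, which follows from evaluating a Grassmann Gaussian integral (e.g.\ \cite{bravyi2004lagrangian,bravyi2019simulation}),
\begin{equation*}
\tr[\sigma_1\sigma_2] \;=\; \tfrac{1}{2^n}\sqrt{\det\bigl[I + \Sigma(\sigma_1)\Sigma(\sigma_2)\bigr]}, \qquad \sigma_1,\sigma_2\in\gauss(n),
\end{equation*}
applied to $\sigma_1 = 2^{n-k}O(K,x)$ and $\sigma_2 = \rho_e$; dividing by the normalization $2^{n-k}$ then produces the claimed $2^{-k}\sqrt{\det[I+\Sigma(\rho)M]}$. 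The main obstacle is not conceptual but rather the sign/normalization bookkeeping: one must check that the branch of the square root (via the underlying Pfaffian) is chosen so the output is real and nonnegative, and that the block structure of $M$ on the unmeasured modes (where it is the zero antisymmetric block, leaving identity blocks inside the determinant) contributes trivially. Both are routine once the Grassmann integral identity is in hand, and the resulting expression is exactly the lemma's formula.
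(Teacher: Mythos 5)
Your proposal is correct and follows essentially the same route as the paper: the paper likewise observes that the overlap with an even operator depends only on the even moments of $\rho$ (so the mean can be set to zero), identifies $2^{n-k}O(K,x)$ as an even Gaussian state with covariance $M$ supported on the measured modes, and then evaluates the resulting even--even overlap via the Grassmann Gaussian integral, handling the singular $M$ by a small perturbation and the square-root sign via a Pfaffian/positivity argument. The only difference is organizational --- you strip the odd part of $\rho$ explicitly and cite the even--even overlap formula, whereas the paper packages the same reduction into a separate displaced--even overlap lemma --- so no further comment is needed.
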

Given a displaced Gaussian state as input, 
the effect of displaced Gaussian unitaries can be determined via 
theorem~\ref{thm:dgUnitaryActionGaussian}, 
and the measurement outcomes in the computational 
basis can be computed using lemma~\ref{lem:dgMeasurementMain}.

\begin{theorem} 
    A circuit consisting of displaced Gaussian unitaries, 
    a displaced Gaussian state input, 
    and computational basis measurements 
    on any subset of qubit lines is efficiently classically simulable. 
\end{theorem}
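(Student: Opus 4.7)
The plan is to maintain, throughout the circuit, the $(2n+1) \times (2n+1)$ antisymmetric extended covariance matrix $\tilde\Sigma$ of the running state. By Theorem~\ref{thm:unifiedCharacterization} this matrix is a complete classical description of any $\rho \in \dgauss(n)$, costing $O(n^2)$ real parameters, and the given displaced Gaussian input --- or, by Proposition~\ref{prp:productsAreDg}, any computational-basis product input --- can be initialized in this form directly from the circuit description.

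For the unitary part of the circuit I would process gates one at a time. Theorem~\ref{thm:dispDecomposition} decomposes each $U\in DG(n)$ into $O(n^3)$ nearest-neighbor matchgates or single-qubit $X$-rotations on the initial line, and Theorem~\ref{thm:dgUnitaryActionGaussian} shows that each such elementary factor acts on $\tilde\Sigma$ by conjugation with a rotation $R$ whose support is an $O(1)$-sized block of $\mathbb{R}^{2n+1}$; the update $\tilde\Sigma \mapsto R \tilde\Sigma R^T$ therefore costs $O(n)$ time. Consequently, evolving $\tilde\Sigma$ through the entire circuit takes time polynomial in $n$ and in the number of gates, and yields an up-to-date extended covariance matrix just before the measurement.

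For the measurement on a subset $K\subset [n]$, I would sample bit by bit using the chain rule: for the $j$-th measured bit, I would evaluate
\leqalign{}{
P(x_j \mid x_1,\dots,x_{j-1}) = \frac{\tr[O(K_{\leq j}, x_{\leq j})\rho]}{\tr[O(K_{\leq j-1}, x_{\leq j-1})\rho]},
}
whose numerator and denominator are each computable via Lemma~\ref{lem:dgMeasurementMain} as the square root of a $(2n)\times(2n)$ determinant in $O(n^3)$ time. Iterating this procedure at most $|K|\leq n$ times produces a valid sample from the joint outcome distribution without ever requiring the post-measurement state to be propagated explicitly.

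The main obstacle is to check that Lemma~\ref{lem:dgMeasurementMain}, whose determinant formula involves only the $2n\times 2n$ covariance block $\Sigma(\rho)$, does remain correct when $\rho$ is displaced and carries nonzero mean. This reduces to the observation that each $O(K,x)$ is a product of $Z$ operators and therefore even in the Majorana algebra, so by Proposition~\ref{prp:extWickMain} only even principal Pfaffians of $\tilde\Sigma(\rho)$ --- equivalently, Pfaffians of principal submatrices of $\Sigma(\rho)$ itself --- contribute to $\tr[O(K,x)\rho]$. Combining this with the cost estimates of the unitary and sampling steps yields the claimed polynomial-time classical simulation.
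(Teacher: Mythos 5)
Your proposal is correct and follows essentially the same route as the paper: maintain the (extended) covariance matrix as the classical description, update it by the rotation of Theorem~\ref{thm:dgUnitaryActionGaussian}, and evaluate computational-basis outcome probabilities with Lemma~\ref{lem:dgMeasurementMain}. The extra details you supply---the bit-by-bit chain-rule sampling and the observation that $O(K,x)$ is even so the mean of $\rho$ drops out of the overlap---are exactly the points the paper leaves implicit or places inside the proof of its overlap formula (Lemma~\ref{lem:dgOverlapFormula}), so they complete rather than diverge from the paper's argument.
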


Since both n.n. matchgates and 
single-qubit gates acting on the first line are displaced 
Gaussian unitaries, and product states qualify as displaced 
Gaussian states, we can conclude the following:

\begin{corollary} 
    \label{cor:dgSimulationExt}
    A circuit comprising n.n. matchgates 
    and single-qubit gates on the first line, 
    with a product state input and computational basis measurements 
    on any subset of qubit lines, 
    is efficiently classically simulable. 
\end{corollary}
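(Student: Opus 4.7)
The plan is to deduce the corollary directly from the preceding simulation theorem, which already covers circuits built from displaced Gaussian unitaries applied to a displaced Gaussian input, followed by computational basis measurements on arbitrary subsets of lines. It therefore suffices to verify that each ingredient of the restricted circuit model falls within that framework: (i) product state inputs are displaced Gaussian, (ii) nearest-neighbor matchgates are displaced Gaussian unitaries, and (iii) arbitrary single-qubit gates on the first line are displaced Gaussian unitaries. Once these three facts are in hand, the corollary is immediate.

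Items (i) and (ii) are essentially free. Item (i) is exactly Proposition~\ref{prp:productsAreDg}. Item (ii) follows by unpacking definitions: an n.n. matchgate is generated by a purely quadratic Majorana Hamiltonian, so it lies in the subgroup $G(n)\subset DG(n)$ corresponding to $d=0$ in equation~\ref{def:DGMain}.

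The only substantive check is item (iii), which needs a short Jordan--Wigner computation. Under the convention of equation~\ref{eq:jwTransform}, the first-line Paulis satisfy $X_1=\gamma_1$, $Y_1=\gamma_2$, and $Z_1=-i\gamma_1\gamma_2$. Hence any Hermitian single-qubit generator supported on the first line has the form $c_0 I + c_1\gamma_1 + c_2\gamma_2 - ic_3\gamma_1\gamma_2$ with real coefficients, and dropping the $c_0 I$ (an irrelevant global phase) and multiplying by $i$ yields an operator of exactly the form $\tfrac12\gamma^T h\gamma + id^T\gamma$ appearing in equation~\ref{def:DGMain}, with $h\in\mfk{so}(2n,\R)$ supported on the $(1,2)$ block and $d\in\R^{2n}$ supported on its first two components. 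Exponentiating then places every first-line single-qubit unitary in $DG(n)$, and the corollary follows by invoking the preceding theorem.

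I do not expect any real obstacle here; the main subtlety worth highlighting is why the \emph{first} line is privileged: for $j>1$, the Jordan--Wigner images of $X_j$ and $Y_j$ are Majorana products of degree $2j-1$ and $2j$, which are neither linear nor quadratic in the $\gamma_k$, so the corresponding single-qubit rotations do not lie in $DG(n)$ in general. Only on the initial line does the trailing $Z$-string vanish, so that $X_1$ and $Y_1$ yield genuine linear Majorana terms and supply the displacement.
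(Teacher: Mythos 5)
Your proposal is correct and follows essentially the same route as the paper: the corollary is obtained by reducing to the preceding simulation theorem via the three facts that product states are displaced Gaussian (Proposition~\ref{prp:productsAreDg}), n.n.\ matchgates lie in $G(n)\subset DG(n)$, and first-line single-qubit gates lie in $DG(n)$ because $X_1=\gamma_1$, $Y_1=\gamma_2$, $Z_1=-i\gamma_1\gamma_2$ under the Jordan--Wigner convention. Your explicit verification of the third fact (and the remark on why only the initial line avoids the trailing $Z$-string) is a slightly more self-contained version of the argument the paper gives inside the proof of Theorem~\ref{thm:dispDecomposition}, but the logic is the same.
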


Our work extends previous results on the classical simulation of fermionic even Gaussian 
circuits \cite{jozsa2008matchgates,terhal2002classical,brod2016efficient} by extending 
the unitaries to include displaced linear terms. This extension broadens the set of circuits 
which can be efficiently classically simulated, and we further showed that this generalization 
is equivalent to allowing arbitrary single-qubit gates on the first line.

\section{Displaced Gaussian testing}
\label{sec:dgTesting}
Previous works have proposed a channel embedding that maps 
$\rho \in \dgauss(n)$ to $\rho' \in \gauss(n+1)$~\cite{bravyi2005classical,knill2001fermionic}. 
Such embeddings are fundamental to extending even Gaussian properties 
to displaced Gaussian states. 
However, the previously studied embedding channel is not purity-preserving 
as $S(\rho') = S(\rho) + 1$, where $S(\rho)$ denotes the von Neumann entropy in bits. 
This lack of purity invariance poses challenges for extending 
computational protocols that rely on purity, 
such as the fermionic Gaussian test in~\cite{lyu2024fermionic}. 

In this section, we provide the novel construction of a unitary 
embedding, $\mca E:\dgauss(n)\to\gauss(n+1)$. 
We next outline operational protocols for 
distinguishing the efficiently classically simulable displaced 
Gaussian states and unitaries. 

\begin{definition}[even embedding channel]
    The even embedding channel 
    $\mca E:\Cl_{2n}\to \Cl_{2n+2}$ is defined by 
    \leqalign{def:embeddingUnitary}{
        \mca E(\rho) = V(\rho \ot |+\ra \la +|)V^\dag, \quad 
        V = \exp \left(-i \df \pi 4 \gamma_{2n+2}\right)
    }
\end{definition}
In Appendix~\ref{app:unitaryEmbedding}, 
we provide an implementation of 
$V$ using elementary gates. 
Given the $(2n+1)$-sized extended covariance matrix $\tilde \Sigma(\rho)$ of 
an $n$-qubit state $\rho$, by a standard matrix theorem~\cite{zumino1962normal} 
there exists a rotation $R$ such that for each $\lambda_j\in [-1, 1]$, 
\leqalign{}{
    R\, \tilde \Sigma(\rho) R^T = \left[\bigoplus_{j=1}^n \begin{pmatrix}
        0 & -i\lambda_j \\ i\lambda_j & 0 
    \end{pmatrix}\right] \oplus (0). 
}
Block-decompose $R$ into $R_0\in \R^{2n\times 2n}$, $c\in \R$, and 
$s, r\in \R^{2n}$ so that 
$R = \begin{bmatrix}
        R_0 & s \\ r^T & c 
    \end{bmatrix}$. We obtain the following result: 
\begin{theorem}
    \label{thm:unitaryEmbeddingMain}
    An $n$-qubit state $\rho$ is displaced Gaussian 
    iff the $(n+1)$-qubit state $\mca E(\rho)$ is even Gaussian, 
    in which case the covariance matrix of the embedded state is 
    \leqalign{}{
        \Sigma[\mca E(\rho)] = \begin{bmatrix}
            \Sigma(\rho) & -ir & i\mu(\rho) \\ 
            ir^T & 0 & ic \\ -i\mu(\rho)^T & -ic & 0 
        \end{bmatrix} \in \mfk{so}(2n+2, \C). 
    }
\end{theorem}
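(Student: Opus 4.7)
My plan is to decompose $\mca E$ into two composable stages—the tensoring $\rho \mapsto \tau := \rho \ot |+\ra\la+|$ followed by the conjugation $\tau \mapsto V\tau V^\dag$—and to track how the extended covariance matrix evolves at each stage via Theorem~\ref{thm:dgUnitaryActionGaussian}. First, I verify that $V = \exp(-i\pi/4\, \gamma_{2n+2}^{(n+1)})$ is itself a displaced Gaussian unitary on $n+1$ qubits, since its generator is purely linear. A direct computation of $\bar{\Sigma}(\log V)$ shows a single nonzero $2{\times}2$ antisymmetric block in the $(2n+2, 2n+3)$ coordinates of the $(2n+3)$-dimensional extended space; hence $R_V = e^{2\bar{\Sigma}(\log V)}$ acts as a $\pi/2$ rotation in this plane and as the identity elsewhere, mapping the ``extended'' coordinate $e_{2n+3}$ to the Majorana coordinate $e_{2n+2}$ (up to a sign).

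For the forward direction, assuming $\rho \in \dgauss(n)$, the ancilla $|+\ra\la +|$ is a single-qubit product state with Grassmann Fourier transform $e^{\xi_1}$ and is therefore DG (Proposition~\ref{prp:productsAreDg}). I compute $\tilde{\Sigma}(\tau)$ block by block, keeping in mind that $\gamma_{2n+1}^{(n+1)}$ and $\gamma_{2n+2}^{(n+1)}$ carry JW-strings $Z^{\ot n} = (-i)^n \gamma_{[2n]}^{(n+1)}$, so the entries of $\tilde{\Sigma}(\tau)$ involving these Majoranas depend on higher moments of $\rho$—expressible by Proposition~\ref{prp:extWickMain} as Pfaffians of sub-blocks of $\tilde{\Sigma}(\rho)$. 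Applying Theorem~\ref{thm:dgUnitaryActionGaussian} yields $\tilde{\Sigma}(\mca E(\rho)) = R_V\, \tilde{\Sigma}(\tau)\, R_V^T$. The $\pi/2$ rotation transfers the ``extended'' coordinate data $i\mu(\tau)$ into the $(2n+2)$-th Majorana position and zeroes the extended row/column, so $\mu(\mca E(\rho)) = 0$—confirming $\mca E(\rho) \in \gauss(n+1)$. Reading off the upper-left $(2n+2)\times(2n+2)$ block of the rotated matrix then gives the claimed covariance formula: the block $i\mu(\rho)$ in column $2n+2$ is the direct transfer just described, while the parameters $r, c$ appearing in column $2n+1$ encode the higher-moment (kernel-related) data of $\tilde{\Sigma}(\rho)$ that gets picked up from the $Z^{\ot n}$ contraction against $\rho$.

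For the converse, I use the invertibility of $V$: if $\mca E(\rho) \in \gauss(n+1) \subset \dgauss(n+1)$, then $V^\dag \in DG(n+1)$ acts by Theorem~\ref{thm:dgUnitaryActionGaussian} to give $V^\dag \mca E(\rho)\, V = \tau \in \dgauss(n+1)$, and tracing out the ancilla qubit (an operation that preserves displaced Gaussianity via partial Grassmann integration) recovers $\rho \in \dgauss(n)$. The main obstacle is handling the tensor product $\rho \ot |+\ra\la+|$ inside the JW representation, since nontrivial odd-parity components of $\rho$ mix Majorana moments with the $Z^{\ot n}$-strings and complicate the block form of $\tilde{\Sigma}(\tau)$. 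One pragmatic strategy to keep the bookkeeping tractable is to first verify the claim on diagonalized states $\rho_D$ (where $\mu(\rho_D) = 0$ and the formula collapses to a near-direct-sum covariance), and then extend to general $\rho \in \dgauss(n)$ by $DG(n)$-conjugation using the circuit characterization from Theorem~\ref{thm:unifiedCharacterization} together with the covariance rotation rule from Theorem~\ref{thm:dgUnitaryActionGaussian}.
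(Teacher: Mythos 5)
Your plan ultimately converges on the paper's proof: the paper likewise splits $\mca E$ into tensoring with $|+\ra\la+|$ followed by conjugation by $V$ (which, exactly as you compute, acts on the extended covariance as a swap of the $(2n+2)$-th and $(2n+3)$-th subspaces), and it establishes the covariance formula by first handling diagonalized states and then extending to general $\rho$ by $DG(n)$-conjugation --- precisely your closing ``pragmatic strategy.'' The converse is handled slightly differently (the paper argues by contraposition from the forward direction rather than by tracing out the ancilla), but both routes are workable, though your partial-trace step would itself need a one-line justification that restriction of the extended covariance preserves the Wick structure.

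The one place your primary route has a real gap is the direct block-by-block computation of $\tilde\Sigma(\rho\ot|+\ra\la+|)$ for general $\rho\in\dgauss(n)$. You correctly observe that the entries involving $\gamma_{2n+1},\gamma_{2n+2}$ contract $\rho$ against $Z^{\ot n}$-strings and are therefore Pfaffian minors of $\tilde\Sigma(\rho)$ of degree $2n-1$ and $2n$; but the theorem asserts these entries equal $-ir$ and $ic$, the entries of the block-diagonalizing rotation $R$, and that identification is exactly the content to be proven --- it does not follow from saying the column ``encodes the higher-moment data.'' Relatedly, invoking Theorem~\ref{thm:dgUnitaryActionGaussian} on $\tau=\rho\ot|+\ra\la+|$ presupposes $\tau\in\dgauss(n+1)$, which is not automatic for tensor products under the Jordan--Wigner strings (Proposition~\ref{prp:productsAreDg} only covers products of single-qubit states); the paper establishes it via Lemma~\ref{lem:plusAdjGaussian} (adjoining $|+\ra$ to an \emph{even} state, where the Grassmann transform factorizes and the fermionic swap acts as a genuine swap) combined with the decomposition $\rho=U_G^\dag\rho_D U_G$. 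Executing your fallback --- diagonalize, apply the even-state lemma, conjugate by $R\oplus I$ and then by $V$ --- is what simultaneously proves Gaussianity of the embedding and produces $r$ and $c$ in the stated positions, so that step is not optional bookkeeping but the heart of the argument.
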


The even embedding for states has a counterpart for unitaries: 
given an $n$-qubit unitary $U\in \Cl_{2n}$, define 
\begin{eqnarray}
    \tilde U = V (U\otimes I)V^\dag \in \Cl_{2n+2}. 
\end{eqnarray}
The two embeddings are compatible via the equation 
\[ 
    \mca E(U\rho U^\dag) = \tilde U \mca E(\rho) \tilde U^\dag. 
\] 
\begin{lemma}
    \label{lem:unitaryEmbeddingMain}
    An $n$-qubit unitary $U$ is displaced Gaussian iff 
    $\tilde U$ is an $(n+1)$-qubit even Gaussian unitary. 
    Given $U$ as in equation \ref{def:DGMain}, $\tilde U$ 
    is given by 
    \leqalign{}{
        \tilde U = \exp \left(\df 1 2 \gamma^T \tilde h \gamma\right), \quad 
        \tilde h = \begin{bmatrix}
            h & 0_{2n\times 1} & -d \\ 0_{1\times 2n} & 0 & 0 \\ 
            d^T & 0 & 0  
        \end{bmatrix}. 
    }
\end{lemma}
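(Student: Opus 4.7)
The plan is to compute explicitly how conjugation by $V = \exp(-i\pi\gamma_{2n+2}/4)$ acts on the Majorana generators of $\Cl_{2n+2}$ and then apply the result term-by-term to the quadratic Hamiltonian. Expanding $e^{\pm i\pi\gamma_{2n+2}/4} = (I \pm i\gamma_{2n+2})/\sqrt{2}$ using $\gamma_{2n+2}^2 = I$, and exploiting the anticommutation relations $\{\gamma_j,\gamma_{2n+2}\}=0$ for $j \neq 2n+2$, I would establish the conjugation rules $V\gamma_j V^\dag = i\gamma_j\gamma_{2n+2}$ and $V^\dag\gamma_j V = -i\gamma_j\gamma_{2n+2}$ for $j \neq 2n+2$, together with $V\gamma_{2n+2}V^\dag = \gamma_{2n+2}$. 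Structurally, $V$ acts as a half-rotation that exchanges linear Majorana terms for quadratic ones involving the auxiliary generator $\gamma_{2n+2}$.

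For the forward direction, I would use that the $(n+1)$-qubit Jordan-Wigner representative of $\gamma_j \otimes I$ coincides with $\gamma_j$ for $j \leq 2n$, so any displaced Gaussian $U$ lifts to $U \otimes I = \exp(\tfrac{1}{2}\gamma^T h\gamma + id^T\gamma)$ on $n+1$ qubits. Applying the conjugation rules term-by-term: a quadratic term $\gamma_j\gamma_k$ with $j,k \leq 2n$ is preserved, since the two factors of $i\gamma_{2n+2}$ cancel via anticommutation and $\gamma_{2n+2}^2 = I$; each linear term $id_j\gamma_j$ becomes $id_j \cdot i\gamma_j\gamma_{2n+2} = -d_j\gamma_j\gamma_{2n+2}$. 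Collecting terms and identifying the coefficients of $\gamma_a\gamma_b$ with an antisymmetric $(2n+2)\times(2n+2)$ matrix yields exactly the claimed block form for $\tilde h$, establishing that $\tilde U = \exp(\tfrac{1}{2}\gamma^T\tilde h\gamma)$ is even Gaussian.

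For the backward direction, I would start from an arbitrary $\tilde U = \exp(\tfrac{1}{2}\gamma^T\tilde h\gamma)$ with $\tilde h$ real antisymmetric, invoke $U \otimes I = V^\dag\tilde U V$, and conjugate the exponent to obtain
\begin{equation*}
V^\dag\left(\tfrac{1}{2}\gamma^T\tilde h\gamma\right) V = \tfrac{1}{2}\sum_{j,k \leq 2n+1}\tilde h_{jk}\gamma_j\gamma_k - i\sum_{j \leq 2n+1}\tilde h_{j,2n+2}\gamma_j,
\end{equation*}
where notably no $\gamma_{2n+2}$ factors remain explicit. The main obstacle will be to argue that since $(\log U)\otimes I$ has no Majorana support on monomials containing $\gamma_{2n+1}$ in the $(n+1)$-qubit Jordan-Wigner basis, matching coefficients forces $\tilde h_{j,2n+1} = 0$ for all $j \leq 2n$ and $\tilde h_{2n+1,2n+2} = 0$, i.e., the $(2n+1)$-th row and column of $\tilde h$ must vanish. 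Under this constraint the surviving terms assemble into $H = \tfrac{1}{2}\gamma^T h\gamma + id^T\gamma$, with $h$ the top-left $2n \times 2n$ block of $\tilde h$ and $d_j = -\tilde h_{j,2n+2}$, so $U = \exp(H)$ is displaced Gaussian and the stated block structure of $\tilde h$ is simultaneously verified.
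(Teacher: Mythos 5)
Your forward direction coincides with the paper's own proof: the paper likewise observes that conjugation by $V$ realizes the embedding $\phi(\gamma_j)=i\gamma_j\gamma_{2n+2}$, pushes it through the exponential as an algebra homomorphism, and reads off $\tilde h$ from the term-by-term action (quadratic terms fixed, $i d_j\gamma_j\mapsto -d_j\gamma_j\gamma_{2n+2}$). That part is correct, and it is in fact all the paper's proof explicitly establishes.

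The gap is in your converse. The step ``since $(\log U)\otimes I$ has no Majorana support on monomials containing $\gamma_{2n+1}$, matching coefficients forces $\tilde h_{j,2n+1}=0$'' transfers a support condition from the unitary $U\otimes I$ to its exponent, and that inference fails because the logarithm of a unitary is not unique. Concretely, $P=\pi\,\gamma_1\gamma_{2n+1}$ satisfies $\exp(P)=\cos\pi+\sin\pi\,\gamma_1\gamma_{2n+1}=-I=(-I)\otimes I$, which acts trivially on the last qubit even though $P$ has nonzero $\gamma_{2n+1}$ support; so knowing $\exp(P)=U\otimes I$ does not permit coefficient matching against a $\gamma_{2n+1}$-free exponent. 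A repair should work at the level of the group action rather than the generator: $U\otimes I$ commutes with $I^{\otimes n}\otimes Z\propto\gamma_{2n+1}\gamma_{2n+2}$, hence $\tilde U$ commutes with $V\gamma_{2n+1}\gamma_{2n+2}V^\dag=i\gamma_{2n+1}$, so the rotation $\tilde R$ effected by the even Gaussian $\tilde U$ fixes the $(2n+1)$-st axis; one may then choose a generator $\tilde h$ with vanishing $(2n+1)$-st row and column and run your forward computation in reverse. (The paper's written proof silently omits the converse, so supplying it is a genuine addition--but as written your argument for it does not go through.)
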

One important tool in the study of fermionic Gaussian 
states is fermionic convolution~\cite{lyu2024fermionic}. 
The $n$-qubit convolution unitary acting $2n$ qubits is defined by 
\leqalign{def:convUnitary}{
    W = \exp \left(
        \df \pi 4 \sum_{j=1}^{2n} \gamma_j \gamma_{2n+j}
    \right) \in \Cl_{4n}. 
}
The fermionic convolution of two \textit{even} states $\rho, \sigma$ is 
\leqalign{def:fermionicConvolutionMain}{
    \rho \boxtimes \sigma = \tr_2[W(\rho \ot \sigma) W^\dag]. 
}
Here $\tr_2$ denotes partial trace over the registers of $\sigma$. 
Fermionic convolution provides valuable tests for 
Gaussian states and unitaries. 
In the results below, denote by \(\rho_{\mathcal{E}}\) 
the fermionic maximally entangled state as defined in 
\cite[Definition 6]{bravyi2004lagrangian}. The following protocols hold 
if \(|\psi\rangle\) and \(U\) (with \(U \in \mathcal{C}_{2n}\)) 
are both \textit{even} operators: 

\begin{enumerate}
    \item \textbf{Gaussian State Test}: An even state \(|\psi\rangle\) is Gaussian iff 
    the overlap \(\langle \psi | (|\psi\rangle \boxtimes |\psi\rangle) = 1\). This overlap 
    is experimentally accessible through the swap test.
    \[
        |\psi\ra \la \psi| \in \gauss(n) \iff \langle \psi | (|\psi\rangle \boxtimes |\psi\rangle) = 1
    \]
    \item \textbf{Gaussian Unitary Test}: An even unitary \(U\) is Gaussian 
    iff its fermionic Choi state is Gaussian: 
    \[ 
        U\in G(n)\iff (U \otimes I_n)\rho_{\mathcal{E}}(U^\dagger \otimes I_n)\in \gauss(2n). 
    \]
\end{enumerate}
Using Gaussianity-preserving properties 
established in theorem \ref{thm:unitaryEmbeddingMain} and 
lemma \ref{lem:unitaryEmbeddingMain}, 
we obtain tests for 
the classically simulable displaced Gaussian components, 
generalizing the results in \cite{lyu2024fermionic}. 
\begin{theorem}
    For any $n$-qubit $\rho, U\in \Cl_{2n}$: 
    \begin{itemize}
        \item $\rho\in \dgauss(n)$ iff 
        $\mca E(\rho)$ passes the test for even Gaussian states. 
        \item $U\in DG(n)$ iff 
        $V(U\otimes I)V^\dag$ passes the test for even Gaussian unitaries. 
    \end{itemize}
\end{theorem}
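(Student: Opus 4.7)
The plan is to reduce each bullet of the theorem to the corresponding even-case test via the embeddings established in theorem~\ref{thm:unitaryEmbeddingMain} and lemma~\ref{lem:unitaryEmbeddingMain}. In both bullets, the displaced test has the form ``$X$ is displaced Gaussian iff $\mca E(X)$ passes the even Gaussian test,'' so the proof should reduce to a short two-step chain: first convert ``displaced Gaussian'' to ``even Gaussian on one more mode'' via the embedding, then apply the even test quoted from~\cite{lyu2024fermionic}.

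For the state bullet, I would first note that $\mca E$ is unitary conjugation tensored with a fixed pure ancilla, hence purity-preserving: if $\rho = |\psi\rangle\langle\psi|$ then $\mca E(\rho) = |\psi'\rangle\langle\psi'|$ for some $(n+1)$-qubit vector $|\psi'\rangle$. Theorem~\ref{thm:unitaryEmbeddingMain} then gives $\rho \in \dgauss(n) \iff \mca E(\rho) \in \gauss(n+1)$, and the even Gaussian state test supplies the further equivalence $\mca E(\rho) \in \gauss(n+1) \iff \langle \psi'|(|\psi'\rangle \boxtimes |\psi'\rangle) = 1$. Composing gives the displaced state test. The unitary bullet is analogous: lemma~\ref{lem:unitaryEmbeddingMain} gives $U \in DG(n) \iff \tilde U \in G(n+1)$ with $\tilde U := V(U \otimes I)V^\dag$, and the even Gaussian unitary test adds $\tilde U \in G(n+1) \iff (\tilde U \otimes I_{n+1}) \rho_{\mca E} (\tilde U^\dag \otimes I_{n+1}) \in \gauss(2n+2)$, which is precisely the condition that $V(U\otimes I)V^\dag$ passes the even unitary test.

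The only subtle step is verifying that the hypotheses of the even tests are actually met by the embedded operators, since $\boxtimes$ and the even Choi-state characterization are phrased for even objects only. Evenness of $\tilde U$ is immediate from lemma~\ref{lem:unitaryEmbeddingMain}, since the lemma exhibits $\tilde U$ as $\exp(\tfrac{1}{2}\gamma^T \tilde h \gamma)$, which is generated by purely quadratic Majorana terms. Evenness of $\mca E(\rho)$ takes slightly more care, as $V = \exp(-i\tfrac{\pi}{4}\gamma_{2n+2})$ is odd and the ancilla $|+\rangle\langle +|$ is not a parity eigenstate on the last mode; I would verify it by computing the Majorana expansion of $V(\rho \otimes |+\rangle\langle +|)V^\dag$ and checking that conjugation by $V$ absorbs each odd Majorana contribution of $\rho\otimes |+\rangle\langle +|$ into even products of Majoranas on $n+1$ modes, consistent with the explicit covariance formula for $\Sigma[\mca E(\rho)]$ in theorem~\ref{thm:unitaryEmbeddingMain}. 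Once this parity bookkeeping is in hand the theorem reduces to the promised chain of equivalences in each bullet.
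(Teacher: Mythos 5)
Your proposal is correct and follows exactly the route the paper intends: the paper offers no separate proof of this theorem, treating it as an immediate corollary of Theorem~\ref{thm:unitaryEmbeddingMain} and Lemma~\ref{lem:unitaryEmbeddingMain} chained with the even Gaussian tests of~\cite{lyu2024fermionic}, which is precisely your two-step argument. Your added checks---that $\mca E$ preserves purity (the stated motivation for the unitary embedding over the earlier channel embedding) and that $\mca E(\rho)$ and $\tilde U$ are genuinely even operators because $\phi(\gamma_J)$ always lands in the even subalgebra---are correct and fill in hypotheses the paper leaves implicit.
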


\section{Conclusion and discussion}
This work presented tools and results 
on displaced Gaussian operators and their computational significance, 
extending our understanding of fermionic Gaussian computation 
beyond the fermionic parity constraint. 

We characterized displaced Gaussian circuits as 
n.n. matchgates augmented with single-qubit 
gates on the initial line, unified physically and 
computationally motivated characterizations of 
displaced Gaussian states, outlined an efficient classical 
simulation protocol for displaced Gaussian circuits, and 
provided the novel construction of a 
unitary embedding of displaced Gaussian states.
These results generalized the well-studied 
properties of the even Gaussian operators beyond the fermionic 
parity constraint and highlights the significance of 
displaced Gaussian operators. 

Future works may consider extending 
classical simulability to more flexible circuit topologies,
such as allowing single-line gates on non-initial lines. 
Additionally, the complexity of relating displaced and even 
fermionic Gaussian computation motivates further theoretical 
study of particle-number symmetry's role in classical simulation, 
with potential applications to simulation protocols 
inspired by other physical particle types.

\section{Acknowledgments} 
We thank Arthur Jaffe and Liyuan 
Chen for valuable discussions. 
This work is supported in part by the ARO Grant W911NF-19-1-0302 
and the ARO MURI Grant W911NF-20-1-0082. 

\bibliography{reference}{}

\clearpage
\newpage
\onecolumngrid
\starttocentries
\appendix
\tableofcontents
\addtocontents{toc}{\protect\setcounter{tocdepth}{2}}
\addcontentsline{toc}{section}{Appendix}

\bigskip 

\section{Lie theory of Fermionic Gaussian operators}
The Clifford algebra $\Cl_{2n}$ forms a Lie algebra under the 
commutator bracket $[\gamma_J, \gamma_K]$. In this section, 
we examine the subalgebras related to Gaussian and displaced 
Gaussian unitaries, originally defined in~\cite{knill2001fermionic}. 
These constructions are foundational, as they support many of the 
later results and proofs presented in this work. 
\begin{definition}[Gaussian Lie algebras and groups]
    Let $\mfk{dg}(n)$ denote the 
    Lie algebra of quadratic polynomials 
    \begin{eqnarray}
        \mfk{dg}(n) = \{
            a \gamma_j + b \gamma_k \gamma_l\in \Cl_{2n} | 
            \{a, b\} \subset \C, \{j, k, l\}\subset [2n]
        \}, \quad \dim \mfk{dg}(n) = 2n^2 + n. 
    \end{eqnarray}
    Let $\mfk g(n)\subsetneq \mfk{dg}(n)$ denote the 
    Lie subalgebra of $\mfk{dg}(n)$ of 
    quadratic monomials, with dimension $2n^2-n$: 
    \begin{eqnarray}
        \mfk{g}(n) = \{
            a \gamma_j \gamma_k | a\in \C, \{j, k\}\subset[2n], j\neq k
        \}. 
    \end{eqnarray}
    The image of $\mfk g(n), \mfk{dg}(n)$ under the exponential map 
    are the group $G(n)$ of even Gaussian unitaries and 
    $DG(n)\supsetneq G(n)$ of displaced Gaussian unitaries, respectively. 
\end{definition}
The following isomorphism can be verified by direct computation 
of the Lie bracket. 
\begin{proposition}[even Gaussian algebra isomorphism]
    The even Gaussian Lie algebra $\mfk g(n)$ is isomophic to the algebra of 
    antisymmetric matrices $\mfk{so}(2n)$ under 
    $\varphi:\mfk g(n)\to \mfk{so}(2n)$ defined by: 
    \leqalign{def:soBasis}{ 
        \varphi(\gamma_a \gamma_b) = 2s_{ab}, \quad 
        s_{ab} = |a\ra \la b| - |b\ra \la a|, \quad a\neq b. 
    }
\end{proposition}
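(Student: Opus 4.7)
The plan is to verify that $\varphi$ is a Lie algebra isomorphism in three steps: linearity, bijectivity, and bracket preservation. Linearity is built into the definition, since $\varphi$ is specified on the spanning set $\{\gamma_a\gamma_b\}$ and extended $\C$-linearly. For bijectivity, I would note that the antisymmetry $\gamma_a\gamma_b = -\gamma_b\gamma_a$ implies the ordered monomials $\{\gamma_a\gamma_b : 1 \le a<b \le 2n\}$ already span $\mfk g(n)$, and they are linearly independent because they sit inside the orthonormal Majorana basis $\{\gamma_J\}$ of $\Cl_{2n}$. This gives $\dim \mfk g(n) = \binom{2n}{2} = 2n^2 - n$, matching $\dim \mfk{so}(2n)$. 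Since $\{s_{ab} : a<b\}$ is the standard basis of $\mfk{so}(2n)$ and $\varphi$ sends the basis above to nonzero scalar multiples of these, $\varphi$ is a linear isomorphism.

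The substantive step is verifying $\varphi([X,Y]) = [\varphi(X),\varphi(Y)]$, which by bilinearity reduces to the basis pairs $X = \gamma_a\gamma_b$, $Y = \gamma_c\gamma_d$. I would split into cases by $|\{a,b\}\cap\{c,d\}|$. If the supports are disjoint, moving $\gamma_c\gamma_d$ past $\gamma_a\gamma_b$ involves four anticommutations of distinct Majoranas, producing an overall sign $(-1)^4 = +1$, so the Clifford bracket vanishes; in parallel, $s_{ab}s_{cd} = 0 = s_{cd}s_{ab}$ because the rank-one terms have orthogonal supports. If the index sets coincide, then $\gamma_a\gamma_b$ and $\gamma_c\gamma_d$ are parallel (and so are $s_{ab}, s_{cd}$), so both brackets vanish trivially. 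The nontrivial case is exactly one shared index; taking the representative $b=c$ with $a,b,d$ distinct, I would use $\gamma_b^2 = I$ together with a sign flip per swap of distinct Majoranas to compute
\begin{equation*}
    [\gamma_a\gamma_b, \gamma_b\gamma_d] = \gamma_a\gamma_d - \gamma_b\gamma_d\gamma_a\gamma_b = \gamma_a\gamma_d - (-\gamma_a\gamma_d) = 2\gamma_a\gamma_d,
\end{equation*}
so $\varphi$ of the left-hand bracket equals $4\, s_{ad}$. A direct rank-one calculation gives $s_{ab} s_{bd} = |a\ra\la d|$ and $s_{bd} s_{ab} = |d\ra\la a|$, yielding $[s_{ab}, s_{bd}] = s_{ad}$ and hence $[\varphi(\gamma_a\gamma_b), \varphi(\gamma_b\gamma_d)] = [2s_{ab}, 2s_{bd}] = 4 s_{ad}$, matching the previous expression.

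The main obstacle is purely bookkeeping: tracking the signs that arise when rearranging Majoranas, and confirming that the normalization factor $2$ in $\varphi(\gamma_a\gamma_b) = 2 s_{ab}$ is precisely what is needed to absorb the factor of $2$ produced by $\gamma_b^2 = I$ in the single-overlap case. The other orientations of the single-overlap pattern ($a=c$, $a=d$, or $b=d$) reduce to the representative above after relabeling, each contributing an explicit sign that cancels consistently on both sides of the identity.
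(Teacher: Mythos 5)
Your proposal is correct and takes the same route the paper does: the paper simply asserts that the isomorphism "can be verified by direct computation of the Lie bracket," and your case analysis (disjoint, coincident, and singly overlapping index sets), together with the dimension count $\binom{2n}{2}=2n^2-n$ and the normalization check that $[\gamma_a\gamma_b,\gamma_b\gamma_d]=2\gamma_a\gamma_d$ maps to $4s_{ad}=[2s_{ab},2s_{bd}]$, is precisely that computation carried out in full.
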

An embedding is an injective homomorphism of algebras. 
The next embedding introduced in~\cite{knill2001fermionic} 
is central to the constructions we use in this work.
\begin{definition}[Clifford algebra embedding]
    \label{def:clifAlgEmbedding}
    The following map $\phi:\Cl_{2n}\to \Cl_{2n+1}$ is an embedding of the 
    Clifford algebra, extended multiplicatively from the generators 
    according to 
    \begin{eqnarray}
        \phi(\gamma_j) = i\gamma_j\gamma_{2n+1}. 
    \end{eqnarray}
    For an arbitrary basis element $\gamma_J\in \Cl_{2n}$ 
    of degree $|J|=m$, we obtain 
    \leqalign{eq:evenEmbeddingFormula}{ 
        \phi(\gamma_J) = \begin{cases}
            i \gamma_J \gamma_{2n+1} & m\text{ odd,} \\ 
            \gamma_J & m\text{ even}
        \end{cases}. 
    }
\end{definition}
Since $\phi$ acts as an injective homomorphism by extending 
multiplicatively from its action on generators, 
it respects the Lie bracket. 
Consequently, $\phi$, when restricted to $\mfk{dg}(n)$, 
provides a Lie algebra embedding $\mfk{dg}(n) \to \mfk {g}(n+1)$. 
Composing $\varphi$ and $\phi$ yields the 
raw extended covariance matrix (equation~\ref{def:rawExtCovMain}), 
and the following property explains the ubiquity of the extended 
covariance matrix in the theory 
of displaced Gaussian operators. 
\begin{proposition}[$\mfk{dg}(n)$ to $\mfk{so}(2n+1)$ embedding]
    \label{prp:dispAntiSymmetryEmbedding}
    The composition 
    $2\bar \Sigma = \varphi \circ \phi:\mfk{dg}(n)\to \mfk{so}(2n+1)$ 
    satisfying 
    \leqalign{eq:dispAntiSymmetryEmbedding}{
        \quad O = \df 1 2 \gamma^T M \gamma + d^T \gamma \in \mfk{dg}(n) 
        \implies 2\bar \Sigma(O)
        = 2 \begin{bmatrix}
            M & id \\ -id^T & 0 
        \end{bmatrix} \in \mfk{so}(2n+1, \C) 
    }
    is an isomorphism between the antisymmetric Lie algebra
    and the displaced Gaussian algebra. 
\end{proposition}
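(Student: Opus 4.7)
The strategy is to directly evaluate $\varphi \circ \phi$ on the two natural generator types of $\mfk{dg}(n)$, read off the claimed matrix form, and then upgrade the resulting injective linear map to an isomorphism by a dimension count. Since $\phi$ is already stated to extend multiplicatively to a Clifford algebra embedding (hence commutator-preserving), and $\varphi$ is established as a Lie algebra isomorphism on quadratic monomials, the composition is automatically a Lie homomorphism; the content reduces to matrix bookkeeping and bijectivity. A small subtlety worth flagging is that although $\phi$ is phrased as landing in $\mfk{g}(n+1)$, its image only involves the $2n+1$ Majorana generators $\gamma_1,\dots,\gamma_{2n+1}$, so $\varphi$ restricts to this subalgebra and takes values in the $\mfk{so}(2n+1,\C)$ block rather than the full $\mfk{so}(2n+2,\C)$.

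First I would handle the two generator types separately. For a pure quadratic $\gamma_k \gamma_l$ with $k\neq l$ in $[2n]$, equation \ref{eq:evenEmbeddingFormula} gives $\phi(\gamma_k \gamma_l) = \gamma_k \gamma_l$, and equation \ref{def:soBasis} yields $(\varphi\circ\phi)(\gamma_k\gamma_l) = 2s_{kl}$, contributing antisymmetric entries to the upper-left $2n\times 2n$ block. For a linear generator $\gamma_j$ with $j\in[2n]$, equation \ref{eq:evenEmbeddingFormula} gives $\phi(\gamma_j) = i\gamma_j \gamma_{2n+1}$, which is a quadratic monomial in the $(2n+1)$-Majorana system, so $(\varphi\circ\phi)(\gamma_j) = 2i\, s_{j,2n+1}$, contributing $2i$ at position $(j,2n+1)$ and $-2i$ at $(2n+1,j)$.

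Assembling these contributions for an arbitrary $O = \tfrac{1}{2}\gamma^T M \gamma + d^T \gamma$, I would use the identity $\tfrac{1}{2}\gamma^T M \gamma = \sum_{j<k} M_{jk}\gamma_j\gamma_k$ (valid by antisymmetry of $M$) to produce the upper-left block $2M$, last column $2id$, and last row $-2id^T$, exactly matching equation \ref{eq:dispAntiSymmetryEmbedding}. Antisymmetry of the resulting matrix over $\C$ follows from $s_{ab} = -s_{ba}$, with the factors of $i$ guaranteeing the cross-block sign pattern.

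Finally I would establish bijectivity by a dimension count: the definition gives $\dim \mfk{dg}(n) = 2n^2 + n$, and $\dim \mfk{so}(2n+1,\C) = \binom{2n+1}{2} = 2n^2 + n$ as well. Injectivity is inherited from $\phi$ being an embedding and $\varphi$ being an isomorphism of vector spaces onto its restricted target, so an injective linear map between equidimensional spaces is automatically an isomorphism. The only real obstacle is careful tracking of the factors of $i$ in the last row and column together with the scale factor of $2$ arising from $\varphi(\gamma_a\gamma_b) = 2s_{ab}$; this is bookkeeping rather than conceptual difficulty, and the consistency check is simply that the output be antisymmetric with the correct blocks.
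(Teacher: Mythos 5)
Your proposal is correct and follows essentially the same route the paper takes: compute $\varphi\circ\phi$ on the quadratic generators $\gamma_k\gamma_l\mapsto 2s_{kl}$ and linear generators $\gamma_j\mapsto 2i\,s_{j(2n+1)}$, assemble the block matrix, and note that the composition of the bracket-preserving embedding $\phi$ with the isomorphism $\varphi$ is a Lie homomorphism. The paper leaves the bijectivity implicit, so your explicit dimension count $\dim\mfk{dg}(n)=\binom{2n+1}{2}=2n^2+n$ is a welcome (and correct) addition rather than a departure.
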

Observe that the quadratic terms 
$2\bar{\Sigma}(\gamma_j\gamma_k)=2s_{jk}$, 
where $1\leq j<k\leq 2n$, generate rotations between the 
$j, k$-th  subspaces, while the linear terms 
$2\bar{\Sigma}(\gamma_j) = 2i s_{j(2n+1)}$ induce rotations 
between the $j$-th and $(2n+1)$-th 
subspaces.

\section{Wick's formula}
\label{app:extWick}

\begin{definition}[Pfaffian]
    Given an antisymmetric matrix $M\in \mfk{so}(2n, \C)$, 
    the Pfaffian~\cite{jaffe1989pfaffians} of $M$ is defined by
    \begin{eqnarray}
        \Pf(M) = \df 1 {2^n n!} \sum_{\sigma \in S_{2n})} 
        \prod_{l=1}^n M_{\sigma(2l-1), \sigma(2l)}. 
    \end{eqnarray}
    Given an even multi-index $J\subset [n]$ with $|J|=2m$, 
    we denote by $M_{|J}\subset \mfk{so}(|J|, \C)$ the restriction 
    of $M$ to the subspaces indexed by $J$. The component formula 
    and its Pfaffian are, correspondingly 
    \leqalign{eq:restPfaffian}{ 
        \left(M_{|J}\right)_{jk} = M_{J_jJ_k}\implies 
        \Pf\left(M_{|J}\right) = \df 1 {2^m m!} \sum_{\sigma \in S_{2m})} 
        \prod_{l=1}^m M_{J_{\sigma(2l-1)}J_{\sigma(2l)}}. 
    }
\end{definition}

A displaced Gaussian state is completely specified by its extended 
covariance. The following result 
slightly generalizes Wick's formula for 
even Gaussian states~\cite{surace2022fermionic} to the nonzero-mean case. 
It is the anti-commuting counterpart 
of the classical Isserlis-Wick theorem 
for computing the higher-order moments of 
multivariate Gaussian distributions. 

\begin{proposition}[restatement of proposition~\ref{prp:extWickMain}]
    \label{prp:extWick}
    The $J$th moment of a displaced Gaussian state $\rho$, 
    is expressible in terms of 
    $\tilde \Sigma(\rho)$ using the following formula: 
    \leqalign{eq:extWickDefs}{
        \rho_J = \alpha_{|J|} \Pf\left[\tilde \Sigma(\rho)_{|\tilde J}\right], 
        \quad 
        \tilde J = \begin{cases}
            J & |J| \text{ even} \\ 
            J\cup \{2n+1\} & |J| \text{ odd}
        \end{cases}, \quad 
        \alpha_{|J|} = (-i)^{|J|\,\operatorname*{mod}\, 2}
    }
    Here $\tilde \Sigma(\rho)_{|\tilde J}$ 
    denotes the restriction of $\tilde \Sigma(\rho)$ to the subspaces 
    indexed by $\tilde J$ as in equation~\ref{eq:restPfaffian}. 
    Equivalently, 
    \leqalign{eq:wickExpansion}{
        \rho = \df 1 {2^n} \sum_J \rho_J \gamma_J 
        = \df 1 {2^n} \sum_J  \alpha_{|J|}  \Pf\left[
            \tilde \Sigma(\rho)_{|\tilde J}\right]\gamma_J. 
    }
    \begin{proof}
        We begin by denoting the covariance and mean vectors as 
        $\Sigma=\Sigma(\rho)$ and $\mu = \mu(\rho)$, respectively. 
        Expanding the Fourier transform definition, 
        we have: 
        \leqalign{eq:wickIntermediate1}{
            \Xi_\rho(\theta) 
            &= \sum_J \rho_J \theta_J 
            = \sum_{k=1}^{2n} \df 1 {k!} 
            \left(\df 1 2 \theta^T \Sigma \theta + \mu^T\theta\right)^k \\ 
            &= \sum_{k=1}^{n} \df 1 {k!} \left[
                \left(\df 1 2 \theta^T \Sigma \theta\right)^k + 
                k \left(\df 1 2 \theta^T \Sigma \theta
                \right)^{k-1} \left(\mu^T\theta\right)
            \right]. 
        }
        This last equality holds because only two terms in the binomial 
        expansion of each $k$-th power are non-zero. In particular, 
        any term containing more than one factor of $(\mu^T \theta)$ 
        vanishes since $(\mu^T \theta)^2=0$. Hence, we obtain:
        \leqalign{eq:wickIntermediate2}{
            \left(\df 1 2 \theta^T \Sigma \theta 
            + \mu^T\theta\right)^k 
            = \left(\df 1 2 \theta^T \Sigma \theta\right)^k + 
            k \left(\df 1 2 \theta^T \Sigma \theta\right)^{k-1} 
            \left(\mu^T\theta\right). 
        }
        Now, consider the case where $|J|=2m$ (i.e., $J$ has even degree). 
        According to equation~\ref{eq:wickIntermediate1}, 
        the coefficient $\rho_J$ is found as the term in front of 
        $\theta_J$ in the expansion:
        \malign{ 
            \df 1 {2^m m!} \left(\sum_{j, k=1}^{2n} 
            \Sigma_{jk} \theta_j \theta_k\right)^m 
            &= \sum_{|K|=2m} \df 1 {2^m m!} \sum_{\sigma \in S_{2m}} 
            \prod_{l=1}^m \Sigma_{K_{\sigma(2l-1)}K_{\sigma(2l)}}\theta_K \\
            &= \sum_{|K|=2m} \Pf\left(\Sigma_{|K}\right) \theta_K 
            = \sum_{|K|=2m} \Pf\left[\tilde \Sigma(\rho)_{|K}\right] \theta_K. 
        }
        Here, the first equality follows from the combinatorics, 
        the second from equation~\ref{eq:restPfaffian}, 
        and the final line follows by applying 
        equation~\ref{eq:extWickDefs}, noting that the sum is 
        over multi-indices $K \subset [2n]$ of size $|J|=2m$. 
        This completes the proof for even-degree $J$.
        For odd $|J|=2m-1$, we focus on the odd term in 
        equation~\ref{eq:wickIntermediate1}, 
        which expands to $\frac{1}{(m-1)!} \left(\frac{1}{2} 
        \theta^T \Sigma \theta\right)^{m-1}(d^T\theta)$. 
        Here, $\rho_J$ corresponds to the term in front of 
        $\theta_J$ in this expansion. 
        Since $\mu^T\theta$ is the sole odd term in this product, 
        it commutes with every other term. This 
        justifies the introduction of an additional 
        Grassmann variable. Writing 
        $\tilde \theta = (\theta_1, \dots, \theta_{2n}, \theta_{2n+1})$, 
        $\rho_J$ corresponds to the coefficient of 
        $\theta_{\tilde J}$ in:
        \malign{ 
            \frac 1 {(m-1)!} \left(\frac 1 2 \theta^T \Sigma \theta\right)^{m-1}(\mu^T\theta)
            &\mapsto \frac 1 {(m-1)!} \left(\frac 1 2 \tilde \theta^T \begin{bmatrix}
                \Sigma & 0_{2n\times 1} \\ 0_{1\times 2n} & 0 
            \end{bmatrix} \tilde \theta\right)^{m-1} \left(\df 1 2 \theta^T \mu \cdot \theta_{2n+1} 
            - \df 1 2 \theta_{2n+1}\cdot \mu^T \theta\right) \\ 
            &= (-i) \frac 1 {(m-1)!} \left(\frac 1 2 \tilde \theta^T \begin{bmatrix}
                \Sigma & 0_{2n\times 1} \\ 0_{1\times 2n} & 0 
            \end{bmatrix} \tilde \theta\right)^{m-1} \left(\df 1 2 \tilde \theta^T \begin{bmatrix}
                0_{2n\times 2n} & i\mu \\ -i\mu^T & 0 
            \end{bmatrix}\tilde \theta\right)\\ 
            &= (-i) \sum_{|K|=2m, 2n+1\in K} \Pf\left[\tilde \Sigma(\rho)_{|K}\right] \tilde \theta_K. 
        }
        This completes the proof that for odd-degree $J$ that 
        $\rho_J = (-i) \Pf\left[\tilde \Sigma(\rho)_{|\tilde J}\right]$. 
    \end{proof}
\end{proposition}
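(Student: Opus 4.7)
The plan is to expand the Gaussian Fourier transform $\Xi_\rho(\theta) = \exp(\tfrac{i}{2}\theta^T M \theta + d^T \theta)$ in the Grassmann algebra and read off the coefficient of each $\theta_J$, matching it against $\Pf[\tilde \Sigma(\rho)_{|\tilde J}]$. The key structural observation that makes this tractable is that $\mu^T\theta$ is an odd Grassmann element, so $(\mu^T\theta)^2=0$. Consequently, the binomial expansion of $(\tfrac{1}{2}\theta^T \Sigma \theta + \mu^T\theta)^k$ collapses to exactly two surviving terms: the pure quadratic $k$-th power, and the cross term $k(\tfrac{1}{2}\theta^T\Sigma\theta)^{k-1}(\mu^T\theta)$. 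This cleanly separates the calculation into an even-degree contribution and an odd-degree contribution.

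For $|J|=2m$ even, only the pure quadratic term contributes to the coefficient of $\theta_J$. Expanding $(\tfrac{1}{2}\theta^T\Sigma\theta)^m$ and collecting coefficients using the antisymmetry of $\Sigma$ gives precisely the Pfaffian formula $\Pf(\Sigma_{|J})$ by the combinatorial definition of the Pfaffian; since the restriction of $\tilde\Sigma$ to indices in $[2n]$ agrees with $\Sigma$ itself, this matches $\Pf[\tilde\Sigma(\rho)_{|J}]$ with $\alpha_{|J|}=1$, as required.

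The harder case is $|J|=2m-1$ odd, and this is where I expect the main obstacle: the raw formula produces a term linear in $\mu$, which is not yet in Pfaffian form. The trick I would use is to introduce an auxiliary Grassmann generator $\theta_{2n+1}$ and rewrite $\mu^T\theta = (\mu^T\theta)\cdot\theta_{2n+1}\cdot\theta_{2n+1}^{-1}$-style maneuvers—concretely, observe that by symmetrizing, $\mu^T\theta\cdot \theta_{2n+1}$ can be expressed as $\tfrac{1}{2}\tilde\theta^T B \tilde\theta$ where $B$ is the antisymmetric block with $\pm i\mu$ in the last row/column, matched by a factor of $-i$ out front. After this substitution, $(\tfrac{1}{2}\theta^T\Sigma\theta)^{m-1}(\mu^T\theta)$ becomes $-i$ times $(\tfrac{1}{(m-1)!})(\tfrac{1}{2}\tilde\theta^T \tilde\Sigma \tilde\theta)^{m-1}(\tfrac{1}{2}\tilde\theta^T \tilde\Sigma\tilde\theta)$ restricted to terms containing $\theta_{2n+1}$, which combinatorially gives $\tfrac{1}{m!}(\tfrac{1}{2}\tilde\theta^T\tilde\Sigma\tilde\theta)^m$ projected onto $\tilde J$-multi-indices. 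Applying the Pfaffian definition to $\tilde\Sigma_{|\tilde J}$ yields the claimed expression with prefactor $\alpha_{|J|}=-i$.

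The bookkeeping obstacles are twofold: ensuring the $-i$ prefactor tracks correctly through the antisymmetrization of the bordered block (since $\tilde\Sigma$ has $i\mu$ rather than $\mu$ in its last column), and verifying that restricting the quadratic form on $\tilde\theta$ to monomials containing $\theta_{2n+1}$ reproduces exactly the combinatorics of the Pfaffian restricted to $\tilde J$. Both reduce to careful permutation-sign counting rather than any new ideas; once the even case is established as a warm-up, the odd case is a direct extension via this auxiliary-mode trick.
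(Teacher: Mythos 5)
Your proposal is correct and follows essentially the same route as the paper's proof: expand the Gaussian exponential, use $(\mu^T\theta)^2=0$ to reduce each power to a pure-quadratic term plus a single cross term, read off the even-degree coefficients as $\Pf(\Sigma_{|J})$, and handle the odd case by adjoining an auxiliary generator $\theta_{2n+1}$ so that $\mu^T\theta\cdot\theta_{2n+1}$ becomes the bordered antisymmetric block of $\tilde\Sigma$ with the prefactor $-i$. The only caveat is that the loose phrase about ``$\theta_{2n+1}^{-1}$-style maneuvers'' should be dropped (Grassmann generators are not invertible), but the concrete symmetrization you describe immediately afterward is exactly what the paper does.
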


\section{Conjugate action of $DG(n)$}
\label{app:unitaryAction}
The first step in understanding displaced Gaussian operators 
is characterizing the conjugate action of $DG(n)$. 
We begin by identifying this conjugate action on $\mfk{dg}(n)\subset \Cl_{2n}$, 
which turns out to be a rotation of the quadratic polynomials. 
We next show that this property tensorizes properly, 
yielding a compact form for conjugation on $\Cl_{2n}$. 
Finally, we derive the conjugate action of displaced Gaussian 
unitaries on displaced Gaussian states. 

\begin{lemma}[displaced Gaussian action on $\mfk{dg}(n)$]
    \label{lem:dispUnitaryLieAlgebraEffect}
    Given $O\in \mfk{dg}(n)$ and displaced Gaussian unitary $U\in DG(n)$ 
    \begin{eqnarray}
        O = \df 1 2 \gamma^T M \gamma + v^T\gamma, \quad 
        U = e^B, \quad B = \df 1 2 \gamma^T h \gamma + i d^T\gamma \in \mfk{dg}(N). 
    \end{eqnarray}
    The conjugate action of $U$ on $O$ satisfies, for $\log U=B \in \mfk{dg}(n)$, 
    \begin{eqnarray}
        \tilde \Sigma(UOU^\dag) 
        = R \, \tilde \Sigma(O) R^T, \quad R = e^{2\bar \Sigma(\log U)}.  
    \end{eqnarray}
    We can expand the definitions to obtain the explicit formula: 
    \leqalign{}{
        UOU^\dag = \df 1 2 \gamma^T A \gamma + u^T\gamma 
        \quad \text{where}\quad 
        \begin{bmatrix}
            A & iu \\ -iu^T & 0 
        \end{bmatrix} = R \begin{bmatrix}
            M & iv \\ -iv^T & 0 
        \end{bmatrix} R^T, \quad 
        R = \exp \left(2 \begin{bmatrix}
            h & -d \\ d^T & 0 
        \end{bmatrix}\right). 
    }
    Setting $M=0$ and $d=0$ specializes 
    to the known action of even Gaussian unitaries 
    (\cite[theorem 3]{jozsa2008matchgates}). 
    \begin{proof}
        Using the Baker-Campbell-Hausdorff formula, 
        conjugation is determined by the Lie bracket on $\mfk{dg}_n$: 
        \malign{
            U O U^\dag 
            &= e^B O e^{-B}
            = O + [B, O] + \frac{1}{2!} [B, [B, O]] 
            + \frac{1}{3!} [B, [B, [B, O]]] + \dots \\ 
            &= \sum_{n=0}^{\infty} \frac{1}{n!} \text{ad}_{B}^n(O), \quad 
            \text{ad}_{B}(O) = [B, O]. 
        }
        Using the compatibility of the adjoint map 
        with $2\bar \Sigma$ (proposition~\ref{prp:dispAntiSymmetryEmbedding}) yields 
        \leqalign{eq:adjointEquation}{
            2\bar \Sigma(UOU^\dag)
            &= \sum_{n=0}^{\infty} 
                \frac{1}{n!} \text{ad}_{2\bar \Sigma(B)}^n[2\bar \Sigma(O)]
            = e^{2\bar\Sigma(B)} 2\bar \Sigma(O) e^{-2\bar \Sigma(B)}. 
        }
    \end{proof}
\end{lemma}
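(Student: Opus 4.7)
The plan is to reduce the conjugation $O \mapsto UOU^\dagger$ on $\mathfrak{dg}(n)$ to a matrix conjugation on $\mathfrak{so}(2n+1)$ via the Lie algebra isomorphism $2\bar\Sigma$ of Proposition~\ref{prp:dispAntiSymmetryEmbedding}, and then read off the rotation $R$. Since every $U \in DG(n)$ is of the form $e^B$ for some $B \in \mathfrak{dg}(n)$, I will apply the Hadamard identity $e^{B} O e^{-B} = \sum_{k\geq 0} \frac{1}{k!}\mathrm{ad}_B^k(O)$ inside the Clifford algebra; the series converges since $\mathrm{Cl}_{2n}$ is finite dimensional.

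First I would verify that $UOU^\dagger$ stays inside $\mathfrak{dg}(n)$. Because $\mathfrak{dg}(n)$ is closed under the commutator, each iterated bracket $\mathrm{ad}_B^k(O)$ lies in $\mathfrak{dg}(n)$, so the whole sum does as well, and hence $\tilde\Sigma(UOU^\dagger)$ is well-defined. Next, I would transport the series under $2\bar\Sigma$. By Proposition~\ref{prp:dispAntiSymmetryEmbedding} this map is a Lie algebra isomorphism, so it intertwines adjoints: $2\bar\Sigma(\mathrm{ad}_B^k(O)) = \mathrm{ad}_{2\bar\Sigma(B)}^k(2\bar\Sigma(O))$. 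Linearity and finite-dimensionality let me exchange $2\bar\Sigma$ with the sum, and the resulting matrix series resums by the ordinary matrix Hadamard lemma to $e^{2\bar\Sigma(B)}\,2\bar\Sigma(O)\,e^{-2\bar\Sigma(B)}$. Since $h$ is real antisymmetric and $d$ is real, $2\bar\Sigma(B) \in \mathfrak{so}(2n+1,\mathbb{R})$, so $R := e^{2\bar\Sigma(\log U)}$ lies in $SO(2n+1)$ with $R^{-1} = R^T$. This gives $2\bar\Sigma(UOU^\dagger) = R\,(2\bar\Sigma(O))\,R^T$, and rescaling by $2^n$ converts the raw extended covariance to the non-raw one, yielding $\tilde\Sigma(UOU^\dagger) = R\,\tilde\Sigma(O)\,R^T$.

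To recover the explicit block form in the statement, I would compute $\bar\Sigma(B)$ for $B = \tfrac{1}{2}\gamma^T h\gamma + id^T\gamma$ by a direct moment calculation. Using the antisymmetry of $h$ to collect terms and reading off the first and second Majorana moments gives $\Sigma(B) = 2^n h$ and $\mu(B) = 2^n id$, so that $\bar\Sigma(B) = \begin{bmatrix} h & -d \\ d^T & 0 \end{bmatrix}$. Exponentiating $2\bar\Sigma(B)$ then reproduces the displayed block exponential for $R$, and setting $d=0$ and restricting to the top-left $2n\times 2n$ block recovers the standard rotation formula for even Gaussian unitaries as a sanity check.

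The main obstacle I anticipate is the Lie-bracket compatibility encoded in Proposition~\ref{prp:dispAntiSymmetryEmbedding}, on which the entire argument hinges. Were it not available as a black box, I would have to verify directly that the Knill embedding $\phi(\gamma_j) = i\gamma_j\gamma_{2n+1}$ respects the mixed linear-quadratic bracket $[\gamma_j,\gamma_k\gamma_l]$ once composed with the $\mathfrak{g}(n)\cong \mathfrak{so}(2n)$ isomorphism; this is the only case not already implied by the even Gaussian theory, since linear-linear brackets close on the quadratic sector and quadratic-quadratic brackets are handled by the standard $\mathfrak{g}(n)$ story. Establishing this one mixed compatibility is the real technical content underlying the lemma, after which everything else is a formal manipulation of exponentials.
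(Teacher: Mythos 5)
Your proposal is correct and follows essentially the same route as the paper: expand $e^B O e^{-B}$ via the Hadamard/BCH series, push it through the Lie algebra isomorphism $2\bar\Sigma$ of Proposition~\ref{prp:dispAntiSymmetryEmbedding} to get a matrix adjoint series, and resum to $R\,(2\bar\Sigma(O))\,R^T$ with $R = e^{2\bar\Sigma(\log U)}\in SO(2n+1)$. The extra details you supply (closure of $\mathfrak{dg}(n)$ under the bracket, the explicit block form of $\bar\Sigma(B)$, and the identification of the mixed linear--quadratic bracket as the only genuinely new compatibility to check) are all consistent with, and slightly more careful than, the paper's own argument.
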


\begin{lemma}
    \label{lem:evenRotation}
    Given $U=\exp\left(\gamma^T h \gamma / 2\right)\in G(n)$, 
    its conjugate action 
    affects a Majorana basis element $\gamma_J$ 
    by antisymmetrized rotation within the degree-$m$ subspace: 
    \leqalign{eq:evenGaussianBasisAction}{
        U\gamma_J U^\dag 
        = \sum_{|K|=|J|} R_{\hat K, J} \gamma_K, \quad R = e^{2h}. 
    }
    Here the sum is over all size-$m$ multi-indices (sorted subsets 
    of $[2n]$), and $R_{\hat K, J}$ denotes the following 
    antisymmetrized product defined for $|J| = |K| = m$
    (for example, $R_{\widehat{[a, b]}, [c, d]} = R_{ac}R_{bd} - R_{bc}R_{ad}$): 
    \leqalign{eq:antisymRotation}{
        R_{\hat K, J} = \sum_{\sigma \in S_m} \mrm{sgn}(\sigma) 
        \prod_{i=1}^{|J|} R_{K_{\sigma(i)}, J_i}. 
    }
    \begin{proof}
        Substituting $M, d=0$ in 
        lemma~\ref{lem:dispUnitaryLieAlgebraEffect}
        shows that $U\in G(n)$ affects a rotation of the generators: 
        \leqalign{eq:evenGaussianGeneratorAction}{
            U \gamma_j U^\dag 
            = \sum_{k=1}^{2n} R_{kj} \gamma_k, 
            \quad R = e^{2h}
            , \quad \forall j\in [2n]. 
        }
        To see how equation~\ref{eq:evenGaussianGeneratorAction} 
        implies~\ref{eq:evenGaussianBasisAction}, expand 
        \malign{
            U\gamma_J U^\dag 
            &= \prod_{j=1}^{m} U\gamma_{J_j} U^\dag 
            = \prod_{j=1}^{m} \left(\sum_{k=1}^{2n} R_{k, J_j}\gamma_k\right) \\ 
            &= \left(
                \sum_{K_1=1}^{2n} R_{K_1, J_1} \gamma_{K_1}
            \right) \cdots \left(
                \sum_{K_m=1}^{2n} R_{K_m, J_m} \gamma_{K_1}
            \right) \\ 
            &= \sum_{K_1=1}^{2n}\cdots \sum_{K_m=1}^{2n} 
            \left(\prod_{i=1}^m R_{K_i, J_i}\right) \gamma_K 
            = \sum_{|K|=m} R_{\hat K, J}\gamma_K. 
        }
        In the last equality, the $(K_1, \cdots K_m)$ cannot contain duplicates 
        due to the orthogonality of $R$ and the elements of $J$ being distinct: 
        it is insightful to investigate the $m=2$ case, 
        in which case the duplicate terms sum to 
        \[
            \sum_{k=1}^{2n} R_{k, J_1}R_{k, J_2} \gamma_k^2 = 2I (R^TR)_{J_1, J_2} = 0. 
        \]
        Consequently, the sum over distinct $(K_1, \cdots, K_m)$ can be decomposed 
        into a sum over sorted $(K_1, \cdots, K_m)$ together with a sum over all 
        permutations over $S_m$, yielding the desired equation. 
    \end{proof}
\end{lemma}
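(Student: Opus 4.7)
The plan is to reduce the claim to the single-generator transformation law already established in Lemma~\ref{lem:dispUnitaryLieAlgebraEffect} and then propagate it through the product structure of $\gamma_J = \gamma_{J_1}\gamma_{J_2}\cdots\gamma_{J_m}$. Specializing that lemma to $d=0$ and to the quadratic input corresponding to a single generator yields
\[
U\gamma_j U^\dag = \sum_{k=1}^{2n} R_{kj}\,\gamma_k, \qquad R = e^{2h},
\]
which is the key ingredient and is the known action of even Gaussian unitaries on Majorana operators recalled in equation~\ref{thm:evenGeneratorRotMain}.

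Next I would insert $U^\dag U = I$ between consecutive Majorana factors in $\gamma_J$ to obtain
\[
U\gamma_J U^\dag = \prod_{i=1}^{m} U\gamma_{J_i} U^\dag = \sum_{K_1,\ldots,K_m=1}^{2n} \Bigl(\prod_{i=1}^{m} R_{K_i,J_i}\Bigr)\,\gamma_{K_1}\gamma_{K_2}\cdots\gamma_{K_m}.
\]
From here, I would split the tuples $(K_1,\ldots,K_m)$ into those with all distinct entries and those with some repetition. For the distinct case, I would reindex by a sorted multi-index $K = (K_1<\cdots<K_m)$ together with a permutation $\sigma \in S_m$ so that $(K_{\sigma(1)},\ldots,K_{\sigma(m)})$ ranges over all $m!$ orderings; the anticommutation relations give $\gamma_{K_{\sigma(1)}}\cdots\gamma_{K_{\sigma(m)}} = \mrm{sgn}(\sigma)\,\gamma_K$, and summing over $\sigma$ assembles precisely the antisymmetrized coefficient $R_{\hat K,J}$ defined in equation~\ref{eq:antisymRotation}.

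The main obstacle will be showing that tuples with repeated entries contribute zero. Because $h$ is real antisymmetric, $R = e^{2h}$ is orthogonal, so for any $a \neq b$,
\[
\sum_{k=1}^{2n} R_{k,J_a}\, R_{k,J_b} = (R^T R)_{J_a,J_b} = \delta_{J_a,J_b} = 0,
\]
using the distinctness of the entries of $J$. For $m=2$ this immediately kills the diagonal $K_1=K_2$ contribution via $\gamma_k^2 = I$. For general $m$, the cleanest route is either to induct on $m$ by peeling off one Majorana factor at a time, or to observe that the anticommutativity of the Majoranas forces the $R$-product to enter through its totally antisymmetric combination, which automatically vanishes whenever two $K_i$ coincide. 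Once this vanishing is established, the reindexing of the remaining distinct tuples completes the proof.
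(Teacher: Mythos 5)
Your proposal is correct and follows essentially the same route as the paper's proof: specialize Lemma~\ref{lem:dispUnitaryLieAlgebraEffect} to recover the generator rotation, insert $U^\dag U$ to expand $U\gamma_J U^\dag$ into a multi-sum, kill repeated tuples via orthogonality of $R$ and distinctness of the entries of $J$, and reindex the distinct tuples by sorted multi-indices and permutations. If anything, you are slightly more careful than the paper, which only verifies the vanishing of repeated tuples explicitly for $m=2$; your suggested induction (or the Wick-style observation that every contraction term carries a factor $(R^TR)_{J_aJ_b}=\delta_{J_aJ_b}=0$) closes that small gap cleanly.
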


We are now ready to prove the conjugate action of $DG(n)$ on all 
of $\Cl_{2n}$. 
Recall the definition of $\tilde J$ in~\ref{prp:extWick} 
and the antisymmetrized product $R_{\hat L, J}$ in 
equation~\ref{eq:antisymRotation}. The following result 
shows that conjugating $\gamma_J$ by $U$ is equivalent to 
adjoining an additional $\gamma_{2n+1}$ using $\phi$, 
rotating the even basis $\gamma_{\tilde J}$ 
using the antisymmetrized $R$ as in equation~\ref{eq:antisymRotation}, 
then deleting the adjoined mode. 

\begin{theorem}[displaced Gaussian action on $\Cl_{2n}$]
    \label{thm:dispUnitaryWholeAlgebraEffect}
    Given $U\in DG(n)\subset \Cl_{2n}$ effecting $R\in SO(2n+1)$ 
    per lemma~\ref{lem:dispUnitaryLieAlgebraEffect}. 
    its conjugate action affects a Majorana monomial 
    $\gamma_J \in \Cl_{2n}$ by the following equation: 
    \begin{eqnarray}
        U\gamma_J U^\dag = \sum_{|\tilde K|=|\tilde J|} 
        R_{\widehat{\tilde K}, \tilde J} \gamma_K
    \end{eqnarray}
    where the sum is over all multi-indices $K\subset [2n]$ 
    such that $|\tilde K| = |\tilde J|$. 
    \begin{proof}
        Recall equation~\ref{eq:evenEmbeddingFormula} and let 
        $\phi(\gamma_J) = \alpha_{|J|} \gamma_{\tilde J}$, 
        where $\alpha_{|J|} = i^{|J|\mod 2}$. 
        Let $U=e^B\in DG(n)$ with $B$ quadratic and 
        $\tilde U = \phi(e^B) = e^{\phi(B)} \in G(n+1)$, we obtain 
        \malign{
            \phi(U\gamma_J U^\dag) 
            &= \phi \left[\sum_{n=0}^\infty 
            \df 1 {n!} \mrm{ad}^n_B(\gamma_J)
            \right] 
            = \sum_{n=0}^\infty \df 1 {n!} 
            \mrm{ad}^n_{\phi(B)}[\phi(\gamma_J)] 
            = e^{\phi(B)} \phi(\gamma_J) e^{-\phi(B)}
            =\alpha_{|J|} \tilde U \gamma_{\tilde J} \tilde U^\dag. 
        }
        Let $R=e^{2\bar \Sigma(\log U)}\in SO(2n+1)$ be the rotation corresponding 
        to $U$ in lemma~\ref{lem:dispUnitaryLieAlgebraEffect}, and 
        $\tilde R\in SO(2n+2)$ be the rotation corresponding to the even 
        Gaussian unitary $\tilde U$. They are related by 
        \leqalign{eq:dispRotation}{
            \tilde R = \exp \left(2 \begin{bmatrix}
                h & -d \\ d^T & 0 
            \end{bmatrix}\oplus (0)\right) 
            = R \oplus (1) \in SO(2n+2). 
        }
        Invoking lemma~\ref{lem:evenRotation} on the even conjugate action 
        $\tilde U \gamma_{\tilde J}\tilde U^\dag$ yields 
        \malign{ 
            \phi(U\gamma_J U^\dag)
            = \alpha_{|J|} \tilde U \gamma_{\tilde J} \tilde U^\dag
            = \alpha_{|J|}
            \sum_{|\tilde K|=|\tilde J|, \tilde K\subset [2n+2]} 
            \tilde R_{\widehat{\tilde K}, \tilde J} \gamma_{\tilde K}
            = \alpha_{|J|} 
            \sum_{|\tilde K|=|\tilde J|, \tilde K\subset [2n+1]} 
            R_{\widehat{\tilde K}, \tilde J} \gamma_{\tilde K}
        }
        The last equality holds in light of equation~\ref{eq:dispRotation}: 
        we can ignore the $(2n+2)$-th subspace since the conjugated term $\phi(\gamma_J)$ 
        does not contain $\gamma_{2n+2}$ and $\tilde R$ acts trivially on this subspace. 
        Noting that $\alpha_{|J|} = \alpha_{|K|}$  
        for $|\tilde K|=|\tilde J|$ (since $\alpha_J$ only depends on the 
        degree), apply $\phi^{-1}$ to both sides yields the desired relation. 
    \end{proof}
\end{theorem}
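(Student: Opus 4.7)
The plan is to lift the displaced conjugation on $\Cl_{2n}$ to an even Gaussian conjugation on $\Cl_{2n+1}$ via the embedding $\phi$ of definition~\ref{def:clifAlgEmbedding}, apply the even-degree rotation formula in lemma~\ref{lem:evenRotation}, and then pull back. The payoff of this strategy is that $\phi$ is precisely the map that turns the ``extra'' $\gamma_{2n+1}$ appearing in the $\tilde J$ construction into a genuine Majorana generator, so conjugation by $U$ of odd monomials becomes conjugation by an even Gaussian unitary of even monomials, which is already well understood.

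First, I write $U = e^B$ with $B \in \mfk{dg}(n)$ and set $\tilde U = e^{\phi(B)}$. Because $\phi$ is an injective algebra homomorphism it respects the commutator, so expanding the Baker--Campbell--Hausdorff series yields $\phi(U \gamma_J U^\dag) = \tilde U \, \phi(\gamma_J)\, \tilde U^\dag$. Using equation~\ref{eq:evenEmbeddingFormula} this becomes $\alpha_{|J|}\, \tilde U \gamma_{\tilde J} \tilde U^\dag$. Since $\phi(B) \in \mfk{g}(n+1)$, the operator $\tilde U$ is an even Gaussian unitary, and lemma~\ref{lem:evenRotation} applies to give
\begin{equation}
\tilde U \gamma_{\tilde J} \tilde U^\dag = \sum_{|\tilde K| = |\tilde J|} \tilde R_{\widehat{\tilde K}, \tilde J}\, \gamma_{\tilde K},
\end{equation}
where $\tilde R \in SO(2n+2)$ is the rotation induced by $\tilde U$.

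Next, I would identify $\tilde R$ explicitly. The key observation is that $\phi$ sends $\gamma_j \gamma_k \mapsto \gamma_j \gamma_k$ and $\gamma_j \mapsto i \gamma_j \gamma_{2n+1}$, so the quadratic form $\phi(B)$ never involves $\gamma_{2n+2}$; consequently the rotation it generates fixes the $(2n+2)$-th coordinate and has the block form $\tilde R = R \oplus (1)$, where $R = e^{2\bar\Sigma(B)} \in SO(2n+1)$ is exactly the displaced rotation from lemma~\ref{lem:dispUnitaryLieAlgebraEffect}. This identification is the bridge between the even and displaced rotation pictures and I expect it to be the main check, since one has to track that the off-diagonal coupling $\phi(d^T\gamma) = i d^T\gamma\,\gamma_{2n+1}$ produces exactly the $\pm id$ blocks of $2\bar\Sigma(B)$ that appear in equation~\ref{eq:dispAntiSymmetryEmbedding}.

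Finally, I apply $\phi^{-1}$ to both sides. Because $\tilde R$ acts as the identity on the $(2n+2)$-th subspace and $\gamma_{\tilde J}$ contains no $\gamma_{2n+2}$, only multi-indices $\tilde K \subset [2n+1]$ contribute, so the sum restricts to $K \subset [2n]$ with $|\tilde K| = |\tilde J|$. The parity prefactor $\alpha_{|J|}$ cancels against $\alpha_{|K|}$ appearing after $\phi^{-1}(\gamma_{\tilde K}) = \alpha_{|K|}^{-1} \gamma_K$, which is legitimate because the constraint $|\tilde K| = |\tilde J|$ forces $|K|$ and $|J|$ to have the same parity. The anticipated obstacles are purely bookkeeping: keeping track of the $\alpha$ prefactors and verifying that the block structure $\tilde R = R \oplus (1)$ is actually what arises from exponentiating $\phi(B)$; once these are pinned down the theorem drops out directly from the even case.
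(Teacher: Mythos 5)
Your proposal is correct and follows essentially the same route as the paper's proof: lift via the embedding $\phi$ and the BCH identity to $\tilde U\gamma_{\tilde J}\tilde U^\dag$, apply the even rotation lemma with $\tilde R = R\oplus(1)$, and pull back through $\phi^{-1}$ using $\alpha_{|J|}=\alpha_{|K|}$. The block-structure check and the restriction of the sum to $\tilde K\subset[2n+1]$ that you flag as the main points to verify are exactly the steps the paper carries out.
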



To analyze how a displaced Gaussian unitary acts on Gaussian states, 
we need a lemma concerning the Pfaffian. 
\begin{lemma}
    \label{lem:PfaffianRotation}
    Given an antisymmetric covariance matrix 
    $A\in \mfk{so}(2n, \C)$ and a rotation matrix $R\in SO(2n, \R)$, 
    the following equation holds for all $m=1, \dots, n$: 
    \leqalign{eq:pfFormula}{
        \sum_{|K|=|J|=2m} 
        \Pf\left(A_{|J}\right) R_{\hat K, J} \gamma_K 
        = \sum_{|J|=2m} \Pf\left[(RAR^T)_{|J}\right] \gamma_J
    }
    with $R_{\hat K, J}$ as defined in equation~\ref{eq:antisymRotation};  
    the left-hand sum is over even multi-indices $J, K\subset [2n]$ with 
    $|J|=|K|=2m$. 
    \begin{proof}
        It is known that (\cite{jozsa2008matchgates}, theorem 3), 
        given an even Gaussian state $\rho$ and 
        $U = e^B\in G(n)$, the state $U\rho U^\dag$ remains even Gaussian 
        and has the following covariance matrix: 
        \[ 
            \Sigma(U\rho U^\dag) = R\, \Sigma(\rho) R^T, \quad 
            R = e^{2\bar \Sigma(B)}. 
        \] 
        Let $A = \Sigma(\rho)$, 
        apply Wick's formula~\ref{prp:extWick} to $U\rho U^\dag$ 
        to obtain the right-hand side of equation~\ref{eq:pfFormula}: 
        \malign{
            U\rho U^\dag = \df 1 {2^n} \sum_J \Pf\left[
                \Sigma(U\rho U^\dag)_{|J}
            \right] \gamma_J
            = \df 1 {2^n} \sum_{J} \Pf\left[
                (RAR^T)_{|J}
            \right] \gamma_J. 
        }
        To obtain the left-hand side in equation~\ref{eq:pfFormula}, 
        apply lemma~\ref{lem:evenRotation} to each term in the Wick's formula 
        expansion of $\rho$: 
        \malign{
            U\rho U^\dag &= \df 1 {2^n} \sum_J \Pf(A_{|J}) U\gamma_JU^\dag 
            = \df 1 {2^n} \sum_J \Pf(A_{|J}) 
            \sum_{|K|=|J|} R_{\hat K, J} \gamma_K. 
        }
        To conclude the proof, the two expressions for $U\rho U^\dag$ 
        implies the following equation, which must also hold for each 
        degree $|J|=|K|=2m$ separately since addition does not change 
        the degree of a term: 
        \[
            \sum_{J} \Pf\left[
                (RAR^T)_{|J}
            \right] \gamma_J = \sum_J \Pf(A_{|J}) 
            \sum_{|K|=|J|} R_{\hat K, J} \gamma_K. 
        \] 
    \end{proof}
\end{lemma}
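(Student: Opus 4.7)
The plan is to recognize equation~\ref{eq:pfFormula} as a purely algebraic identity in the entries of $A$ and $R$, and to prove it by computing the same physical quantity---$U\rho U^\dagger$ for an even Gaussian state $\rho$ with covariance $A$ and an even Gaussian unitary $U$ effecting rotation $R$---in two different ways, then matching coefficients of $\gamma_J$ degree by degree.

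First, I would fix any antisymmetric $A \in \mfk{so}(2n,\C)$ with $-A^T A \leq I$, so that by the result cited after equation~\ref{def:diagGaussianMain} there exists an even Gaussian state $\rho \in \gauss(n)$ with $\Sigma(\rho)=A$. Choose $U = e^B \in G(n)$ whose action on Majorana generators corresponds to the given rotation $R = e^{2\bar\Sigma(B)}$; since $\bar\Sigma$ restricted to $\mfk g(n)$ surjects onto $\mfk{so}(2n,\R)$, such a $U$ exists for any $R \in SO(2n,\R)$ (and the target identity only involves real rotations in its hypothesis). Now compute $U\rho U^\dagger$ in two ways.

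For the right-hand side, use the known transformation law $\Sigma(U\rho U^\dagger) = R A R^T$ for even Gaussian unitaries (the $M=d=0$ specialization of lemma~\ref{lem:dispUnitaryLieAlgebraEffect}) together with Wick's formula (proposition~\ref{prp:extWick} specialized to the even case, where $\tilde J = J$ and $\alpha_{|J|}=1$ for even $|J|$); this produces
\[
U\rho U^\dagger \;=\; \frac{1}{2^n}\sum_{J}\Pf\bigl[(RAR^T)_{|J}\bigr]\,\gamma_J.
\]
For the left-hand side, apply Wick's formula first to expand $\rho = \frac{1}{2^n}\sum_J \Pf(A_{|J})\gamma_J$, then conjugate term by term using lemma~\ref{lem:evenRotation} to replace each $U\gamma_J U^\dagger$ by $\sum_{|K|=|J|}R_{\hat K,J}\gamma_K$. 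Equating the two Majorana expansions of $U\rho U^\dagger$ and isolating terms of degree $|J|=|K|=2m$ (addition preserves degree) yields precisely the claimed identity~\ref{eq:pfFormula}.

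The one subtlety to address is that the above construction presumes $A$ is a valid even Gaussian covariance, i.e.\ $-A^T A \leq I$, whereas the lemma is stated for arbitrary $A \in \mfk{so}(2n,\C)$. Both sides of~\ref{eq:pfFormula} are polynomials in the entries of $A$ (of degree $m$) and of $R$, and the set of antisymmetric $A$ with $-A^TA \leq I$ contains an open ball around $0$, hence is Zariski dense in $\mfk{so}(2n,\C)$. A polynomial identity that holds on a Zariski-dense subset holds identically, so the restricted proof extends to all antisymmetric $A$. This density argument is the only nontrivial point; the rest is bookkeeping of Wick expansions and the antisymmetrized rotation action already in place from lemmas~\ref{lem:dispUnitaryLieAlgebraEffect} and~\ref{lem:evenRotation}.
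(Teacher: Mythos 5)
Your proposal is correct and follows essentially the same route as the paper's proof: compute $U\rho U^\dagger$ once via the covariance transformation law plus Wick's formula, and once by conjugating the Wick expansion of $\rho$ term by term via lemma~\ref{lem:evenRotation}, then match coefficients degree by degree. Your closing Zariski-density argument is a genuine improvement in rigor: the paper's proof silently assumes every antisymmetric $A$ is realizable as $\Sigma(\rho)$ for some state, which holds only when $-A^TA \leq I$, and your observation that both sides of equation~\ref{eq:pfFormula} are polynomial in the entries of $A$ correctly extends the identity to all of $\mfk{so}(2n,\C)$ as the lemma claims.
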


\begin{theorem}[restatement of theorem~\ref{thm:dgUnitaryActionGaussian}]
    \label{thm:dgStateAction}
    Applying a displaced Gaussian unitary $U\in G(n)$ 
    to  $\rho \in \dgauss(n)$ 
    results in a displaced Gaussian state with the extended 
    covariance matrix 
    \leqalign{eq:dgStateAction}{ 
        \tilde \Sigma(U\rho U^\dag) = R \, \tilde \Sigma(\rho) R^T\text{ where }
        R = e^{2\bar \Sigma(\log U)}.  
    }
    Here $\log U\in \mfk{dg}(n)$ is proportional to the quadratic Hamiltonian 
    generating $U$, and $\bar \Sigma$ is defined in equation~\ref{def:rawExtCovMain}. 
    \begin{proof}
        The first equality below follows from applying 
        Wick's formula~\ref{prp:extWick} to $U\rho U^\dag$. 
        It remains to demonstrate the second inequality in 
        \malign{
            \df 1 {2^n} \sum_{J\subset [2n]} \alpha_J 
            \Pf \left[\tilde \Sigma(U\rho U^\dag)_{|\tilde J}\right] \gamma_{J}
            = U\rho U^\dag 
            = \df 1 {2^n} \sum_{J\subset [2n]} \alpha_J 
            \Pf \left[(R \tilde \Sigma(\rho) R^T)_{|\tilde J}\right] \gamma_{J}. 
        }
        To do so, apply theorem~\ref{thm:dispUnitaryWholeAlgebraEffect} 
        to each term in the Wick expansion of $\rho$ to obtain 
        \malign{
            U\rho U^\dag &= \df 1 {2^n} 
            \sum_J \alpha_J \Pf[\Sigma(\rho)_{\tilde J}] U\gamma_JU^\dag  
            = \df 1 {2^n} 
            \sum_J \alpha_J \Pf[\Sigma(\rho)_{\tilde J}] 
            \sum_{|\tilde K|=|\tilde J|} 
            R_{\widehat {\tilde K}, \tilde J} \gamma_{K}. 
        }
        Noting that $\alpha_J$ only depends on $|J|$, 
        invoking lemma~\ref{lem:PfaffianRotation} 
        with $(\tilde J, \tilde K)$ in place of $(J, K)$ 
        concludes the proof: 
        \malign{
            U\rho U^\dag &= \df 1 {2^n}  
            \sum_J \sum_{|\tilde K|=|\tilde J|}
            \alpha_J \Pf[\Sigma(\rho)_{\tilde J}] 
            R_{\widehat {\tilde K}, \tilde J} \gamma_{K}
            = \df 1 {2^n} \sum_J \alpha_J \Pf \left[
                (R\, \Sigma(\rho) R^T)_{|J}
            \right]\gamma_J. 
        }
    \end{proof}
\end{theorem}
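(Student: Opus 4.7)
The plan is to verify the transformation law at the level of Majorana moments and then invert the Fourier transform to recover the state itself. The starting point is the extended Wick formula (proposition~\ref{prp:extWickMain}), which gives
\[
    \rho = \df{1}{2^n}\sum_{J\subset[2n]} \alpha_{|J|}\, \Pf\bigl[\tilde\Sigma(\rho)_{|\tilde J}\bigr]\gamma_J.
\]
Applying $U(\cdot)U^\dag$ to both sides and invoking theorem~\ref{thm:dispUnitaryWholeAlgebraEffect} replaces each $U\gamma_J U^\dag$ by $\sum_{|\tilde K|=|\tilde J|} R_{\widehat{\tilde K},\tilde J}\gamma_K$, where $R = e^{2\bar\Sigma(\log U)}\in SO(2n+1)$. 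This produces a double sum indexed by $J$ and $K$ that I will then rearrange.

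Next I intend to swap the order of summation so that $\gamma_K$ sits outside and a rotation-weighted sum over $\tilde J$ of the same size as $\tilde K$ sits inside. A small but important observation is that whenever $|\tilde J|=|\tilde K|$, the parities of $|J|$ and $|K|$ coincide, so the scalar $\alpha_{|J|}=\alpha_{|K|}$ factors out of the inner sum cleanly. The remaining inner sum $\sum_{|\tilde J|=|\tilde K|} \Pf[\tilde\Sigma(\rho)_{|\tilde J}]\, R_{\widehat{\tilde K},\tilde J}$ matches exactly the left-hand side of lemma~\ref{lem:PfaffianRotation}, applied with $(\tilde J,\tilde K)$ in place of $(J,K)$ and $A=\tilde\Sigma(\rho)$, and so collapses to the single Pfaffian $\Pf[(R\tilde\Sigma(\rho) R^T)_{|\tilde K}]$. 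Assembling the pieces yields
\[
    U\rho U^\dag = \df{1}{2^n}\sum_{K\subset[2n]} \alpha_{|K|}\, \Pf\bigl[(R\tilde\Sigma(\rho) R^T)_{|\tilde K}\bigr]\gamma_K,
\]
which is precisely the Majorana expansion that Wick's formula assigns to a displaced Gaussian state with extended covariance $R\tilde\Sigma(\rho)R^T$. Since Wick's formula arises from expanding the Gaussian Fourier exponential term by term, equality of Majorana coefficients forces the Fourier transform of $U\rho U^\dag$ to be Gaussian with this new extended covariance, establishing both membership in $\dgauss(n)$ and the claimed transformation law $\tilde\Sigma(U\rho U^\dag)=R\,\tilde\Sigma(\rho)\,R^T$.

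The main obstacle is bookkeeping around the $\tilde J$ convention together with a dimension mismatch in lemma~\ref{lem:PfaffianRotation}, which is stated for even-dimensional $A\in\mfk{so}(2n)$ rather than the odd-dimensional $\tilde\Sigma(\rho)\in\mfk{so}(2n+1)$. The plan is to handle this by trivially embedding $\tilde\Sigma(\rho)\oplus 0\in\mfk{so}(2n+2)$ with $R$ extended to $R\oplus(1)$; any Pfaffian restriction that touches the appended index vanishes, so the identity descends to the odd-dimensional setting without change. I also need to confirm that $R\tilde\Sigma(\rho) R^T$ retains the block structure expected of a genuine extended covariance, namely a $2n\times 2n$ purely imaginary antisymmetric block with a real mean vector in the last row/column and a zero $(2n+1,2n+1)$ entry; this is automatic because antisymmetry forces zero diagonal and $R$ is real, preserving the imaginary character of the entries. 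Once these points are in place the result follows directly from the Wick-formula matching above.
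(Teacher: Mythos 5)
Your proposal is correct and follows essentially the same route as the paper's own proof: expand $\rho$ via the extended Wick formula, push the conjugation through with theorem~\ref{thm:dispUnitaryWholeAlgebraEffect}, and collapse the rotated Pfaffian sum using lemma~\ref{lem:PfaffianRotation} applied to $(\tilde J,\tilde K)$. Your two added remarks --- padding $\tilde\Sigma(\rho)$ to even dimension so that lemma~\ref{lem:PfaffianRotation} literally applies, and noting that matching all Pfaffian-structured moments forces the Fourier transform to be Gaussian --- are points the paper glosses over, and they tighten rather than change the argument.
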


\section{Characterization of displaced Gaussian states}
\label{app:dgStateCharacterization}
In this section, we unify three reasonable definitions of 
displaced Gaussian states and show that they are equivalent.
This is done by first establishing this characterization for
a special diagonalizable subclass of displaced Gaussian states, 
then extending this result by showing that the full set of displaced 
Gaussian states is the orbit of the diagonalizable 
Gaussian states under $DG(n)$. 

\begin{definition}[diagonalized Gaussian states]
    \label{def:diagGaussianState}
    A displaced Gaussian state $\rho_D\in \Cl_{2n}$ is a diagonalized Gaussian 
    state if its extended covariance matrix is block-diagonalized: 
    \malign{
        \tilde \Sigma(\rho_D) = i \left(
            \bigoplus_{j=1}^n \begin{bmatrix}
                0 & \lambda_j \\ -\lambda_j & 0 
            \end{bmatrix}\right) \oplus (0) \in \mfk{so}(2n+1, \C). 
    }
\end{definition}

\begin{lemma}[diagonalizable Gaussians are separable computational basis mixtures]
    \label{lem:diagCircuitRep}
    With $\lambda_1$ as in definition~\ref{def:diagGaussianState}: 
    \malign{
        \rho_D = \bigotimes_{j=1}^n \left(\df{1 + \lambda_j}{2}|0\ra \la 0| 
        + \df{1 - \lambda_j}{2} |1\ra \la 1|\right). 
    }
    \begin{proof}
        Expand the Grassmann expression of $\rho_D(\theta)$ 
        using the covariance matrix:  
        \leqalign{}{
            \rho_D(\theta) 
            &= \exp \left[\df 1 2 \theta^T \Sigma(\rho_D) \theta\right]
            = \exp \left(\sum_{j=1}^n \lambda_j \theta_{2j-1}\theta_{2j}\right)  
            = \bigotimes_{j=1}^n \exp(i \lambda_j \theta_1\theta_2) 
            = \bigotimes_{j=1}^n 1 + i\lambda_j \theta_1\theta_2
        }
        Replacing $\theta_j\mapsto \gamma_j$ and recognizing $i\gamma_1\gamma_2 = Z$ 
        yields the desired equation 
        \leqalign{eq:diagonalizedGaussian}{ 
            \rho_D 
            &= \df 1 {2^n} \bigotimes_{j=1}^n 1 + i\lambda_j \gamma_1\gamma_2 
            =  \bigotimes_{j=1}^n \df{1 + i \lambda_j\gamma_1\gamma_2}{2}
            = \bigotimes_{j=1}^n \df{1 + \lambda_j Z}{2} 
            = \bigotimes_{j=1}^n \left(\df{1 + \lambda_j}{2}|0\ra \la 0| 
            + \df{1 - \lambda_j}{2} |1\ra \la 1|\right)
        }
    \end{proof}
\end{lemma}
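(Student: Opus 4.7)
The plan is to reduce the lemma to a direct Grassmann calculation that exploits the block-diagonal structure of $\tilde\Sigma(\rho_D)$. First I would write the defining Fourier transform $\Xi_{\rho_D}(\eta) = \exp\!\bigl(\tfrac{i}{2}\eta^T\Sigma(\rho_D)\eta + d^T\eta\bigr)$. By Definition~\ref{def:diagGaussianState} the linear term vanishes ($d=0$) and the quadratic form is a direct sum of $n$ independent $2\times 2$ blocks of weight $\lambda_j$, so the exponent decomposes as $\sum_{j=1}^n i\lambda_j\,\eta_{2j-1}\eta_{2j}$. These summands are disjoint even-degree Grassmann monomials and hence mutually commute, so the exponential factorizes site-by-site.

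Second, the nilpotency identity $(\eta_{2j-1}\eta_{2j})^2 = 0$ collapses each local exponential to $1 + i\lambda_j\,\eta_{2j-1}\eta_{2j}$. Inverting the Fourier transform via the Majorana expansion of Proposition~\ref{prp:extWickMain} amounts to substituting $\eta_J \mapsto \gamma_J$ together with the global $1/2^n$ normalization, yielding
\[
\rho_D \;=\; \frac{1}{2^n}\prod_{j=1}^n \bigl(I + i\lambda_j\,\gamma_{2j-1}\gamma_{2j}\bigr).
\]
Since each factor has even Majorana degree, the factors supported on distinct sites commute and this product is literally a tensor product across sites.

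Finally, I would apply the Jordan--Wigner transform of equation~\ref{eq:jwTransform}: the two $Z$-strings inside $\gamma_{2j-1}$ and $\gamma_{2j}$ cancel when multiplied, so that $i\gamma_{2j-1}\gamma_{2j}$ acts as a single Pauli $Z$ on the $j$-th qubit. Each factor therefore becomes $(I + \lambda_j Z_j)/2 = \tfrac{1+\lambda_j}{2}|0\rangle\!\langle 0| + \tfrac{1-\lambda_j}{2}|1\rangle\!\langle 1|$, which is exactly the claimed expression. No serious obstacle is expected; the only bookkeeping hazards are the $2^n$ normalization relating Grassmann moments to the operator reconstruction and the sign convention for $i\gamma_{2j-1}\gamma_{2j}$ as a Pauli, both of which are handled by the Fourier/Wick conventions already fixed in equation~\ref{def:majoranaExpansionMain} and equation~\ref{eq:jwTransform}.
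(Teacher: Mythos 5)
Your proposal is correct and follows essentially the same route as the paper's proof: factorize the Gaussian Grassmann exponential over the $2\times 2$ blocks, use nilpotency to reduce each factor to $1+i\lambda_j\,\eta_{2j-1}\eta_{2j}$, substitute Majorana generators with the $1/2^n$ normalization, and identify $i\gamma_{2j-1}\gamma_{2j}$ with $Z_j$ under Jordan--Wigner. Your explicit remarks on the commutation of the even-degree local factors and the cancellation of the $Z$-strings make the tensor-product step slightly more careful than the paper's, but the argument is the same.
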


\begin{lemma}[diagonalizable Gaussian states as thermal states]
    \label{lem:diagThermalRep}
    Every diagonalizable Gaussian state $\rho_D\in \Cl_{2n}$ as 
    in definition~\ref{def:diagGaussianState} is the 
    thermal (ground) state of a quadratic Hamiltonian $H\in \mfk{g}(n)$: 
    \begin{eqnarray}
        \rho_D = \df{e^{H}}{\tr(e^H)}, \quad 
        H = \df i 2 \gamma^T h \gamma, \quad 
        h = \bigoplus_{j=1}^n \begin{bmatrix}
            0 & \mrm{tanh}^{-1}\lambda_j \\ 
            -\mrm{tanh}^{-1}\lambda_j & 0 
        \end{bmatrix}
    \end{eqnarray}
    \begin{proof}
        Using lemma~\ref{lem:diagCircuitRep}: 
        w.l.o.g. we can consider a single-qubit with 
        $h_1 = \arctan(\lambda_j)(|0\ra \la 1| - |1\ra \la 0|)$, then 
        \malign{
            H_1 = \df i 2 \begin{pmatrix}
                \gamma_1 \\ \gamma_2 
            \end{pmatrix}^T \begin{pmatrix}
                0 & \mrm{tanh}^{-1}\lambda \\ 
                -\mrm{tanh}^{-1}\lambda & 0
            \end{pmatrix} \begin{pmatrix}
                \gamma_1 \\ \gamma_2 
            \end{pmatrix} 
            = (\mrm{tanh}^{-1}\lambda)Z 
        } 
        The thermal state of $H_1$ is 
        $\frac 1 2 \left[
            (1+\lambda)|0\ra \la 0| + (1 - \lambda)|1\ra \la 1|
        \right]$. Pure states with $\lambda = \pm 1$ 
        are defined as the limit. 
    \end{proof}
\end{lemma}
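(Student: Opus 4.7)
The plan is to exploit the tensor product decomposition already established in lemma~\ref{lem:diagCircuitRep}, which shows that $\rho_D = \bigotimes_{j=1}^n (1+\lambda_j Z)/2$. Since $h$ is block-diagonal with $2\times 2$ blocks acting on disjoint pairs of Majorana indices, the Hamiltonian $H$ decomposes as a sum $H = \sum_j H_j$, where each $H_j$ only involves $\gamma_{2j-1}, \gamma_{2j}$, hence acts nontrivially only on the $j$-th qubit after Jordan-Wigner. Because these local terms commute, $e^H = \bigotimes_j e^{H_j}$ and $\tr(e^H) = \prod_j \tr(e^{H_j})$, so the global identity $\rho_D = e^H/\tr(e^H)$ reduces to the single-qubit identity for each mode.

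Next I would verify the single-qubit case. For one mode, setting $\alpha_j = \tanh^{-1}(\lambda_j)$, the quadratic form evaluates to
\[
H_j = \frac{i}{2}\bigl(\alpha_j \gamma_{2j-1}\gamma_{2j} - \alpha_j \gamma_{2j}\gamma_{2j-1}\bigr) = i\alpha_j \gamma_{2j-1}\gamma_{2j} = \alpha_j Z_j,
\]
using the Jordan-Wigner identification $i\gamma_{2j-1}\gamma_{2j} = Z_j$ already invoked in the preceding lemma. Since $Z_j^2 = I$, one has $e^{\alpha_j Z_j} = \cosh(\alpha_j)\, I + \sinh(\alpha_j)\, Z_j$, giving $\tr(e^{H_j}) = 2\cosh(\alpha_j)$ and
\[
\frac{e^{H_j}}{\tr(e^{H_j})} = \frac{I + \tanh(\alpha_j)\, Z_j}{2} = \frac{I + \lambda_j Z_j}{2},
\]
which matches the $j$-th factor in equation~\ref{eq:diagonalizedGaussian}. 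Tensoring over $j$ yields the claim.

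The only genuine subtlety is the pure state case $|\lambda_j| = 1$, where $\tanh^{-1}(\lambda_j)$ diverges so $H$ is not strictly quadratic with real coefficients. I would handle this by taking a limit $\lambda_j \to \pm 1$: the thermal state $\tfrac{1}{2}[(1+\lambda_j)|0\rangle\langle 0| + (1-\lambda_j)|1\rangle\langle 1|]$ remains well-defined and converges to a rank-one projector, which is precisely the ground state of $H_j = \alpha_j Z_j$ in the limit $\alpha_j \to \pm \infty$. This is the only potential obstacle, and it is cleanly resolved by interpreting the pure case as the ground state rather than as a finite-temperature thermal state, consistent with the statement of the lemma which already distinguishes "thermal (ground) state."
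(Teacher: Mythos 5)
Your proof is correct and follows essentially the same route as the paper: reduce to the single-mode case via the block structure of $h$, verify that $H_j = \tanh^{-1}(\lambda_j)\, Z_j$ has normalized Gibbs state $(I+\lambda_j Z_j)/2$, and treat $|\lambda_j|=1$ as a ground-state limit. You fill in the tensorization and the $e^{\alpha Z}=\cosh\alpha\, I+\sinh\alpha\, Z$ computation more explicitly than the paper does (and correctly use $\tanh^{-1}$ where the paper's proof has a stray $\arctan$), but the argument is the same.
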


\begin{lemma}[diagonalizability of displaced Gaussian states]
    \label{lem:diagonalizability}
    For every displaced Gaussian state $\rho\in \Cl_{2n}$, 
    there exists $U\in DG(n)$ such that $U\rho U^\dag$ is 
    a diagonalized Gaussian state. If $\rho$ is an even Gaussian 
    state, then $U$ will be even. 
    \begin{proof}
        By standard matrix theory~\cite{zumino1962normal}, for every 
        $A \in \mfk{so}(m, \R)$ there exists 
        a rotation $R \in SO(m, \R)$ such that $RAR^T$ is block-diagonal. 
        lemma~\ref{lem:dispUnitaryLieAlgebraEffect} shows 
        that the displaced Gaussian unitaries $DG(n)\cong SO(2n+1)$ 
        and that rotations acting trivially on the last subspace 
        are generated by the even Gaussian unitaries. 
        Given a displaced Gaussian state $\rho$, 
        $\tilde{\Sigma}(\rho)$ is antisymmetric 
        thus block-diagonalizable by the conjugate action of some $U\in DG(n)$
        per theorem~\ref{thm:dgStateAction}. 
        If $\rho$ is even, $\tilde \Sigma(\rho)$ is trivial 
        on the last subspace thus block-diagonalizable by $U\in G(n)$. 
    \end{proof}
\end{lemma}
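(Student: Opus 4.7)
The plan is to lift Zumino's classical block-diagonalization theorem for real antisymmetric matrices to the conjugate action of $DG(n)$ on extended covariance matrices, using the isomorphism from Proposition \ref{prp:dispAntiSymmetryEmbedding} together with Theorem \ref{thm:dgStateAction}. The strategy works because $\tilde\Sigma$ linearizes the displaced Gaussian group action, so finding the desired diagonalizing unitary reduces to finding an appropriate orthogonal matrix in $SO(2n+1)$.

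First I would note that, since $\rho$ is Hermitian, the extended covariance $\tilde\Sigma(\rho)$ is purely imaginary (as recorded below equation \ref{def:extCovMatrixMain}), so $-i\tilde\Sigma(\rho)$ lies in $\mathfrak{so}(2n+1,\R)$. Zumino's normal form theorem \cite{zumino1962normal} then produces a rotation $R \in SO(2n+1,\R)$ such that $R\tilde\Sigma(\rho)R^T$ is block-diagonal with $2\times 2$ antisymmetric blocks. Because the ambient dimension $2n+1$ is odd, the normal form is forced to contain at least one trailing zero eigenvalue; by post-composing with a signed coordinate permutation (chosen to have determinant $+1$, so it still lies in $SO(2n+1,\R)$), one can place the zero at position $(2n+1,2n+1)$, matching Definition \ref{def:diagGaussianState} exactly.

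Next, I would invoke Proposition \ref{prp:dispAntiSymmetryEmbedding}, which gives a Lie algebra isomorphism $2\bar\Sigma:\mathfrak{dg}(n)\to\mathfrak{so}(2n+1)$. Combined with connectedness of $SO(2n+1)$, this yields a surjective group homomorphism $DG(n)\twoheadrightarrow SO(2n+1)$, so we can choose $U\in DG(n)$ with $e^{2\bar\Sigma(\log U)}=R$. Theorem \ref{thm:dgStateAction} then gives $\tilde\Sigma(U\rho U^\dag)=R\tilde\Sigma(\rho)R^T$, which is in diagonal normal form, so $U\rho U^\dag$ is a diagonalized Gaussian state. For the refinement to the even case, if $\rho\in\gauss(n)$ then $\mu(\rho)=0$ and $\tilde\Sigma(\rho)=\Sigma(\rho)\oplus(0)$ already vanishes on the last row and column. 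Applying Zumino only to the $2n\times 2n$ block $\Sigma(\rho)$ yields $R_0\in SO(2n,\R)$, and the augmented rotation $R_0\oplus 1 \in SO(2n+1,\R)$ fixes the last coordinate. Under the $2\bar\Sigma$ isomorphism, rotations fixing the final subspace are precisely the image of the purely quadratic subalgebra $\mathfrak{g}(n)\subset\mathfrak{dg}(n)$, so the corresponding $U$ can be chosen in $G(n)$.

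The only genuine subtleties I foresee are bookkeeping issues: making sure Zumino's normal form can always be arranged to place the zero block at the final index rather than somewhere in the interior, and confirming that the Lie-algebra-level isomorphism indeed lifts to a group-level surjection (which follows since both $DG(n)$ and $SO(2n+1)$ are connected, so the exponential covers the whole group). Neither requires delicate computation; the substantive content has already been done by Zumino and by the earlier results on the $2\bar\Sigma$ embedding.
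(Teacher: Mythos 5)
Your proposal is correct and follows essentially the same route as the paper: apply Zumino's block-diagonalization to the antisymmetric extended covariance, lift the resulting rotation in $SO(2n+1)$ to a unitary in $DG(n)$ via the $2\bar\Sigma$ embedding, and invoke Theorem~\ref{thm:dgStateAction}, with the even case handled by restricting to rotations fixing the last subspace. You in fact supply two details the paper's terse proof glosses over — arranging the forced zero eigenvalue at the final index and verifying that the Lie-algebra isomorphism lifts to a group-level surjection — both of which are handled correctly.
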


\begin{theorem}[restatement of theorem~\ref{thm:unifiedCharacterization}]
    \label{thm:dgStateCharacterization}
    A $n$-qubit state $\rho \in \dgauss(n)$ 
    iff any of the following holds:
    \begin{enumerate}
        \item \emph{Thermal state definition}: 
        $\rho$ is the thermal state of 
        some quadratic Hamiltonian $H\in \mfk{dg}(n)$, 
        \begin{eqnarray}
            \rho = \df{e^{H}}{\tr(e^H)}, \quad 
            H = \df i 2 \gamma^T h \gamma + d^T\gamma . 
        \end{eqnarray}
        If $\rho$ is pure, then $\rho$ is the ground state of 
        some quadratic Hamiltonian. 
        \item \emph{Circuit definition}: 
        $\rho$ is the result of a displaced Gaussian unitary 
        acting on a separable computational basis state: 
        \begin{eqnarray}
            \rho = U_G \rho_D U_G^\dag, \quad U_G\in DG(n), \quad 
            \rho_D \text{ diagonal Gaussian.}
        \end{eqnarray}
        \item \emph{Fourier definition}: 
        The Fourier transform of $\rho$ admits a Gaussian expression: 
        \begin{eqnarray}
            \Xi_\rho(\theta) = 
            \exp \left(\df i 2 \theta^T M \theta + d^T\theta\right)
        \end{eqnarray}
    \end{enumerate}
    \begin{proof}
        (3$\implies$1 and 2): given $\rho\in \Cl_{2n}$ with 
        a Gaussian Fourier expression, 
        lemma~\ref{lem:diagonalizability} provides 
        $U_G\in DG(n)$ such that $U\rho U^\dag=\rho_D$ is a 
        diagonalized Gaussian state, from 
        which (2) immediately follows. 
        lemma~\ref{lem:diagThermalRep} implies that 
        $\rho_D \propto e^{H_D}$ for some quadratic $H_D$, 
        thus $\rho = U^\dag \rho U \propto e^{U_G^\dag H_D U_G}$ 
        is the thermal state of the Hamiltonian $U_G^\dag H_D U_G$ 
        which remains quadratic by lemma~\ref{lem:dispUnitaryLieAlgebraEffect}. 
        The implication (2$\implies$3) follows from theorem~\ref{thm:dgStateAction}, 
        and (1$\implies$3) follows from block-diagonalizing $\tilde \Sigma(H)$ 
        using $U_G$ such that $U_G\rho U_G^\dag = \rho_D$, 
        identifying the Fourier expression of $\rho_D$, 
        then applying theorem~\ref{thm:dgStateAction} to 
        $U_G^\dag \rho_D U_G = \rho$. 
    \end{proof}
\end{theorem}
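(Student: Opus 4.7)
The plan is to use the diagonalized Gaussian states (definition~\ref{def:diagGaussianState}) as a hub for all three characterizations and close the equivalence via the chain (Fourier) $\implies$ (Circuit) $\implies$ (Thermal) $\implies$ (Fourier). The key enabling fact is lemma~\ref{lem:diagonalizability}: every Fourier-Gaussian $\rho$ admits some $U_G \in DG(n)$ with $U_G \rho U_G^\dag = \rho_D$ diagonalized. This is complemented by the already-established descriptions of diagonalized states as separable computational basis mixtures (lemma~\ref{lem:diagCircuitRep}) and as thermal states of purely quadratic Hamiltonians (lemma~\ref{lem:diagThermalRep}). The conjugate action of $DG(n)$ characterized in theorem~\ref{thm:dgStateAction} then transports these properties back and forth between $\rho$ and $\rho_D$.

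For (Fourier)$\implies$(Circuit), I would invoke lemma~\ref{lem:diagonalizability} to obtain $U_G$ diagonalizing $\tilde\Sigma(\rho)$, and write $\rho = U_G^\dag \rho_D U_G$; since $DG(n)$ is a group, $U_G^\dag \in DG(n)$, yielding the circuit representation directly. For (Circuit)$\implies$(Thermal), start from $\rho = U_G \rho_D U_G^\dag$; by lemma~\ref{lem:diagThermalRep}, $\rho_D \propto e^{H_D}$ for some $H_D \in \mfk{g}(n)$, so $\rho \propto e^{U_G H_D U_G^\dag}$, and lemma~\ref{lem:dispUnitaryLieAlgebraEffect} guarantees $U_G H_D U_G^\dag \in \mfk{dg}(n)$, i.e.\ still quadratic with possibly nonzero linear part. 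For (Thermal)$\implies$(Fourier), given $\rho \propto e^H$ with $H \in \mfk{dg}(n)$, I would block-diagonalize the associated $\tilde\Sigma(H)$ via some $U_G \in DG(n)$ (again using the $\mfk{dg}(n) \cong \mfk{so}(2n+1)$ embedding of proposition~\ref{prp:dispAntiSymmetryEmbedding}); this reduces $U_G \rho U_G^\dag$ to a diagonalized Gaussian state, whose Fourier transform is manifestly Gaussian by direct expansion, and then theorem~\ref{thm:dgStateAction} transfers this Gaussian structure back to $\rho$ by a rotation of the extended covariance.

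The pure-state addendum in (Thermal) follows by taking the zero-temperature limit $\beta \to \infty$ on the diagonalized rep, which sends each $\lambda_j \to \pm 1$; this limit is compatible with the Gaussian Fourier expression (as noted at the end of lemma~\ref{lem:diagThermalRep}) and commutes with the $U_G$-conjugation. The main obstacle I expect is the bookkeeping in (Circuit)$\implies$(Thermal): one must verify that conjugating a purely quadratic Hamiltonian $H_D$ by a \emph{displaced} Gaussian unitary $U_G$ can produce a genuinely displaced quadratic Hamiltonian with nontrivial linear terms (rather than falling back to $\mfk{g}(n)$). This is exactly what lemma~\ref{lem:dispUnitaryLieAlgebraEffect} ensures, via the explicit rotation formula for the extended covariance block $\begin{bmatrix} M & iv \\ -iv^T & 0 \end{bmatrix}$, so in practice the obstacle reduces to invoking that lemma carefully and noting that the result lies in $\mfk{dg}(n)$ rather than its even subalgebra.
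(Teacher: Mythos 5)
Your proposal is correct and follows essentially the same route as the paper: both arguments reduce everything to diagonalized Gaussian states via lemma~\ref{lem:diagonalizability}, use lemmas~\ref{lem:diagCircuitRep} and~\ref{lem:diagThermalRep} for the diagonalized case, and transport the properties back with theorem~\ref{thm:dgStateAction} and lemma~\ref{lem:dispUnitaryLieAlgebraEffect}. The only cosmetic difference is that you arrange the implications as a cycle $(3)\Rightarrow(2)\Rightarrow(1)\Rightarrow(3)$ while the paper proves $(3)\Rightarrow(1),(2)$ and $(1),(2)\Rightarrow(3)$ separately; the underlying steps are identical.
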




\section{Classical simulation of displaced Gaussian circuits}
\label{app:classicalSimulation}
It has been shown that every $U\in G(n)$ has a decomposition 
into $O(n^3)$ local n.n. matchgates which act on at most two consecutive 
lines \cite[theorem 5]{jozsa2008matchgates}. 
The following result extends this decomposition by 
applying the same technique 
slightly adopted to displaced Gaussian states. 
\begin{theorem}[restatement of theorem~\ref{thm:dispDecomposition}]
    Every $U\in DG(n)$ can be decomposed into the product of 
    $O(n^3)$ gates which are either matchgates 
    or single-qubit gates on the initial 
    line of the Jordan-Wigner transform. 

    \begin{proof}
        Proposition~\ref{prp:dispAntiSymmetryEmbedding} 
        shows that \( DG(n) \cong SO(2n+1) \). 
        Examining the embedding equation for the rotation \( R \),
        \[
            \log R = 2\bar \Sigma\left(\frac{1}{2} \gamma^T C \gamma + d^T\gamma\right)  
            = 2 \begin{bmatrix}
                C & id \\ -id^T & 0 
            \end{bmatrix},
        \]
        we see that the quadratic forms 
        \( \frac{1}{2} \gamma^T C \gamma \) generate 
        rotations within the first \( 2n \) subspaces. 
        Single-qubit gates can be generated using \( R_z \) 
        (a matchgate) along with \( R_x \), where \( R_x(\theta) 
        = \exp(i \theta X/2) = \exp(i\theta \gamma_1/2) \), 
        to create rotations between the first and \( (2n+1) \)-th 
        subspaces under \( 2\bar \Sigma \).
        An element of \( SO(2n+1) \) can be decomposed 
        into \( O(n^2) \) rotations between pairs of subspaces 
        via the method of Euler angles~\cite{hoffman1972generalization}. 
        Further, each rotation between two arbitrary subspaces can be 
        implemented with \( O(n) \) rotations between the 
        \( (j, k) \)-th subspaces (for \( 1 \leq j < k \leq 2n \)) 
        or between the first and \( (2n+1) \)-th subspaces. 
        Thus, the total decomposition requires \( O(n^3) \) gates, 
        each of which is either a matchgate or a single-qubit gate 
        on the initial line of the Jordan-Wigner transform.
    \end{proof}
    
\end{theorem}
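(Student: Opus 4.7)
The plan is to exploit the Lie algebra isomorphism $2\bar\Sigma:\mfk{dg}(n)\to \mfk{so}(2n+1)$ from Proposition~\ref{prp:dispAntiSymmetryEmbedding}, which lifts to a surjection $DG(n)\to SO(2n+1)$. The idea is to realize any target rotation $R\in SO(2n+1)$ as a composition of elementary Givens rotations, and then implement each Givens rotation by a short sequence of allowed gates (n.n. matchgates and single-qubit gates on the first line).

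First I would identify what each allowed gate produces under $2\bar\Sigma$. From the explicit form
\[
    2\bar\Sigma\left(\tfrac{1}{2}\gamma^T M\gamma + d^T\gamma\right)
    = 2\begin{bmatrix} M & id \\ -id^T & 0 \end{bmatrix},
\]
a quadratic generator $\gamma_j\gamma_k$ with $1\le j<k\le 2n$ rotates the $(j,k)$-block inside the first $2n$ subspaces, while a linear generator $\gamma_j$ rotates the $(j,2n+1)$-block. In particular, n.n. matchgates produce Givens rotations between adjacent subspaces in $\{1,\dots,2n\}$, and $R_X(\theta)=\exp(i\theta\gamma_1/2)$ on the initial line produces the Givens rotation between subspaces $1$ and $2n+1$.

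Next I would invoke the standard Euler decomposition to write any $R\in SO(2n+1)$ as a product of $O(n^2)$ two-subspace Givens rotations~\cite{hoffman1972generalization}. It then suffices to implement each such Givens rotation using $O(n)$ allowed gates, giving the claimed $O(n^3)$ bound. For a Givens rotation between subspaces $j,k\in\{1,\dots,2n\}$ this is the bubble-sort reduction already used in the even Gaussian matchgate decomposition of~\cite{jozsa2008matchgates}, which routes the pair $(j,k)$ to an adjacent pair using $O(n)$ n.n.\ matchgates and then applies one adjacent matchgate. For a Givens rotation between subspace $j$ and subspace $2n+1$ with $j\neq 1$, I would conjugate the basic rotation between subspaces $1$ and $2n+1$ (realized by a single-qubit $R_X$) by an even-Gaussian permutation built from $O(n)$ n.n.\ matchgates that moves subspace $j$ to subspace $1$.

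The main obstacle I expect is bookkeeping the interaction between the ``special'' $(2n+1)$-th subspace and the even-Gaussian reductions used to route linear generators to the first line. One must verify that the bubble-sort swaps act purely on the first $2n$ subspaces, so that they correspond under $2\bar\Sigma$ to even Gaussian unitaries and do not inadvertently rotate into the $(2n+1)$-th subspace. This follows from the block-diagonal structure of $2\bar\Sigma$ restricted to $\mfk g(n)$, which has zeros in the last row and column, so that even Gaussian unitaries lift to rotations in $SO(2n+1)$ fixing the last basis vector, exactly as exploited in Lemma~\ref{lem:diagonalizability}. With this compatibility in hand, the count $O(n^2)\cdot O(n) = O(n^3)$ is immediate.
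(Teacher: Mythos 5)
Your proposal is correct and follows essentially the same route as the paper: the isomorphism $DG(n)\cong SO(2n+1)$ via $2\bar\Sigma$, an Euler-angle decomposition into $O(n^2)$ two-subspace rotations, and an $O(n)$ routing of each such rotation to either an adjacent matchgate or the $R_X$ rotation between the first and $(2n+1)$-th subspaces. The extra bookkeeping you supply (that the bubble-sort swaps lift to rotations fixing the last basis vector) is a correct and slightly more explicit rendering of a step the paper leaves implicit.
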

The proposition below was essentially shown in~\cite[theorem 3]{brod2016efficient}. 
\begin{proposition}[restatement of proposition~\ref{prp:productsAreDg}]
    Every $n$-qubit product state is a displaced Gaussian state. 
\end{proposition}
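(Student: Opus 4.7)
My plan is to invoke the circuit characterization in theorem~\ref{thm:unifiedCharacterization}, which reduces the claim to exhibiting a displaced Gaussian unitary $U\in DG(n)$ that prepares the given product state $|\psi\rangle = \bigotimes_{j=1}^n |\psi_j\rangle$ from the diagonalized Gaussian state $|0\rangle^{\otimes n}$ (the case $\lambda_j = 1$ in equation~\ref{def:diagGaussianMain}). The obvious candidate is $U = V_1 \otimes V_2 \otimes \cdots \otimes V_n$ with $V_j|0\rangle = |\psi_j\rangle$. By theorem~\ref{thm:dispDecomposition}, the membership of $U$ in $DG(n)$ is equivalent to implementability by n.n.\ matchgates together with single-qubit gates on the initial Jordan-Wigner line, so the content to be proved is that an arbitrary tensor product of single-qubit unitaries can be synthesized from this gate set.

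The key technical ingredient is the fermionic SWAP, which is an n.n.\ matchgate. Conjugation by a chain of fSWAPs shifts a single-qubit unitary placed on line $1$ to act effectively on any target line $j$, up to parity-dependent sign factors from the Jordan-Wigner string. I would therefore proceed iteratively: starting from $|0\rangle^{\otimes n}$, synthesize $|\psi_n\rangle$ on line $1$ using an initial-line single-qubit gate, propagate it to line $n$ by a cascade of fSWAPs, then repeat for $|\psi_{n-1}\rangle, \dots, |\psi_1\rangle$ on lines $n-1, \dots, 1$ in turn. Each stage uses one single-qubit gate on the initial line plus $O(n)$ n.n.\ matchgates, so every intermediate unitary lies in $DG(n)$ by theorem~\ref{thm:dispDecomposition}, and the total cost is $O(n^2)$.

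The main obstacle is verifying that the fSWAP-based propagation reproduces the intended single-qubit action on the target line without spurious cross-line phases, since fSWAP is not the ordinary qubit SWAP and does not commute with odd-parity single-qubit generators. I would handle this by Euler-angle decomposing each $V_j$ into $R_z$ rotations (which are matchgates, hence parity-even and straightforward to propagate) together with $R_x$ rotations on the initial line (generated by the single Majorana $\gamma_1$), and then tracking how the odd generator conjugates through the fSWAP chain, absorbing the resulting signs into the chosen rotation angles. This mirrors the construction of Brod~\cite{brod2016efficient}; once each ingredient is checked to lie in the generating set of theorem~\ref{thm:dispDecomposition}, the composite unitary $U$ is displaced Gaussian and the circuit definition of displaced Gaussianity in theorem~\ref{thm:unifiedCharacterization} yields the claim.
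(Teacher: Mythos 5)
Your proposal is correct and follows essentially the same route as the paper: prepare each $|\psi_j\rangle$ on the initial line with a single-qubit gate and ferry it to register $j$ with a cascade of fermionic SWAPs, exactly as in Brod's construction, which the paper also invokes. The only differences are cosmetic --- the paper dispenses with your Euler-angle sign-tracking by observing that the fermionic SWAP acts as a genuine swap whenever one of the two lines carries a definite-parity (computational basis) state, which holds for every intermediate line in the cascade; note also that your ``obvious candidate'' $V_1\otimes\cdots\otimes V_n$ is generally \emph{not} an element of $DG(n)$ (e.g.\ $e^{i\theta X_2}$ is generated by a cubic Majorana monomial), but your actual construction never relies on it, only on producing the same state from $|0\rangle^{\otimes n}$.
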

\begin{proof}
    We use the fermionic swap unitary first 
    defined in~\cite{bravyi2002fermionic} 
    \leqalign{def:fswap}{
        S_{j\leftrightarrow k} = \exp \left[\df \pi 4 \left(
            \gamma_{2j-1}\gamma_{2k} - \gamma_{2j}\gamma_{2k-1} 
            - \gamma_{2j-1}\gamma_{2j} 
            - \gamma_{2k-1}\gamma_{2k}
        \right)\right] = S_{k\leftrightarrow j} 
        = -S_{j\leftrightarrow k}^\dag. 
    }
    The fermionic swap acts as a genuine swap if 
    any one of the input lines is even. 
    For every pure product state 
    \begin{eqnarray}
        |\psi\ra = |\psi_1\ra \cdots |\psi_n\ra 
        = (U_1\otimes \cdots \otimes U_n)|0\ra^{\otimes n}, 
    \end{eqnarray}
    we can implement it by computing $U_n|0\ra$ on the 
    initial line, swapping it through the intermediate computational 
    basis states to the $n$-th register, then do the same for
    $U_{n-1}|0\ra, \dots, U_1|0\ra$. 
    Since we're only using displaced Gaussian unitaries 
    consisting of single-qubit 
    states on the initial line and the fermionic swap, 
    the resulting pure product state is a displaced Gaussian state; 
    mixed product states follow by the same argument. 
\end{proof}

Given a displaced Gaussian state input, the action of displaced 
Gaussian unitaries can be efficiently simulated by theorem~\ref{thm:dgStateAction}. 
We next consider the simulation of measurements in the computational basis, 
which is can be done by computing the Grassmann integral of Gaussian operators. 
Using Gaussian integrals to facilitate simulation has been first considered in~\cite{bravyi2004lagrangian}, 
and earlier treatments of such integrals can be found 
in~\cite{lowdin1955quantum,soper1978construction}. We first briefly 
recount the definition of the Grassmann integral and known results for 
Grassmann integrals. 
\begin{definition}[Grassmann differentiation and integration]
    The partial derivative $\pd a: \G_n\to \G_n$ is the linear map 
    \begin{eqnarray}
        \pd a 1 = 0, \quad \pd a \theta_b = \delta_{ab}. 
    \end{eqnarray}
    Extended according to the \textit{Leibniz's rule} 
    $\pd a [\theta_b f(\theta)] = \delta_{ab} f(\theta) - \theta_b \pd a f(\theta)$. 
    Integration is equivalent to differentiation 
    \begin{eqnarray}
        \int d \theta_a \equiv \pd a, \quad 
        \int D\theta \equiv \int d\theta_n \cdots \int d\theta_2 \int d\theta_1. 
    \end{eqnarray} 
\end{definition}
Formally, $\pd a$ acts on a monomial $\theta_J$ by commuting $\theta_a$ 
to the left (if it exists in $\theta_J$), eliminating it, 
then keeping the remaining components; 
note that the image of $\pd a$ has no dependence on $\theta_a$. 
The order $\int D\theta$ is chosen such that $\int D\theta\, \theta_1\cdots \theta_n = 1$. 
Formally, $\int D\theta$ extracts the complex coefficient before the highest-order monomial 
in the polynomial expansion of an operator. Given antisymmetric $M\in \mfk{so}(2n, \C)$, 
we obtain (\cite{bravyi2004lagrangian}, equation 12): 
\leqalign{eq:evenGaussianInt}{
    \int D\eta\exp\left(\df 1 2 \eta^T M \eta \right)
    = \Pf(M). 
}
Given a nonsingular antisymmetric matrix $M\in \mfk{so}(2n, \C)$ and two sets
of Grassmann generators $\{\eta_j\}_{j=1}^{2n}, \{\theta_j\}_{j=1}^{2n}$
which anticommute with each other, i.e. $\{\theta_j, \eta_k\}=0$, then 
(\cite{bravyi2004lagrangian}, equation 13): 
\leqalign{eq:gaussianPartialInt}{ 
    \int D\theta \exp \left(\eta^T \theta + 
    \df 1 2 \theta^T M \theta\right)
    = \Pf(M) \exp \left(\df 1 2 \eta^T M^{-1}\eta \right). 
}
Given $X, Y\in \Cl_{2n}$ with Fourier transforms 
$\Xi_X(\theta), \Xi_Y(\eta)$ where 
$\theta, \mu$ are anti-commuting Grassmann generators, 
the trace of their product is~\cite[equation 15]{bravyi2004lagrangian}: 
\leqalign{eq:traceOverlap}{ 
    \tr(XY) = \left(-\df 1 2\right)^n \int 
    D\eta D\theta \,
        e^{\eta^T \theta} \, 
        \Xi_X(\eta) \, \Xi_Y(\theta). 
}

Using the formulas above,
we obtain the following result which underpins 
the efficient simulation of measurements. 
Note that it crucially requires one of the inputs to be even. 
\begin{lemma}[Gaussian overlap formula]
    \label{lem:dgOverlapFormula}
    Given a $n$-qubit displaced Gaussian state $\rho$ and 
    a $n$-qubit even Gaussian state $\sigma$, 
    let $A=\Sigma(\rho), B=\Sigma(\sigma)$. 
    If $B$ is invertible, then 
    \begin{eqnarray}
        \tr(\rho \sigma) = \df 1 {2^n} \sqrt{\det(I + AB)}. 
    \end{eqnarray}
    \begin{proof}
        Since $\sigma$ is even, the overlap $\tr(\rho \sigma)$ 
        only depends on the even coefficients of $\rho$. 
        Recalling Wick's formula~\ref{prp:extWick}, the even coefficients 
        of $\rho$ are unchanged if we set the mean to zero. 
        Thus without loss of generality expand 
        \[ 
            \Xi_\rho(\eta) = \exp \left(\df i 2 \eta^T A \eta\right), \quad 
            \Xi_\sigma(\theta) = \exp \left(\df i 2 \theta^T B \theta\right). 
        \] 
        First consider $B$ invertible,
        using the trace equation~\ref{eq:traceOverlap}
        and Gaussian integral equations~\ref{eq:evenGaussianInt},~\ref{eq:gaussianPartialInt} yields 
        \malign{
            (-2)^n \tr(\rho \sigma)
            &= \int D\eta D\theta \, e^{\eta^T \theta} X(\eta)Y(\theta) \\ 
            &= \int D\eta \,
            \exp \left(\df 1 2 \eta^T A \eta\right)
            \int D\theta \, e^{\eta^T\theta} \exp 
            \left(\df 1 2 \theta^T B \theta\right)  \\ 
            &= \Pf(B) \int D\eta \,
            \exp \left[\df 1 2 \eta^T (A+B^{-1}) \eta\right] 
            = \Pf(B)\Pf(A+B^{-1}). 
        } 
        Next note that $\Pf(A)^2 = \det A$ 
        and that $\tr(\rho \sigma)$since $\rho, \sigma$ are positive 
        operators, then 
        \malign{
            4^n \tr(\rho \sigma)^2 
            &= \left|
                \det(B) \det(A+B^{-1})
            \right|
            = |\det(AB + I)|. 
        }
        Finally, $AB$ is real since $A, B$ are each purely imaginary 
        matrices, and $\det(AB+I)$ is positive since $\|AB\|\leq \|A\|\|B\| \leq 1$ 
        in the operator norm. 
        This proves the result for inveritible $B$. 
        For the general case, let $B_\epsilon = B+\epsilon \tilde I$ 
        where $\tilde I=\bigoplus_{j=1}^n |0\ra \la 1|-|1\ra \la 0|$ 
        and $\epsilon \ll 1$ and take $\epsilon\to 0$. 
    \end{proof}
\end{lemma}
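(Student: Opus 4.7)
The plan is to combine the parity structure of $\sigma$ with the Grassmann Gaussian integration identities already stated as equations~\ref{eq:evenGaussianInt}--\ref{eq:traceOverlap}. My hope is that the displacement of $\rho$ can be ignored entirely, reducing the problem effectively to the even-Gaussian--even-Gaussian case, after which the trace formula~\ref{eq:traceOverlap} produces Pfaffians that can be squared into the desired determinant.

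First I would argue that the mean of $\rho$ plays no role. Since $\sigma$ is even, its Majorana expansion contains only even-degree $\gamma_K$, so by orthonormality of the basis $\{\gamma_J\}$ only the even-degree coefficients of $\rho$ contribute to $\tr(\rho\sigma)$. By Wick's formula (proposition~\ref{prp:extWick}), those even-degree coefficients equal $\Pf[\Sigma(\rho)_{|J}]$ for $|J|$ even and depend only on the covariance $A$, not on $\mu(\rho)$. Hence I may replace $\Xi_\rho(\eta)$ by the purely quadratic expression $\exp(\tfrac12 \eta^T A \eta)$ without altering the overlap.

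Next I would substitute the two quadratic Fourier transforms into~\ref{eq:traceOverlap} and integrate in two stages: first integrate out $\theta$ using the partial-integration identity~\ref{eq:gaussianPartialInt}, which uses invertibility of $B$ and yields the factor $\Pf(B)\exp(\tfrac12 \eta^T B^{-1}\eta)$; then integrate out $\eta$ against the combined weight $\exp[\tfrac12 \eta^T(A+B^{-1})\eta]$ using~\ref{eq:evenGaussianInt}. This should collapse the whole computation to the compact expression $(-2)^n \tr(\rho\sigma) = \Pf(B)\,\Pf(A+B^{-1})$.

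Finally, I would square both sides and use $\Pf(M)^2 = \det(M)$ together with $\det(B)\det(A+B^{-1}) = \det(I+AB)$ to obtain $4^n \tr(\rho\sigma)^2 = \det(I+AB)$, then take the positive square root since $\tr(\rho\sigma) \ge 0$ for positive operators. The main obstacle I anticipate is bookkeeping rather than conceptual: matching the Grassmann sign conventions of the partial integration to those of~\ref{eq:gaussianPartialInt}, and justifying that the resulting determinant is real and nonnegative. The latter should follow from the physical-state constraints $-A^T A \le I$ and $-B^T B \le I$ established in~\cite{bravyi2004lagrangian}, which bound the operator norm of $AB$ by one.
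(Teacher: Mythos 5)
Your proposal follows essentially the same route as the paper's proof: drop the mean using the evenness of $\sigma$ and Wick's formula, integrate out $\theta$ then $\eta$ via the Gaussian integral identities to get $\Pf(B)\Pf(A+B^{-1})$, and square to obtain $\det(I+AB)$ with positivity controlled by $\|AB\|\leq 1$. The argument is correct for the invertible-$B$ case stated in the lemma, matching the paper step for step.
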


\begin{lemma}[restatement of lemma~\ref{lem:dgMeasurementMain}]
    Given a $n$-qubit displaced Gaussian state $\rho\in \Cl_{2n}$, 
    a subset $K\subset [n]$ of lines to measure 
    with $|K|=k\leq n$, and a computational basis $x\in \{0, 1\}^k$ 
    corresponding to the measurement operator 
    \leqalign{}{
        O(K, x) 
        = \df 1 {2^k} \prod_{j=1}^k I + (-1)^{x_j} Z_{K_j}, \quad 
        K\subset[2n], \quad x\in \{0, 1\}^{k}
    }
    which is $|x\ra \la x|$ when restricted 
    to the lines indexed by $K$. The expectation value of the measurement is 
    \leqalign{}{
        \tr\left[O(K, x)\rho\right] 
        = \df 1 {2^k} \sqrt{\det\left[I + \Sigma(\rho) \Sigma(K, x)\right]}.  
    }
    Here $\Sigma(K, x)$ is the antisymmetric covariance matrix 
    associated with the measurement defined by 
    \leqalign{}{
        \Sigma(K, x) = -i\sum_{j=1}^k (-1)^{x_j} s_{(2K_j+1)(2K_j+2)} \in \mfk{so}(2n, \C), 
        \quad s_{jk} = |j\ra \la k| - |k\ra \la j| 
    }
    \begin{proof}
        Fixing $K$ and $x$, 
        let $O=O(K, x)$. 
        Note that $I/2$ is the maximally mixed state, so 
        $O = 2^{n-k} \rho_O$ where $\rho_O = 2^{k-n} O\in \dgauss$. 
        The covariance matrix of $\rho(O)$ is 
        \malign{
            \Sigma(\rho_O)
            = -i \sum_{j=1}^k (-1)^{x_j}s_{(2K_j+1)(2K_j+2)}
             = \Sigma(K, x). 
        }
        Applying lemma~\ref{lem:dgOverlapFormula} concludes the proof 
        since 
        $ 
            \tr\left[O(K, x)\rho\right] = 2^{n-k} \tr(\rho_O \rho) 
            = 2^{n-k} \df{1}{2^n} 
            \sqrt{\det\left[I + \Sigma(\rho) \Sigma(K, x)\right]}$. 
    \end{proof}
\end{lemma}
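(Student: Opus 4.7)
The plan is to recast $O(K, x)$ as a scalar multiple of an even Gaussian state $\rho_O$ and then invoke the Gaussian overlap formula (Lemma~\ref{lem:dgOverlapFormula}) to evaluate $\tr(\rho \rho_O)$. Set $\rho_O := 2^{k-n}\, O(K, x)$. This is a genuine density operator, since it decomposes as a tensor product of single-qubit computational basis projectors on the measured lines together with maximally mixed states on the unmeasured ones,
\[
\rho_O = \bigotimes_{j \in K} \df{I + (-1)^{x_j} Z_j}{2} \otimes \bigotimes_{j \notin K} \df{I}{2}.
\]
Each single-qubit factor has the diagonalized Gaussian form of equation~\ref{def:diagGaussianMain}, with parameter $\lambda = (-1)^{x_j}$ on the lines indexed by $K$ and $\lambda = 0$ on the unmeasured lines. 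Consequently $\rho_O \in \gauss(n)$ is an even Gaussian state, and its covariance matrix $M = \Sigma(\rho_O)$ is exactly the block-diagonal antisymmetric matrix whose only nonzero entries sit in the $(2K_j - 1, 2K_j)$ subspaces, with signs $(-1)^{x_j}$, as read off from the Jordan-Wigner identification $Z_j = i \gamma_{2j-1}\gamma_{2j}$.

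Next, I would apply Lemma~\ref{lem:dgOverlapFormula} to the displaced Gaussian $\rho \in \dgauss(n)$ and the even Gaussian $\rho_O \in \gauss(n)$ to obtain
\[
\tr(\rho \rho_O) = \df{1}{2^n} \sqrt{\det\bigl[I + \Sigma(\rho)\, M \bigr]}.
\]
Rescaling both sides by $2^{n-k}$ and using $O(K, x) = 2^{n-k}\rho_O$ gives the claimed identity
\[
\tr[O(K, x) \rho] = \df{1}{2^k} \sqrt{\det\bigl[I + \Sigma(\rho)\, M \bigr]}.
\]

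The main obstacle, such as it is, lies in the bookkeeping of the first step: correctly identifying $\rho_O$ as a diagonalized (hence even) Gaussian state and extracting the right $M$ from the Jordan-Wigner representation of the measured $Z$-operators, taking care of the signs $(-1)^{x_j}$ and of the conventions for $\Sigma$ used in equation~\ref{def:diagGaussianMain}. Once this is done, the result is an immediate consequence of the overlap formula. The potential non-invertibility of $M$ (which always occurs when $k < n$, since $\rho_O$ contains maximally mixed factors) does not need to be handled separately here, because the proof of Lemma~\ref{lem:dgOverlapFormula} already absorbs this subtlety via a limiting argument $B_\epsilon \to B$.
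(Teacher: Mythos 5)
Your proposal is correct and follows essentially the same route as the paper: rescale $O(K,x)$ to the even Gaussian state $\rho_O = 2^{k-n}O(K,x)$, read off its block-diagonal covariance matrix, and apply the overlap formula of Lemma~\ref{lem:dgOverlapFormula}, with the non-invertible case absorbed by that lemma's limiting argument. Your indexing $(2K_j-1,\,2K_j)$ for the Majorana pair of qubit $K_j$ is in fact the convention consistent with equation~\ref{def:diagGaussianMain}, whereas the paper's statement writes $(2K_j+1,\,2K_j+2)$; this is a bookkeeping discrepancy in the paper, not a gap in your argument.
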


\section{Unitary embedding}
\label{app:unitaryEmbedding}

This section begins by describing the first 
Gaussianity-preserving unitary embedding of $n$-qubit displaced 
Gaussian states into $(n+1)$-qubit even Gaussian states. 
Leveraging this tool, we generalize a previous work~\cite{lyu2024fermionic} 
on fermionic convolution and Gaussian testing to displaced Gaussian states. 
This yields useful operational protocols for testing efficiently 
simulable displaced Gaussian components as well as other characterizations 
of displaced Gaussian states. 

\begin{definition}[even embedding channel]
    \label{def:evenEmbeddingChannel}
    The even embedding channel $\mca E:\Cl_{2n}\to \Cl_{2n+2}$ 
    is defined by 
    \begin{eqnarray}
        \mca E(\rho) = V(\rho \ot |+\ra \la +|)V^\dag 
    \end{eqnarray}
    where $V\in \Cl_{2n+2}$ is the displaced Gaussian unitary 
    defined by 
    \begin{eqnarray}
        V = \exp \left(-i \df \pi 4 \gamma_{2n+2}\right). 
    \end{eqnarray}
\end{definition}
One can verify that $V$ effects the following 
transform of the Majorana generators exactly as $\phi$ in 
definition \ref{def:clifAlgEmbedding}: 
\leqalign{eq:evenConvUnitaryEffect}{
    V\gamma_j V^\dag = \phi(\gamma_j) 
    = \begin{cases}
        i \gamma_j \gamma_{2n+2} & j < 2n+2 \\ 
        \gamma_{2n+2} & j = 2n+2
    \end{cases}. 
}
To demonstrate that this is a desirable even embedding, 
we need the following lemma: 
\begin{lemma}[adjoining $|+\ra$ to an even Gaussian state]
    \label{lem:plusAdjGaussian}
    Given an even Gaussian state $\rho \in \gauss(n)$ with 
    $M=\Sigma(\rho)$, 
    the product state $\sigma = \rho \otimes |+\ra \la +|$ 
    is a displaced Gaussian state with extended covariance 
    \leqalign{}{
        \tilde \Sigma(\sigma) 
        = \tilde \Sigma(\rho \ot I/2) + i s_{(2n+1)(2n+3)}
        \in \mfk{so}(2n+3, \C), \quad 
        s_{jk} = |j\ra \la k| - |k\ra \la j|. 
    } 
    \begin{proof}
        Define $\tau = |+\ra \la +|\otimes \rho$; we obtain the 
        expansion for $\Xi_\tau$ as 
        \malign{
            \Xi_{\tau}(\theta) 
            &= \Xi_{|+\ra \la +|}(\theta) \otimes \Xi_\rho(\theta)
            = \exp(\theta_1) \otimes \exp \left(
                \sum_{j<k=1}^{2n}M_{jk}\theta_{j}\theta_{k}
            \right) \\
            &= \exp \left(\theta_1 + 
            \sum_{j<k=1}^{2n} M_{jk}\theta_{j+2}\theta_{k+2}\right). 
        }
        Note that the first equality relied on $\rho$ being even. 
        Recalling the fermionic swap unitary (equation~\ref{def:fswap}) 
        specialized to $S=S_{1\leftrightarrow 2n+1}$, 
        conjugation swaps the subspaces 
        $(1, 2)\leftrightarrow (2n+1, 2n+2)$ in the 
        extended covariance matrix representation, then 
        \malign{
            (S_{1\leftrightarrow 2n+1} \tau S_{1\leftrightarrow 2n+1}^\dag)
            (\theta) =  \exp \left(\theta_{2n+1} + 
            \sum_{j<k=1}^{2n} M_{jk}\theta_j\theta_k\right). 
        }
        We further have $\sigma = S_{1\leftrightarrow 2n+1} \tau S_{1\leftrightarrow 2n+1}^\dag$
        since fermionic swap $S_{1\leftrightarrow 2n+1}$ acts as a genuine 
        swap gate if any of the two lines has definite parity (i.e. is part 
        of an even state) and $\rho$ is even. Matching the Fourier 
        expression with the extended covariance matrix concludes the proof. 
    \end{proof}
\end{lemma}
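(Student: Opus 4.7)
The plan is to exploit the Grassmann Fourier description of Gaussianity: I would first move the $|+\rangle\langle +|$ factor to a qubit where its Jordan--Wigner string is trivial, read off a Gaussian Fourier expression in that configuration, and then transport back to the desired layout using a Gaussian unitary.

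The first step is to analyze the auxiliary product $\tau = |+\rangle\langle +| \otimes \rho$ with $|+\rangle\langle +|$ placed on the initial qubit. In that position $|+\rangle\langle +|$ equals $(I + \gamma_1)/2$ with no Jordan--Wigner string, and its Grassmann Fourier transform is $\exp(\theta_1)$. Since $\rho$ is even, extending it by identity on the initial qubit introduces no sign corrections, and the two Grassmann representations multiply cleanly to give
\[
\Xi_\tau(\theta) = \exp\!\left(\theta_1 + \tfrac{1}{2}\sum_{j,k=1}^{2n} M_{jk}\,\theta_{j+2}\theta_{k+2}\right),
\]
which already identifies $\tau$ as a displaced Gaussian state.

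Next I would apply the fermionic swap $S = S_{1 \leftrightarrow n+1}$, an even Gaussian unitary. Because $\rho$ has definite parity, $S$ acts as a genuine swap of the first and last qubits, so that $\sigma = S\tau S^\dag$. At the level of the Grassmann polynomial this conjugation simply relabels $(\theta_1,\theta_2) \leftrightarrow (\theta_{2n+1},\theta_{2n+2})$, producing
\[
\Xi_\sigma(\theta) = \exp\!\left(\theta_{2n+1} + \tfrac{1}{2}\sum_{j,k=1}^{2n} M_{jk}\,\theta_j\theta_k\right).
\]
Matching this against the canonical Gaussian form $\exp(\tfrac{1}{2}\theta^T \Sigma \theta + \mu^T\theta)$ reads off $\Sigma(\sigma) = M \oplus 0_{2 \times 2}$ and $\mu(\sigma) = e_{2n+1}$, which assemble via equation \ref{def:extCovMatrixMain} into the claimed extended covariance $\tilde\Sigma(\rho \otimes I/2) + i\,s_{(2n+1)(2n+3)}$.

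The main obstacle is justifying the clean Grassmann factorization used to compute $\Xi_\tau$: in general the tensor product of two Clifford-algebra elements on disjoint qubits does not correspond to the product of their Grassmann Fourier transforms, because of Jordan--Wigner string corrections. The reason it works here is twofold --- placing $|+\rangle\langle +|$ on the initial qubit removes one string entirely, while the evenness of $\rho$ ensures that the remaining string contributions commute through without producing extra signs. These same parity considerations also underwrite the claim that the fermionic swap in the second step behaves as a genuine swap of subsystems rather than as a more elaborate Gaussian rotation.
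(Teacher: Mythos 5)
Your proposal is correct and follows essentially the same route as the paper: place $|+\rangle\langle+|$ on the initial line to get the clean Grassmann factorization $\Xi_\tau(\theta)=\exp(\theta_1+\tfrac12\theta^T M\theta)$ (valid because $\rho$ is even), then use the fermionic swap — which acts as a genuine swap since $\rho$ has definite parity — to move the displaced mode to the last qubit and read off the extended covariance. The only difference is cosmetic: you index the swap by qubits as $S_{1\leftrightarrow n+1}$, which is the intended reading of the paper's $S_{1\leftrightarrow 2n+1}$.
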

\begin{theorem}[restatement of theorem~\ref{thm:unitaryEmbeddingMain}]
    \label{thm:dgStateEmbedding}
    Given a $n$-qubit state $\rho$, 
    $\mca E(\rho)\in \gauss(n+1)\iff \rho\in \dgauss(n)$, 
    in which case the covariance matrix of the embedding is 
    \leqalign{eq:evenEmbeddingCov}{ 
        \Sigma[\mca E(\rho)] = \begin{bmatrix}
            \Sigma(\rho) & -ir & i\mu(\rho) \\ 
            ir^T & 0 & ic \\ 
            -i\mu(\rho)^T & -ic &  0
        \end{bmatrix}\in \mfk{so}(2n+2), \quad 
        R = \begin{bmatrix}
            R_0 & s \\ r^T & c 
        \end{bmatrix} \in SO(2n+1). 
    }
    Here $r\in \R^{2n}$ and $c\in \R$ are the entries of $R$ such that 
    $R\, \tilde \Sigma(\rho) R^T$ is block-diagonal and trivial on the last subspace. 
    Moreover, $\rho$ is displaced Gaussian iff $\mca E(\rho)$ is Gaussian. 
    \begin{proof}
        Let $\Sigma = \Sigma(\rho), \mu=\mu(\rho)$. 
        We first prove that that $\rho \ot |+\ra \la +|$ is a 
        displaced Gaussian state: 
        by theorem~\ref{thm:dgStateCharacterization}, write 
        $\rho = U_G^\dag \rho_D U_G$ for $U_G\in DG(n)$ 
        then $U_G$ affects $R$ that block-diagonalizes 
        $\tilde \Sigma(\rho)$. Then $\rho = U_G^\dag \rho_D U$ implies 
        \[ 
            \rho \ot |+\ra \la +| = 
            (U_G^\dag \ot I) (\rho_D \ot |+\ra \la +|) (U_G\ot I)
        \] 
        is Gaussian since $\rho_D \ot |+\ra \la +|$ is Gaussian by 
        lemma~\ref{lem:plusAdjGaussian}. Decompose $R$ in terms 
        of $R_0, r, c$ as in equation~\ref{eq:evenEmbeddingCov}, 
        the rotation $\bar R\in SO(2n+3)$ corresponding to $U_G\ot I$, as 
        well as the extended covariance matrix $\tilde \Sigma(\rho \ot I)$, are 
        \malign{
            \bar R = \begin{bmatrix}
                R_0 & 0_{2n\times 2} & s \\ 
                0_{2\times 2n} & I_{2\times 2} & 0_{1\times 2} \\ 
                r^T & 0_{2\times 1} & c 
            \end{bmatrix}, \quad 
            \tilde \Sigma(\rho \ot I/2) = \begin{bmatrix}
                \Sigma & 0_{2n\times 2} & i\mu \\ 
                0_{2\times 2n} & 0_{2\times 2} & 0 \\ 
                -i\mu^T & 0 & 0 
            \end{bmatrix} = \bar R^T\, \tilde \Sigma(\rho_D \ot I/2) \bar R. 
        }
        Recalling $\tilde \Sigma(\rho_D \ot |+\ra \la +|) = \tilde \Sigma(\rho_D \ot I) 
        + is_{(2n+1)(2n+3)}$ from 
        lemma~\ref{lem:plusAdjGaussian}, we obtain 
        \malign{
            \tilde \Sigma(\rho \ot |+\ra \la +|)
            &= \bar R^T \left[
                \tilde \Sigma(\rho_D \ot |+\ra \la +)
            \right] \bar R = \tilde \Sigma(\rho \ot I/2) + i \bar R^T \left(
                |2n+1\ra \la 2n+3| - |2n+3\ra \la 2n+1|
            \right) \bar R  \\ 
            &= \tilde \Sigma(\rho \ot I/2) + i \left(
                |2n+1\ra \begin{bmatrix}
                    r \\ 0_{1\times 2} \\ c 
                \end{bmatrix} - \begin{bmatrix}
                    r^T & 0_{2\times 1} & c 
                \end{bmatrix}\la 2n+1|
            \right) \\ 
            &= \begin{bmatrix}
                \Sigma & -ir & 0_{2n\times 1} & i\mu \\ 
                ir^T & 0 & 0 & ic \\ 
                0_{1\times 2n} & 0 & 0 & 0 \\ 
                -i\mu^T & -ic & 0 & 0 
            \end{bmatrix} \implies 
            \tilde \Sigma[V(\rho \ot |+\ra \la +|)V^\dag] 
            = \begin{bmatrix}
                \Sigma & -ir & i\mu & 0_{2n\times 1}\\ 
                ir^T & 0 & ic & 0 \\ 
                -i\mu^T & -ic & 0 & 0  \\ 
                0_{1\times 2n} & 0 & 0 & 0
            \end{bmatrix}
        }
        The last implication holds because 
        conjugation by $V$ swaps the $(2n+2)$ and 
        $(2n+3)$-th subspaces in $\tilde \Sigma(\rho \ot |+\ra \la +|)$ 
        using equation~\ref{eq:evenConvUnitaryEffect}. 
        We obtain the desired covariance matrix relation 
        by noting that $\Sigma$ 
        is the restriction to all but the last subspace. 
        This establishes the forward direction in the equivalence; 
        for the converse, suppose for contraposition that 
        $\mca E(\rho) = V(\rho \otimes |+\ra \la +|) V^\dag$ 
        is non-Gaussian; since $V\in DG(n+1)$, this 
        implies that $\rho \otimes |+\ra \la +|$ is non-Gaussian. 
        But $\rho_G\otimes |+\ra \la +|$ is displaced Gaussian if $\rho_G$ is 
        Gaussian by the reasoning above, so $\rho$ cannot be displaced Gaussian. 
    \end{proof}
\end{theorem}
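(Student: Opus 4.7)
The plan is to reduce to the diagonalized case using the circuit characterization (theorem~\ref{thm:dgStateCharacterization}), then carefully track how the covariance data transforms under the two operations composing $\mca E$: first tensoring with $\ket{+}\bra{+}$, and then conjugating by $V$. First I would assume $\rho \in \dgauss(n)$ and choose $U_G \in DG(n)$ whose associated rotation $R \in SO(2n+1)$ block-diagonalizes $\tilde\Sigma(\rho)$ so that $U_G^\dag \rho U_G = \rho_D$ is a diagonalized Gaussian state. Because $\rho_D$ is \emph{even}, lemma~\ref{lem:plusAdjGaussian} applies directly and tells me that $\rho_D \otimes \ket{+}\bra{+}$ is displaced Gaussian with extended covariance equal to $\tilde\Sigma(\rho_D \otimes I/2) + i s_{(2n+1)(2n+3)}$.

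Next I would transport this back to $\rho \otimes \ket{+}\bra{+}$ using the unitary $U_G \otimes I$. Its associated rotation $\bar R \in SO(2n+3)$ acts as $R$ on the first $2n$ and the $(2n+3)$-th subspaces, and trivially on the two subspaces corresponding to the adjoined $\ket{+}$. Conjugation by theorem~\ref{thm:dgStateAction} then gives
\begin{equation*}
\tilde\Sigma(\rho\otimes\ket{+}\bra{+}) = \tilde\Sigma(\rho\otimes I/2) + i\bar R^T\bigl(\ket{2n+1}\bra{2n+3}-\ket{2n+3}\bra{2n+1}\bigr)\bar R.
\end{equation*}
Decomposing $R$ into the blocks $R_0$, $s$, $r^T$, $c$, the action of $\bar R^T$ on $\ket{2n+3}$ selects precisely the column $(r,\,0,\,c)^T$, so the correction term populates exactly the entries coupling the $(2n+1)$-th row/column of the extended covariance to $r$ and $c$. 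This is the step where bookkeeping matters most, and it is where I expect the main obstacle to lie: making sure that the block at position $(2n+1,2n+3)$ corresponds to the value $c$ (not some other entry of $R$) and that $\Sigma$ enters in the upper-left $2n\times 2n$ block as expected.

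Having computed $\tilde\Sigma(\rho \otimes \ket{+}\bra{+})$, I would then apply $V$. By equation~\ref{eq:evenConvUnitaryEffect}, $V$ acts on the generators exactly like the embedding $\phi$: it sends $\gamma_j \mapsto i\gamma_j\gamma_{2n+2}$ for $j<2n+2$ and fixes $\gamma_{2n+2}$. At the level of extended covariance matrices, this corresponds to swapping the $(2n+2)$-th and $(2n+3)$-th subspaces of $\tilde\Sigma(\rho\otimes\ket{+}\bra{+})$. After this swap, the entries in the $(2n+3)$-th row and column are all zero, which means the linear part of the resulting Fourier expression vanishes; equivalently, the state $\mca E(\rho)$ has zero mean and is therefore even Gaussian with covariance equal to the claimed restriction to the first $2n+2$ subspaces, matching equation~\ref{eq:evenEmbeddingCov}.

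Finally, for the converse I would argue by contraposition: if $\rho$ is not displaced Gaussian, then $\rho\otimes\ket{+}\bra{+}$ cannot be displaced Gaussian either (since otherwise tracing out the last mode would give a displaced Gaussian $\rho$, using that partial trace preserves the Gaussian property on the remaining modes as in the thermal-state characterization of theorem~\ref{thm:dgStateCharacterization}). Because $V\in DG(n+1)$, conjugation by $V$ preserves the class $\dgauss(n+1) \supsetneq \gauss(n+1)$, so $\mca E(\rho)\notin\dgauss(n+1)\supset\gauss(n+1)$, establishing the equivalence.
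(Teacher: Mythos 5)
Your forward direction is essentially identical to the paper's: reduce to a diagonalized Gaussian state via theorem~\ref{thm:dgStateCharacterization}, invoke lemma~\ref{lem:plusAdjGaussian} on $\rho_D\ot|+\ra\la+|$, transport back with the block rotation $\bar R$ (whose action on $|2n+3\ra$ selects the column $(r,0,c)^T$ and thus produces exactly the $-ir$ and $ic$ entries), and finally apply $V$, which swaps the $(2n+2)$-th and $(2n+3)$-th subspaces so that the last row and column vanish and the embedded state is even with the claimed covariance. Where you genuinely depart from the paper is the converse. The paper's closing argument ($\mca E(\rho)$ non-Gaussian $\Rightarrow$ $\rho\ot|+\ra\la+|$ non-Gaussian $\Rightarrow$ $\rho$ not displaced Gaussian) is really the contrapositive of the forward implication; your chain --- $\rho\notin\dgauss(n)$ implies $\rho\ot|+\ra\la+|\notin\dgauss(n+1)$, because otherwise the partial trace over the adjoined mode would return a displaced Gaussian $\rho$, and conjugation by $V^{\pm1}\in DG(n+1)$ preserves $\dgauss(n+1)\supset\gauss(n+1)$ --- is the implication actually needed for the ``only if'' half, and it is sound. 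The one point to tighten is your justification of the partial-trace step: it does not follow from the thermal-state characterization, but rather from Wick's formula (proposition~\ref{prp:extWickMain}), since the moments of the reduced state are the moments of the joint state on the retained indices and are therefore Pfaffians of the restricted extended covariance matrix, which is again a Gaussian moment structure. With that substitution your argument is complete.
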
 

The even Gaussian state embedding has a corresponding 
compatible embedding for Gaussian unitaries. 
\begin{lemma}[restatement of lemma \ref{lem:unitaryEmbeddingMain}]
    Let $U\in \Cl_{2n}$, define $\tilde U = V (U\otimes I)V^\dag \in \Cl_{2n+2}$, 
    with $V$ being the same unitary in the even embedding 
    channel~\ref{def:evenEmbeddingChannel}, then for any $\tilde U$ in the 
    image of the transformation, 
    \begin{eqnarray}
        U\in DG(n)\iff \tilde U \in G(n+1). 
    \end{eqnarray}
    The mean $i\gamma^Td$ is transformed into 
    covariance $-i (\gamma^T d)\gamma_{2n+2}$, and 
    the Gaussian expressions are 
    \leqalign{eq:dgUnitaryEmbedding}{
        U = \exp \left(\df 1 2 \gamma^T h \gamma + i\gamma^T d\right) 
        \iff \tilde U = \exp\left(\df 1 2 \gamma^T \tilde h \gamma\right), \quad 
        \tilde h = \begin{bmatrix}
            h & 0_{2n\times 1} & -d \\ 0_{1\times 2n} & 0 & 0 \\ 
            d^T & 0 & 0  
        \end{bmatrix}. 
    }
    \begin{proof}
        Note $\phi(\gamma_j) = V\gamma_j V^\dag$. Since conjugation by $V$ 
        is unitary, it is extended multiplicatively just as $\phi$. 
        Let $\gamma = (\gamma_1, \dots, \gamma_{2n})$ and 
        $\tilde \gamma = (\gamma_1, \dots, \gamma_{2n+2})$, then 
        \malign{
            \tilde U &= \phi(e^{\log U}) 
            = \exp \left[
                \phi\left(
                    \df 1 2 \gamma^T h \gamma + i \gamma^T d
            \right)\right] 
            = \exp \left[\df 1 2 \gamma^T h \gamma - (\gamma^T d)\gamma_{2n+1}\right]
            = \exp \left(
                \df 1 2 \tilde \gamma^T \tilde h \tilde \gamma
            \right). 
        }
    \end{proof}
\end{lemma}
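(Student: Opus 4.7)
The plan is to identify conjugation by $V$ with the Clifford embedding $\phi$ of Definition~\ref{def:clifAlgEmbedding} (with the adjoined mode relabeled to $\gamma_{2n+2}$), push this identification through the exponential map, and then compute $\phi(\log U)$ term by term to read off $\tilde h$.

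First I would verify equation~\ref{eq:evenConvUnitaryEffect} by direct computation. Since $\gamma_{2n+2}$ anticommutes with $\gamma_j$ for $j\neq 2n+2$, we have $e^{-i\pi/4\,\gamma_{2n+2}}\gamma_j = \gamma_j\,e^{i\pi/4\,\gamma_{2n+2}}$, so $V\gamma_j V^\dag = \gamma_j\, e^{i\pi/2\,\gamma_{2n+2}} = i\gamma_j\gamma_{2n+2}$, while $V\gamma_{2n+2}V^\dag = \gamma_{2n+2}$. Since both conjugation by $V$ and $\phi$ are algebra homomorphisms extended multiplicatively from generators, they coincide on all of $\Cl_{2n}$, giving $\tilde U = V(U\otimes I)V^\dag = \phi(U) = e^{\phi(\log U)}$. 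Next I would compute $\phi$ term by term on $\log U = \tfrac{1}{2}\gamma^T h\gamma + i\gamma^T d$: the quadratic part yields $\phi(\gamma_j\gamma_k) = (i\gamma_j\gamma_{2n+2})(i\gamma_k\gamma_{2n+2}) = \gamma_j\gamma_k$ using $\gamma_{2n+2}^2 = I$ and $\gamma_{2n+2}\gamma_k = -\gamma_k\gamma_{2n+2}$; the linear part gives $\phi(i\gamma_j) = -\gamma_j\gamma_{2n+2}$. Hence
\[
\phi(\log U) = \tfrac{1}{2}\gamma^T h\gamma - (\gamma^T d)\gamma_{2n+2},
\]
which, after expanding $\tfrac{1}{2}\tilde\gamma^T \tilde h\tilde\gamma$ with the proposed block form, matches exactly. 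Antisymmetry of $\tilde h$ follows from antisymmetry of $h$ together with the conjugate off-diagonal blocks $(-d, d^T)$. This establishes the forward direction $U\in DG(n)\Rightarrow \tilde U\in G(n+1)$ with the stated $\tilde h$.

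For the converse, suppose $\tilde U=V(U\otimes I)V^\dag\in G(n+1)$, so $\log\tilde U=\tfrac{1}{2}\tilde\gamma^T \tilde h\tilde\gamma$. I would compute $V^\dag\gamma_j V$ by the mirror of the calculation above, obtaining $V^\dag\gamma_j\gamma_k V=\gamma_j\gamma_k$ for $j,k\leq 2n+1$ and $V^\dag\gamma_j\gamma_{2n+2} V=-i\gamma_j$. Expanding $\log U=V^\dag(\log\tilde U)V$ yields a quadratic-plus-linear expression in $\gamma_1,\dots,\gamma_{2n+1}$, but since $U\in \Cl_{2n}$ contains no $\gamma_{2n+1}$, every coefficient $\tilde h_{j,2n+1}$ and $\tilde h_{2n+1,2n+2}$ must vanish; this forces the block form stated in the lemma, and the surviving terms assemble into $\tfrac{1}{2}\gamma^T h\gamma+i\gamma^T d\in\mfk{dg}(n)$, so $U\in DG(n)$. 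The main obstacle will be the careful bookkeeping of signs in the anticommutation manipulations, particularly verifying that the minus sign in $\phi(i\gamma_j)=-\gamma_j\gamma_{2n+2}$ is consistent with placing $-d$ in the upper-right block and $d^T$ in the lower-left block so that the overall $\tilde h$ is genuinely antisymmetric.
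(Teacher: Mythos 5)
Your forward-direction argument is exactly the paper's proof: identify conjugation by $V$ with the Clifford embedding $\phi$, push it through the exponential, and evaluate $\phi$ term by term on $\tfrac12\gamma^Th\gamma + i\gamma^Td$ to read off $\tilde h$; your signs are correct, and your consistent use of $\gamma_{2n+2}$ for the adjoined generator is actually cleaner than the paper's display, which writes $\gamma_{2n+1}$ where the block form of $\tilde h$ clearly places $-d$ in the $(2n+2)$-th column. Your converse sketch goes beyond what the paper records (the paper only performs the forward computation); its one soft spot is the claim that the coefficients $\tilde h_{j,2n+1}$ ``must vanish'' because $U\otimes I$ contains no $\gamma_{2n+1}$ --- the element $V^\dag(\log\tilde U)V$ is only one branch of the logarithm of $U\otimes I$, and a logarithm of an element of $\Cl_{2n}\otimes I$ need not itself lie in $\Cl_{2n}\otimes I$ (e.g.\ $\exp(\pi\gamma_1\gamma_{2n+1})=-I$), so strictly one only concludes that \emph{some} generator in $\mfk{dg}(n)$ exists for $U$, which is all that membership in $DG(n)$ requires but deserves a sentence of justification.
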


We now proceed to decompose $V\in DG(n+1)$ into products of elementary gates. 
Recalling $S=\mrm{diag}(1, i)$, let 
$A=S^\dag H$ and $\mrm{CX}_{a\to b}$ denote the controlled-not 
with $a$ as control and $b$ as target. This yields the conjugation relations 
\begin{eqnarray}
    A Y A^\dag = -Z, \quad (\mrm{CX}_{a\to b}) \, Z_b \, 
    (\mrm{CX}_{a\to b}) = Z_aZ_b. 
\end{eqnarray}
Fixing $n$, let $B = \prod_{j=1}^n \mrm{CX}_{j\to n+1}$
and letting $\cong$ denote equivalence up to a global phase, we obtain: 
\malign{
    AB S_{n+1} BA^\dag 
    &\cong AB \exp \left(i \df \pi 4 Z_{n+1}\right) BA^\dag 
    = A \exp \left(i \df \pi 4 Z_1\dots Z_{n+1}\right) A^\dag \\ 
    &= \exp \left(-i \df \pi 4 Z_1\dots Z_n Y_{n+1}\right) 
    = \exp \left(-i \df \pi 4 \gamma_{2n+2}\right) = V\in DG(n). 
}
\begin{figure}[t]
    \center{\includegraphics[width=0.8\linewidth]  {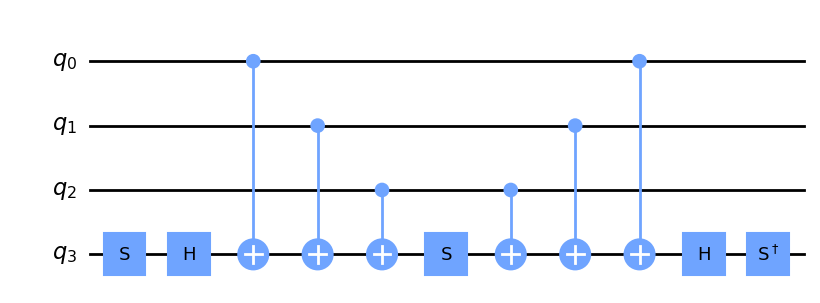}}     
    \caption{Elementary decomposition of the even embedding unitary $V$ 
    (equation~\ref{def:embeddingUnitary}) used to embed $\dgauss(3)$. }
    \label{fig:embedUnitary}
\end{figure}
The $n=3$ case is illustrated in Fig.~\ref{fig:embedUnitary} using the Qiskit 
library~\cite{wille2019ibm}. 




\end{document}